\documentclass[12pt,a4paper,twoside]{amsart} 
\usepackage{latexsym}
\usepackage{amsmath}
\usepackage{amssymb}
\usepackage{amsfonts}
\usepackage{ifthen}
\usepackage{mathrsfs}               
\usepackage{bbm}                    
\usepackage{bbold}                  
\usepackage{textcomp}
\usepackage{amsthm}
\usepackage{amstext}
\usepackage{enumerate}
\pagestyle{plain}
%
%
%
%
%
%
%
%
\voffset 0truecm  \topmargin 1.2truecm  \headheight 0pt  \headsep 0pt
\topskip 0pt  \textheight 20.5truecm \footskip 1.5cm
%
%
\hoffset 0truecm  \oddsidemargin 1.3truecm  \evensidemargin 0.45truecm 
\textwidth 14.1truecm
%
%
%
%
\newcommand{\CC}{\mathbb{C}}

\newcommand{\NN}{\mathbb{N}}

\newcommand{\RR}{\mathbb{R}}

\newcommand{\supp}{\mathrm{supp}}
\newcommand{\const}{\mathrm{const}}
\newcommand{\dist}{\mathrm{dist}}
\newcommand{\Ran}{\mathrm{Ran}}

\newcommand{\Ker}{\mathrm{Ker}} 

\newcommand{\sgn}{\mathrm{sgn}}
\newcommand{\loc}{\mathrm{loc}}

\newcommand{\rot}{\mathrm{curl}}
\newcommand{\Lip}{\mathrm{Lip}}
\newcommand{\id}{\mathbbm{1}}
\newcommand{\klg}{\leqslant} 
\newcommand{\grg}{\geqslant}          
\newcommand{\ve}{\varepsilon}
\newcommand{\vp}{\varphi}
\newcommand{\vk}{\varkappa}
\newcommand{\vr}{\varrho}
\newcommand{\vt}{\vartheta}

\newcommand{\wt}[1]{\widetilde{#1}}
    
\newcommand{\SPn}[2]{\langle \,#1\,|\,#2\, \rangle} 
\newcommand{\SPb}[2]{\big\langle \,#1\,\big|\,#2\, \big\rangle} 
\newcommand{\SPB}[2]{\Big\langle \,#1\,\Big|\,#2\, \Big\rangle}
\newcommand{\ol}[1]{\overline{#1}} 
\newcommand{\bigO}{\mathcal{O}}    
\newcommand{\DA}{D_{A}}             
\newcommand{\DAV}{D_{A,V}}
\newcommand{\D}[1]{D_{#1}}
\newcommand{\B}[1]{B_{#1}}          
\newcommand{\BAVC}{B_{A,V}}
\newcommand{\PO}{\Lambda^+_{0}}        
\newcommand{\PA}{\Lambda^+_{A}}
\newcommand{\PAF}{\Lambda^F_{A}}
\newcommand{\PAFm}{\Lambda^{-F}_{A}}
\newcommand{\PAV}{\Lambda^+_{A,V}}
\newcommand{\PAm}{\Lambda^-_{A}}
\newcommand{\POm}{\Lambda^-_{0}}
\newcommand{\RA}[1]{R_A(#1)}       

\newcommand{\RAV}[1]{R_{A,V}(#1)}
\newcommand{\R}[2]{R_{#1}(#2)}
\newcommand{\dom}{\mathcal{D}}  
\newcommand{\ball}[2]{\mathcal{B}_{#1}(#2)} 
\newcommand{\LO}{\mathscr{L}}      
\newcommand{\COI}{\mathscr{D}}     
\newcommand{\HR}{\mathscr{H}}      
\newcommand{\spec}{\mathrm{\sigma}}
\newcommand{\specac}{\mathrm{\sigma}_{\mathrm{ac}}}

\newcommand{\specess}{\mathrm{\sigma}_{\mathrm{ess}}}
\newcommand{\specd}{\mathrm{\sigma}_{\mathrm{d}}}

%
%
%
%
%

\newcommand{\cQ}{\mathcal{ Q}}

\newcommand{\cU}{\mathcal{ U}}

%
%
%
\newcommand{\sA}{\mathscr{A}}

\newtheorem{theorem}{Theorem}[section]
\newtheorem{lemma}[theorem]{Lemma}

\newtheorem{corollary}[theorem]{Corollary}

\newtheorem{example}[theorem]{Example}
\newtheorem{remark}[theorem]{Remark}
\newtheorem{hypothesis}{Hypothesis}
\title[Eigenfunctions
of no-pair operators]{ On the eigenfunctions
of no-pair operators
in classical magnetic fields}
\author{Oliver Matte and Edgardo Stockmeyer}
\address{
Oliver Matte\\
Mathematisches Institut\\
Ludwig-Maximilians-Universit\"at\\Theresienstra{\ss}e 39\\
D-80333 M\"unchen, Germany.}
\email{matte@math.lmu.de}
\address{
Edgardo Stockmeyer\\
Institut f\"ur Mathematik\\
Johannes Gutenberg-Uni- versit\"at\\
Staudingerweg 9\\
D-55099 Mainz, Germany. {\it On leave from:} Mathemati- sches Institut\\
Ludwig-Maximilians-Universit\"at\\Theresienstra{\ss}e 39\\
D-80333 M\"unchen, Germany.}
\email{stock@math.lmu.de}
\begin{document}

\begin{abstract}
\noindent
We consider a relativistic no-pair model of a
hydrogenic atom in a classical, exterior magnetic field. 
First, we prove that the corresponding Hamiltonian is
semi-bounded below, for
all coupling constants less
than or equal to the critical one known for
the Brown-Ravenhall model, i.e., for
vanishing magnetic fields.
We give conditions ensuring that its essential spectrum equals
$[1,\infty)$ and that there exist infinitely many
eigenvalues below $1$. (The rest energy of the electron is $1$
in our units.) Assuming that the magnetic vector potential is smooth
and that all its partial derivatives increase subexponentially,
we finally show that an eigenfunction corresponding
to an eigenvalue $\lambda<1$ is smooth away from the nucleus
and that its partial derivatives of any order
decay pointwise exponentially with any rate
$a<\sqrt{1-\lambda^2}$, for $\lambda\in[0,1)$, and $a<1$, for $\lambda<0$.

\smallskip

\noindent{\bf Keywords:} 
Dirac operator, Brown and Ravenhall, 
no-pair operator, 
exponential decay, regularity.
\end{abstract}

\maketitle


\section{Introduction}

\noindent
The aim of this article is to study the regularity and the
pointwise exponential decay
of eigenstates of relativistic 
hydrogenic atoms in exterior magnetic fields
which are described in the {\em free picture}. 
The latter model 
is obtained by restricting the usual Coulomb-Dirac operator
with magnetic vector potential, $A$,
to the positive spectral subspace
of the magnetic Dirac operator {\em without} electrostatic potential.
We shall call the
resulting operator the no-pair operator,
since it belongs to a more general class of models which can be derived
by a formal procedure in quantum electrodynamics that neglects
pair creation and annihilation processes \cite{Su1,Su2}.
If we set $A=0$, then the no-pair
operator considered here is also known as the (one-particle) Brown-Ravenhall 
or Bethe-Salpeter operator 
\cite{BeSa,BrRa}. 
(Numerous mathematical contributions to the Brown-Ravenhall model
are listed in the references to \cite{MaSt08a}.)
Although these models have their
main applications in the numerical
study of relativistic atoms with a large number of
electrons \cite{Su1,Su2}, they pose some new mathematical
problems already in the investigation of
hydrogenic atoms.
This is due to the fact that both the kinetic
and the potential part of the no-pair operator
are non-local.
There already exist results on the
$L^2$-exponential localization of bound states of (multi-particle)
Brown-Ravenhall operators.
All of them give, however, suboptimal bounds on the decay rate.
The first one has been derived in \cite{BaMa}
for a hydrogenic atom and
for coupling constants less than $1/2$.
It has been generalized in \cite{Mo} to many-electron atoms
and to all coupling constants below and including
the critical one of the Brown-Ravenhall model
determined in \cite{EPS}.
In \cite{MaSt08a} the present authors
study a no-pair model of a 
many-electron atom which is defined by means
of projections {\em including} the electrostatic potential
as well as perhaps a mean-field and a non-local exchange
potential. The main results of \cite{MaSt08a}
are an HVZ theorem, conditions for the existence of infinitely many
discrete eigenvalues, and
$L^2$-estimates on the exponential localization
of the corresponding eigenvectors.

Besides the passage to
pointwise exponential bounds on the
partial derivatives of eigenstates of the no-pair operator
(for a class of magnetic vector potentials whose partial
derivatives of any order are allowed to increase subexponentially),
the present article 
includes some further improvements, even in the case $A=0$.
First, we verify that the rate of exponential decay
of an eigenvector of the no-pair operator
corresponding to an eigenvalue $\lambda<1$
is not less than any 
\begin{equation}\label{def-triangle}
a\,<\,\triangle(\lambda)\,:=\,
\left
\{\begin{array}{ll}
\sqrt{1-\lambda^2},&\lambda\in[0,1),
\\
1,&\lambda<0\,.
\end{array}
\right.
\end{equation}
This is the same behaviour as it is known for the
Chandrasekhar operator \cite{Ca,CaMaSi,HePa}.
We remark that the Brown-Ravenhall operator is strictly
positive \cite{Tix1}. The lowest eigenvalue of the no-pair operator,
however, is expected to
tend to $-\infty$ as the strength of a constant exterior magnetic field
is increased; see \cite{J-A} for some numerical evidence.
Secondly, in order to find a distinguished self-adjoint realization of
the no-pair operator
we show that the corresponding quadratic
form is bounded from below, for all coupling constants
less than or equal to the critical one of the Brown-Ravenhall model. 
This has been known
before only in the case $A=0$ \cite{EPS} and all
we actually do is to reduce the problem to that special case.
(For smaller values of the coupling constant,
there exist, however, results on the stability of matter of the
second kind in the free picture, where a gauge fixed vector potential
is considered as a variable in the minimization.
In this situation the field energy is added to the multi-particle
Hamiltonian; see \cite{LSS} and \cite{LL} for quantized fields.
It is actually important to include the vector potential in the
projection determining the model for otherwise instability
occurs if at least two electrons are considered \cite{GrTix}.)
Finally, we state conditions ensuring that the essential
spectrum of the no-pair operator equals $[1,\infty)$
and that it has
infinitely many discrete eigenvalues below $1$.

As a byproduct of our analysis -- roughly speaking,
by ignoring the projections --
we find pointwise exponential decay
estimates with a rate $a<\sqrt{1-\lambda^2}$ 
for the eigenfunctions of magnetic Coulomb-Dirac operators
corresponding to an eigenvalue
$\lambda\in(-1,1)$. Although such bounds are essentially
well-known \cite{BeGe,HePa,Wa} it seems illustrative
to include them as a remark here.
For a general scheme to study the exponential decay
of solutions of an elliptic system of partial differential
equations we refer to \cite{RaRo1,RaRo2}.


\section{Definition of the model and main results}

\subsection{The no-pair operator}

If energies are measured in units of the rest energy of the electron
and lengths in units of one Compton
wave length divided by $2\pi$, then the free Dirac operator
is given as
$$
\D{0}\,:=\,-i\,\alpha\cdot\nabla+\beta
\,:=\,-i\sum_{j=1}^3\alpha_j\,\partial_{x_j}\:+\,\beta
\,.
$$
Here $\alpha=(\alpha_1,\alpha_2,\alpha_3)$
and $\beta=:\alpha_0$ are the usual $4\times4$
hermitian Dirac matrices. They are given
as $\alpha_i=\sigma_1\otimes\sigma_i$, $i=1,2,3$,
and $\beta=\sigma_3\otimes\id_2$,
where $\sigma_1,\sigma_2,\sigma_3$
denote the standard Pauli matrices, 
and satisfy the Clifford
algebra relations
\begin{equation}\label{Clifford}
\{\alpha_i\,,\,\alpha_j\}\,=\,2\,\delta_{ij}\,\id\,,
\qquad 0\klg i,j\klg3\,.
\end{equation}
$\D{0}$ is a self-adjoint
operator in the Hilbert space
$$
\HR\,:=\,L^2(\RR^3,\CC^4)
$$
with domain $H^1(\RR^3,\CC^4)$ and its purely absolutely
continuous spectrum equals
$
\spec(\D{0})=
\specac(\D{0})=(-\infty,-1]\cup[1,\infty)\,.
$
Moreover, it is well-known \cite{Ch}
that the free Dirac operator
with magnetic vector potential $A\in L^\infty_\loc(\RR^3,\RR^3)$,
\begin{equation}\label{def-DA}
\DA\,:=\,\D{0}\,+\,\alpha\cdot A
\end{equation}
is essentially self-adjoint on the domain
\begin{equation}\label{def-COI}
\COI\,:=\,C_0^\infty(\RR^3,\CC^4)\,.
\end{equation}
We denote its closure again by the symbol $\DA$.
Its spectrum is again contained
in the union of two half-lines \cite{Th},
$$
\spec(\DA)\,\subset\,(-\infty,-1]\cup[1,\infty)\,.
$$
In order to define the no-pair operator we introduce
the spectral projections
\begin{equation}\label{def-PA}
\PA\,:=\,E_{[0,\infty)}(\DA)\,=\,\frac{1}{2}\,\id\,+\,
\frac{1}{2}\,\sgn(\DA)\,,\qquad \PAm\,:=\,\id-\PA\,,
\end{equation}
and a (matrix-valued) potential, $V$, satisfying
the following hypothesis.

\begin{hypothesis}\label{hyp-Vogel}
$V\in L_\loc^\infty(\RR^3\setminus\{0\},\LO(\CC^4))$,
\begin{equation}\label{VCtozero}
V(x)=V(x)^*\,,\quad x\not=0\,,\qquad
V(x)\,\longrightarrow\,0\,,\quad |x|\to\infty\,.
\end{equation}
and
there exist
$\gamma\in(0,1)$
and $\rho>0$ such that
\begin{equation}\label{eq-Vogel1}
\|V(x)\|_{\LO(\CC^4)}\,\klg\,\frac{\gamma}{|x|}\,,\qquad
0<|x|<\rho\,.
\end{equation}
\end{hypothesis}

\bigskip

\noindent
The no-pair operator is an operator acting in the projected
Hilbert space
\begin{equation}
\HR^+_A\,:=\,\PA\,\HR\,,
\end{equation}
which on the dense subspace $\PA\,\COI$
is given as
\begin{equation}\label{def-BRO}
\B{A,V}\,\vp^+\,:=\,\DA\,\vp^+\,+\,\PA\,V\,\vp^+\,,
\qquad \vp^+\in\PA\COI\,.
\end{equation}
It is not completely obvious that $V\,\PA\,\psi$
is again square-integrable, for every $\psi\in \COI$.
This follows, however, from Lemma~\ref{VPAphi-in-L2}
below.
In order to define a distinguished
self-adjoint realization of $\B{A,V}$ we shall
assume that $V$ satisfies Hypothesis~\ref{hyp-Vogel} with 
$\gamma\klg\gamma_{\mathrm{c}}$, where
\begin{equation}\label{def-gammac}
\gamma_{\mathrm{c}}\,:=\,
\frac{2}{(\pi/2)+(2/\pi)}
\end{equation}
is the critical coupling constant of the Brown-Ravenhall model
determined in \cite{EPS}.
In the case of the atomic Coulomb potential,
$V(x)=-\frac{\gamma}{|x|}\,\id$,
the coupling constant is given by $\gamma=e^2\,Z$,
where $Z\in\NN$ and the square of the electric charge, $e^2$, is equal to
the Sommerfeld fine structure constant in our units,
$e^{-2}\approx137.037$. Since $e^2\,\gamma_{\mathrm{c}}\,\approx\,124.2$
the restriction on the strength of the singularities of $V$
imposed in \eqref{eq-Vogel1} with 
$\gamma<\gamma_{\mathrm{c}}$ or $\gamma\klg\gamma_{\mathrm{c}}$ 
allows for all
nuclear charges up to $Z\klg124$.
It is shown in \cite{EPS} that  
the quadratic form of $\B{0,-\gamma/|\cdot|}$
is bounded below on $\PO\COI$, for all $\gamma\in[0,\gamma_{\mathrm{c}}]$,
and unbounded below if $\gamma>\gamma_{\mathrm{c}}$. 
(Due to a result
of \cite{Tix1} one actually has the strictly positive
lower bound $1-\gamma$, for all
$\gamma\in[0,\gamma_{\mathrm{c}}]$.)
Combining this with
some new technical results on the spectral projections
$\PA$ and $\PO$ derived in Section~\ref{sec-spectral-proj},
we prove the following theorem in Section~\ref{sec-semi-bd}.

\begin{theorem}\label{thm-BRA-bb}
(i) Assume that $V$ fulfills Hypotheses~\ref{hyp-Vogel}
with $\gamma\in(0,\gamma_{\mathrm{c}})$ and
that $A\in L^\infty_\loc(\RR^3,\RR^3)$.
Then
\begin{equation}\label{BR-qf>-infty}
\inf\big\{\,\SPn{\vp^+}{\BAVC\,\vp^+}\::\;
\vp^+\in\PA\COI\,,\;\|\vp^+\|=1\,\big\}\,>\,-\infty\,.
\end{equation}
In particular, by the KLMN-theorem, $\BAVC$ has a
distinguished self-adjoint extension with
form domain 
$\cQ(\DA\!\!\upharpoonright_{\HR^+_A})=\PA\,\dom(|\DA|^{1/2})$.

\smallskip

\noindent
(ii) Assume that $V$ fulfills Hypothesis~\ref{hyp-Vogel}
with $\gamma\in(0,\gamma_{\mathrm{c}}]$ and
that $A\in L^\infty_\loc(\RR^3,\RR^3)$ is Lipschitz continuous in 
some neighbourhood of $0$. Then \eqref{BR-qf>-infty}
holds true also. In particular, $\BAVC$ has a
self-adjoint Friedrichs extension.
\end{theorem}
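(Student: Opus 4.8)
The plan is to reduce both parts to the case $A=0$, which is the main result of \cite{EPS}, refined by the strictly positive lower bound of \cite{Tix1}. Since $\DA\,\vp^+=|\DA|\,\vp^+$ for $\vp^+\in\PA\HR$ and $\SPn{\vp^+}{\PA V\vp^+}=\SPn{\vp^+}{V\vp^+}$, the quadratic form in \eqref{BR-qf>-infty} equals $\SPn{\vp^+}{|\DA|\vp^+}+\SPn{\vp^+}{V\vp^+}$, and both terms are finite on $\PA\COI$ by Lemma~\ref{VPAphi-in-L2}. I would first discard the regular part of $V$: fix a real cut-off $\theta\in C_0^\infty(\RR^3)$ with $\theta\equiv1$ on $\ball{\rho/2}{0}$, $\supp\theta\subset\ball{\rho}{0}$, and $0\klg\theta\klg1$. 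Then $(1-\theta^2)\,V$ is a bounded hermitian matrix field, so it contributes at least $-c\,\|\vp^+\|^2$, while Hypothesis~\ref{hyp-Vogel} together with $0\klg\theta^2\klg1$ gives $-\theta^2V\klg\frac{\gamma}{|x|}\,\id$ pointwise; hence
\begin{equation}\label{plan-Vred}
\SPn{\vp^+}{V\vp^+}\,\grg\,-\gamma\,\SPn{\vp^+}{|x|^{-1}\vp^+}\,-\,c\,\|\vp^+\|^2\,.
\end{equation}
As $\gamma\klg\gamma_{\mathrm c}$, the theorem is thereby reduced to the single estimate
\begin{equation}\label{plan-core}
\SPn{\vp^+}{|\DA|\,\vp^+}\,\grg\,\gamma_{\mathrm c}\,\SPn{\vp^+}{|x|^{-1}\vp^+}\,-\,C\,\|\vp^+\|^2\,,\qquad\vp^+\in\PA\COI\,,
\end{equation}
because then $\SPn{\vp^+}{\BAVC\vp^+}\grg(\gamma_{\mathrm c}-\gamma)\,\SPn{\vp^+}{|x|^{-1}\vp^+}-(C+c)\|\vp^+\|^2\grg-(C+c)\|\vp^+\|^2$.

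For \eqref{plan-core} I would exploit that the corresponding bound for $A=0$, namely $\SPn{\psi}{|\D{0}|\psi}\grg\gamma_{\mathrm c}\,\SPn{\psi}{|x|^{-1}\psi}+(1-\gamma_{\mathrm c})\|\psi\|^2$ for $\psi\in\PO\HR$, holds by \cite{EPS,Tix1}. If $A$ were a constant vector field, the unitary gauge transformation $U:=e^{-iA\cdot x}$ satisfies $U\D{0}U^*=\DA$, hence $U\PO U^*=\PA$, and $U$ commutes with $|x|^{-1}$ and maps $\COI$ onto $\COI$; so \eqref{plan-core} would be an immediate consequence. In general, the weight $\theta^2|x|^{-1}$ appearing in the reduced problem is supported in $\ball{\rho}{0}$, a set on which $A$ is bounded (part (i)) or Lipschitz continuous (part (ii)). The idea is to localize with $\theta^2+(1-\theta^2)=\id$ and, on the piece carried by $\theta$, to replace $\DA$ by $\D{0}$ and $\PA$ by $\PO$ modulo errors of order $\bigO(\|\vp^+\|^2)$ -- using the commutator bounds for $\PA$ and the comparison estimates relating $\PA$ and $\PO$ established in Section~\ref{sec-spectral-proj} -- and then to invoke the $A=0$ inequality above; on the complementary piece, where $(1-\theta^2)|x|^{-1}$ is bounded, one uses only $|\DA|\grg\id$.

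The main obstacle is precisely the non-locality of the three operators $|\DA|$, $\PA$, $\PO$: the multiplication operator $\theta$ commutes with none of them, and $\PA$ depends on the vector potential globally even though the singular part of the problem sits near the origin. Establishing, uniformly in $\vp^+$, that the localization errors and the replacement $(\DA,\PA)\rightsquigarrow(\D{0},\PO)$ on $\ball{\rho}{0}$ really cost only $\bigO(\|\vp^+\|^2)$ -- with constants that may depend on $\|A\|_{L^\infty(\ball{\rho}{0})}$ but must be finite, and without ever splitting $\DA=\D{0}+\alpha\cdot A$ into pieces that need not be individually square-integrable -- is the crux, and this is what the spectral-projection estimates of Section~\ref{sec-spectral-proj} are designed to deliver. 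In part (i) the strict inequality $\gamma<\gamma_{\mathrm c}$ gives a gap that leaves room for crude such estimates, and $A\in L^\infty_{\loc}$ suffices. In part (ii), with $\gamma=\gamma_{\mathrm c}$, \cite{EPS} offers no slack, so the constant in \eqref{plan-core} must remain exactly $\gamma_{\mathrm c}$; this is where the Lipschitz regularity of $A$ near $0$ enters, providing a sharp enough comparison of $\PA$ and $\PO$ on $\ball{\rho}{0}$, with the positive bound $1-\gamma_{\mathrm c}$ of \cite{Tix1} absorbing the residual errors.
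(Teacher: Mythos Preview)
Your plan is correct and follows essentially the same route as the paper: localize near the origin, replace $(\DA,\PA)$ by $(\D{0},\PO)$ there using the comparison estimates of Section~\ref{sec-spectral-proj} (specifically Lemma~\ref{le-vgl-PA-PO}, with \eqref{rudi0} for $\gamma<\gamma_{\mathrm c}$ and the sharper \eqref{rudi2a} for $\gamma=\gamma_{\mathrm c}$), and invoke the $A=0$ bound of \cite{EPS,Tix1}. The only technical refinement the paper adds is to localize the kinetic energy via the exact IMS identity $\DA=\mu_1\DA\mu_1+\mu_2\DA\mu_2$ with a smooth partition $\mu_1^2+\mu_2^2=1$ (rather than your $\theta^2+(1-\theta^2)$, which does not directly split $\DA$), and to write $\POm\mu_1\vp^+=(\POm\mu_1-\mu_1\PAm)\vp^+$ so that Lemma~\ref{le-vgl-PA-PO} applies verbatim.
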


\bigskip

\noindent
The self-adjoint extension of $\BAVC$ given by
Theorem~\ref{thm-BRA-bb} is again denoted by the same symbol.
We then have the following result.

\begin{theorem}\label{main-thm-L2}
Assume that $V$ fulfills Hypothesis~\ref{hyp-Vogel}
with $\gamma\in(0,\gamma_{\mathrm{c}}]$ and that 
$A\in L_\loc^\infty(\RR^3,\RR^3)$
and let $\triangle:(-\infty,1)\to(0,1]$ be given by~\eqref{def-triangle}.
If $\gamma=\gamma_{\mathrm{c}}$, assume additionally that $A$ is locally
Lipschitz continuous.
Then 
$$
\specess(\BAVC)\,\subset\,[1,\infty)\,,
$$
and for every eigenvector, $\phi_\lambda$, of $\BAVC$
corresponding to an eigenvalue $\lambda<1$ and every
$a<\triangle(\lambda)$,
$$
\big\|\,e^{a|\cdot|}\,\phi_\lambda\,\big\|_\HR\,<\,\infty\,.
$$  
\end{theorem}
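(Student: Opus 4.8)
\emph{Essential spectrum.} That $\specess(\BAVC)\subset[1,\infty)$ follows by comparison with $D_A^{+}:=\DA\!\!\upharpoonright_{\HR^+_A}$, whose spectrum lies in $[1,\infty)$: a geometric (HVZ‑type) localization argument, of the kind used in \cite{MaSt08a} but now for a single electron, shows that $\BAVC$ has no essential spectrum below $1$ — the interaction $V$ decays at infinity and its sub‑critical singularity at the origin produces no essential spectrum there either, so any Weyl sequence for $\BAVC$ at an energy $<1$ can be localized away from both the nucleus and infinity and is then, asymptotically, a Weyl sequence for $D_A^{+}$, which is impossible. The one technical point is the commutation of spatial cut‑offs through the non‑local projection $\PA$ with controlled error, which is supplied by the estimates of Section~\ref{sec-spectral-proj}.

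\emph{Exponential decay: set‑up.} For $\lambda<1$ put $R_A^{+}(\lambda):=(D_A^{+}-\lambda)^{-1}$, bounded on $\HR^+_A$ with $\|R_A^{+}(\lambda)\|\klg(1-\lambda)^{-1}$, and extending to a bounded map from the dual of $\formd(D_A^{+})=\PA\,\dom(|\DA|^{1/2})$ onto $\formd(D_A^{+})$. Given an eigenvalue $\lambda<1$, a corresponding eigenvector $\phi:=\phi_\lambda$ and a rate $a<\triangle(\lambda)$, I would run an exponential boost argument. Fix $R>\rho$ with $\sup_{|x|\grg R}\|V(x)\|_{\LO(\CC^4)}$ as small as needed below, and a family $(F_n)_{n\in\NN}\subset C^\infty(\RR^3,[0,\infty))$ with $F_n\equiv0$ on $\{|x|\klg R\}$, $|\nabla F_n|\klg a$ everywhere, each $F_n$ bounded, and $F_n\uparrow a\,(|\cdot|-R)_+$ pointwise (possible since the target function is Lipschitz with constant $a$). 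Because $F_n$ is bounded, $\phi_n:=e^{F_n}\phi\in\HR$, and by monotone convergence it suffices to prove $\sup_n\|\phi_n\|_\HR<\infty$; this gives $e^{a|\cdot|}\phi\in\HR$. For a generic member $F$ of the family set $\phi_F:=e^F\phi$ and introduce the boosted objects $D_A^{F}:=\DA+i\,\alpha\cdot\nabla F\;(=e^F\DA e^{-F})$, $\PAF:=e^F\PA e^{-F}$ and $\RAF(\lambda):=e^F R_A^{+}(\lambda)e^{-F}$. From $\phi=\PA\phi$ and the eigenvalue equation, which in form sense reads $(\DA-\lambda)\phi=-\PA V\phi$, one has $\phi=-R_A^{+}(\lambda)\PA V\phi$; conjugating with $e^F$ and using that the multiplication operator $V$ commutes with $e^{\pm F}$, so that $e^F\PA V\phi=\PAF V\phi_F$, one arrives at the self‑consistent identity
\begin{equation}\label{plan-sce}
\phi_F\;=\;-\,\RAF(\lambda)\,\PAF\,V\,\phi_F\,.
\end{equation}

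\emph{Exponential decay: the key estimates.} Everything rests on controlling the operators in \eqref{plan-sce} uniformly over the family $(F_n)$. By Section~\ref{sec-spectral-proj}, $\PAF$ is bounded on $\HR$ with a bound depending only on $a$ (it is the Riesz projection of $D_A^{F}$ for the part of its spectrum near $[1,\infty)$, which is separated from the remainder because $\spec(D_A^{F})$ stays within distance $a<1$ of $\spec(\DA)$). For $\RAF(\lambda)$ I would use the identity, a consequence of the Clifford relations \eqref{Clifford},
\begin{equation}\label{plan-square}
\big(D_A^{F}\big)^{2}\;=\;\DA^{2}-|\nabla F|^{2}+\Delta F+2i\,\nabla F\cdot(-i\nabla+A)\,,
\end{equation}
together with the elementary but decisive cancellation: for $\psi\in\dom(|\DA|^{1/2})$ one has, after an integration by parts, $\mathrm{Re}\,\SPn{\psi}{2i\,\nabla F\cdot(-i\nabla)\,\psi}=-\SPn{\psi}{\Delta F\,\psi}$, which cancels the term $\SPn{\psi}{\Delta F\,\psi}$, while $\SPn{\psi}{2i\,\nabla F\cdot A\,\psi}$ is purely imaginary, whence
\begin{equation}\label{plan-real}
\mathrm{Re}\,\SPn{\psi}{\big(D_A^{F}\big)^{2}\psi}\;=\;\SPn{\psi}{(\DA^{2}-|\nabla F|^{2})\,\psi}\;\grg\;(1-a^{2})\,\|\psi\|^{2}\,.
\end{equation}
Note that neither $\Delta F$ nor the magnetic potential $A$ survives on the right of \eqref{plan-real} — this is exactly what lets the bare hypothesis $A\in L^\infty_{\loc}$ suffice here. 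If $\lambda\in[0,1)$ and $a<\sqrt{1-\lambda^{2}}=\triangle(\lambda)$, set $\mu:=1-\lambda^{2}-a^{2}>0$; then \eqref{plan-real} and a numerical‑range argument (applied also to the adjoint) show that $\big(D_A^{F}\big)^{2}-\lambda^{2}$ is boundedly invertible on $\HR$ with norm $\klg\mu^{-1}$, and a second use of \eqref{plan-real}, applied to $\eta:=((D_A^{F})^{2}-\lambda^{2})^{-1}\psi$ together with the arithmetic identity $\mu+\lambda^{2}+a^{2}=1$, bounds $\|\DA\eta\|$ and hence $\|D_A^{F}\eta\|$ by a multiple of $\|\psi\|$; consequently
$$
\RAF(\lambda)\;=\;\big(D_A^{F}+\lambda\big)\,\big((D_A^{F})^{2}-\lambda^{2}\big)^{-1}\,\PAF
$$
is bounded on $\HR$ by a constant depending only on $a$ and $\lambda$. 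For $\lambda<0$ one must instead estimate $R_A^{+}(\lambda)=(|\DA|-\lambda)^{-1}\!\!\upharpoonright_{\Ran\PA}$ directly rather than through its square — squaring would yield only the smaller rate $\sqrt{1-\lambda^{2}}$, and is impossible for $\lambda\klg-1$ — using a Combes--Thomas estimate for $|\DA|$ (for instance through the semigroup $(e^{-t|\DA|})_{t>0}$), which produces uniform boundedness of $\RAF(\lambda)\PAF$ for every $a<1=\triangle(\lambda)$.

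\emph{Exponential decay: absorbing the potential, and the main obstacle.} Split $V=\chi V\chi+(V-\chi V\chi)$ with $\chi\in C_0^\infty(\RR^3,[0,1])$ equal to $1$ near the origin and supported in $\{|x|\klg R\}$, where $e^F\equiv1$. The summand $V-\chi V\chi$ is bounded with small $L^\infty$‑norm, so the uniform bound on $\RAF(\lambda)\PAF$ makes the corresponding term in \eqref{plan-sce} bounded by $\tfrac12\|\phi_F\|$ and it is absorbed on the left. The singular summand $\chi V\chi$ is controlled using its form‑boundedness relative to $D_A^{+}$ on $\Ran\PA$, with the optimal constant governed by $\gamma/\gamma_{\mathrm c}$ of \cite{EPS}, transported to the boosted setting through the estimates of Section~\ref{sec-spectral-proj} on $\PAF$ and, in the borderline case $\gamma=\gamma_{\mathrm c}$, on the difference $\PA-\PO$ — which is where the local Lipschitz continuity of $A$ is used. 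Together these bounds close the estimate, yielding $\sup_n\|\phi_n\|_\HR<\infty$, i.e.\ $e^{a|\cdot|}\phi\in\HR$ for every $a<\triangle(\lambda)$, as asserted. The step I expect to be the main obstacle is precisely this interplay of non‑locality and singularity: carrying the Combes--Thomas analysis of the non‑self‑adjoint operator $D_A^{F}$ on the range of the non‑orthogonal projection $\PAF$ through to a genuinely closed self‑consistent bound for \eqref{plan-sce}, and transferring the sharp form bound of \cite{EPS} to this boosted, non‑local framework — including the borderline coupling and the separate argument for $\lambda<0$ — which is exactly what the new results of Section~\ref{sec-spectral-proj} are designed to make possible.
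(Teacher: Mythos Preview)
Your approach differs substantially from the paper's, and for $\lambda<0$ it has a genuine gap.

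The paper does not argue via a self-consistent Agmon equation $\phi_F=-\RAF{\lambda}\,\PAF\,V\,\phi_F$. Instead it proves exponential decay of the \emph{entire spectral projection} $E_I(\B{A,V})$ for compact $I\subset(-1,1)$ by the Helffer--Sj\"ostrand formula, comparing $\B{A,V}$ with the operator $\wt{D}_{A,V_R}=\DA+\PA\,V_R\,\PA$, where $V_R=\chi_R V$ vanishes near the origin and has small $L^\infty$-norm, so that $\omega(\wt{D}_{A,V_R})\,\PA=0$ for $\supp\omega\subset(-1,1)$. The Coulomb singularity is thereby bypassed completely: only commutators of smooth cut-offs with $\PA$ enter, and those are precisely what Section~\ref{sec-spectral-proj} controls. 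The paper also reverses your order of argument: the essential-spectrum bound is \emph{deduced from} the exponential localization, via the compactness criterion of Theorem~\ref{thm-lower-bd} (adapted from \cite{Gr}), not established first by a Weyl-sequence/HVZ argument.

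For $\lambda\in[0,1)$ your squaring route is essentially Lemma~\ref{le-marah} in disguise and is plausible, though closing the self-consistency with the singular part of $V$ is left imprecise. The real problem is $\lambda<0$. You appeal to ``a Combes--Thomas estimate for $|\DA|$ (for instance through the semigroup)'', but no such estimate is supplied by Section~\ref{sec-spectral-proj}, and for merely $A\in L^\infty_\loc$ there is no standard exponential-boost bound on $e^F(|\DA|-\lambda)^{-1}e^{-F}$ or on the kernel of $e^{-t|\DA|}$; the square-root structure makes this far from routine. Even for $\lambda\in(-1,0)$ your squaring argument yields only the rate $\sqrt{1-\lambda^2}<1$, not the claimed $\triangle(\lambda)=1$. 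The paper's device (Theorem~\ref{thm-L2-exp-klg0}) is quite different: having already shown that the spectrum in $(-1,1)$ is discrete, it picks $e_0\in(-1,0)\cap\vr(\B{A,V})$ and represents $E_{(-\infty,e_0]}$ through the Cauchy principal-value formula \eqref{Cauchy-PV} for $\sgn(\wt{D}_{A,V}-e_0)$, an integral of resolvents along the line $e_0+i\RR$, where Lemma~\ref{le-marah} applies directly because $e_0\in(-1,1)$.
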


\begin{proof}
Theorem~\ref{main-thm-L2} is a consequence of
Theorems~\ref{thm-L2-exp}, \ref{thm-L2-exp-klg0}, and~\ref{thm-lower-bd}.
\end{proof}

\bigskip

\noindent
In order to derive pointwise decay estimates for all
partial derivatives of eigenfunctions
we introduce further assumptions
on $A$ and $V$.

\begin{hypothesis}\label{hyp-AV-smooth}
$A\in C^\infty(\RR^3,\RR^3)$
and, for all $\ve>0$ and $\beta\in\NN_0^3$,
there is some $K(\ve,\beta)\in(0,\infty)$
such that
\begin{eqnarray}
|\partial_x^\beta
A(x)|&\klg&K(\ve,\beta)\,e^{\ve|x|}\,,
\qquad x\in\RR^3\,.\label{exp-est-A-infty}
\end{eqnarray}
$V\in C^\infty(\RR^3\setminus\{0\},\LO(\CC^4))$
fulfills \eqref{VCtozero}\&\eqref{eq-Vogel1}
and, for all $r>0$ and $\beta\in\NN_0^3$, there is 
some $C(r,\beta)\in(0,\infty)$ such that
\begin{equation}\label{hyp-V-Cinfty}
\sup_{|x|\grg r}\big\|\,\partial_x^\beta V(x)\,\big\|_{\LO(\CC^4)}
\,\klg\,C(r,\beta)\,.
\end{equation}
\end{hypothesis}

\bigskip

\noindent
We remark that our $L^2$-exponential bounds on eigenfunctions of $\B{A,V}$
are completely
independent from the behaviour of $A\in L_\loc^\infty(\RR^3,\RR^3)$ 
away from the nucleus; see Theorems~\ref{thm-L2-exp}\&\ref{thm-L2-exp-klg0}.
It seems, however, natural to introduce the condition \eqref{exp-est-A-infty}
to infer the pointwise bounds of Theorem~\ref{main-thm} below
by means of an induction argument starting from Theorem~\ref{main-thm-L2}.
In fact,
since we always consider decay rates which are strictly less than
$\triangle(\lambda)$ we can borrow a bit of the exponential decay
of the eigenfunction $\phi_\lambda$ to control terms containing
a vector potential satisfying \eqref{exp-est-A-infty}.
 
\begin{theorem}\label{main-thm}
Assume that $A$ and $V$ fulfill Hypothesis~\ref{hyp-AV-smooth}
with $\gamma\in(0,\gamma_{\mathrm{c}}]$.
Let $\phi_\lambda$ be an eigenvector of $\B{A,V}$
corresponding to an eigenvalue 
$\lambda<1$ and $\triangle$ be the function defined in \eqref{def-triangle}.
Then $\phi_\lambda\in C^\infty(\RR^3\setminus\{0\},\CC^4)$
and,
for all $a<\triangle(\lambda)$ and 
$\beta\in\NN_0^3$,
we find some $C(\lambda,a,\beta)\in(0,\infty)$
such that
$$
\forall\;x\in\RR^3\,,\;|x|\grg1\::\quad
|\partial^\beta_x\phi_\lambda(x)|\,\klg\,C(\lambda,a,\beta)\,e^{-a|x|}\,.
$$
\end{theorem}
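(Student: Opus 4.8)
\medskip
\noindent
The plan is to bootstrap the $L^2$--exponential bound of Theorem~\ref{main-thm-L2} into pointwise bounds on all derivatives, using interior elliptic regularity for $\DA$ on balls far from the origin, the non-local source term being controlled by means of the exponentially conjugated spectral projection. So fix $a<\triangle(\lambda)$, pick $a'\in(a,\triangle(\lambda))$, and set $F:=a'\,\langle\cdot\rangle\in C^\infty(\RR^3)$, so that $|\nabla F|\klg a'<1$ while $\partial^\beta F$ is bounded for every $\beta$ with $|\beta|\grg1$. Conjugating the eigenvalue equation $\DA\phi_\lambda+\PA V\phi_\lambda=\lambda\,\phi_\lambda$ with $e^F$ gives
\begin{equation}\label{eq-conj-sketch}
(\DA^F-\lambda)\,\psi\,=\,-\,\PAF\,(e^F V\phi_\lambda)\,,\qquad\psi:=e^F\phi_\lambda\,,
\end{equation}
where $\DA^F:=e^F\DA\,e^{-F}=\DA-i\,\alpha\cdot\nabla F$ and $\PAF:=e^F\PA\,e^{-F}$. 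Since $\DA^F$ differs from $\DA$ by a bounded zeroth order term it is still uniformly elliptic; by Lemma~\ref{VPAphi-in-L2} and Theorem~\ref{main-thm-L2} we have $e^F V\phi_\lambda\in\HR$; and by the results of Section~\ref{sec-spectral-proj} the operator $\PAF$ is bounded on $\HR$ because $|\nabla F|<1$. Hence the right-hand side of \eqref{eq-conj-sketch} lies in $\HR$, so that $\psi\in\dom(\DA^F)\subset H^1_\loc(\RR^3,\CC^4)$, and likewise $\phi_\lambda\in H^1_\loc$.

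\smallskip
\noindent
\emph{Smoothness away from the nucleus.} Here the weight is not needed: apply \eqref{eq-conj-sketch} with $F\equiv0$ and bootstrap on each shell $\{|x|>r\}$, $r>0$. Write $V\phi_\lambda=\chi_r\,V\phi_\lambda+(1-\chi_r)\,V\phi_\lambda$ with $\chi_r\in C_0^\infty(\RR^3)$, $\chi_r\equiv1$ on $\{|x|\klg r/2\}$, $\supp\chi_r\subset\{|x|<r\}$. On $\supp(1-\chi_r)$ the potential $V$ is smooth with bounded derivatives by \eqref{hyp-V-Cinfty}, so $(1-\chi_r)V\phi_\lambda$ has the same local Sobolev regularity as $\phi_\lambda$; and $\PA$, which since $A\in C^\infty$ preserves $H^s_\loc$ (see Section~\ref{sec-spectral-proj}), maps it to a function of the same regularity. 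The piece $\chi_r V\phi_\lambda\in\HR$ is supported near the origin; on $\{|x|>r\}$, disjoint from its support, $\PA$ acts through its off-diagonal Schwartz kernel, which is smooth there, so $\PA(\chi_r V\phi_\lambda)\in C^\infty(\{|x|>r\})$. Thus the right-hand side of the eigenvalue equation gains one Sobolev order relative to $\phi_\lambda$ on each shell, and iterating together with elliptic regularity for $\DA$ yields $\phi_\lambda\in H^s_\loc(\RR^3\setminus\{0\})$ for every $s$, i.e. $\phi_\lambda\in C^\infty(\RR^3\setminus\{0\},\CC^4)$ by the Sobolev embedding.

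\smallskip
\noindent
\emph{Pointwise decay.} Now rerun this iteration quantitatively. One first shows, using the exponential decay of the integral kernel of $\sgn(\DA)$ with any rate $<1$ -- which holds since $\spec(\DA)\subset(-\infty,-1]\cup[1,\infty)$ -- together with the subexponential bounds \eqref{exp-est-A-infty}, that $e^{F}\,\PA V\phi_\lambda\in H^s(\{|x|>r\})$ for all $s,r>0$ (here finitely many subexponential losses are absorbed by inserting intermediate rates between $a'$ and $\triangle(\lambda)$). Consequently, for $|x_0|\grg2$ one has $\langle x\rangle\grg|x_0|-2$ on $\ball{2}{x_0}$ and hence
\begin{equation}\label{eq-ball-decay}
\big\|\PA V\phi_\lambda\big\|_{H^{|\beta|+1}(\ball{2}{x_0})}+\big\|\phi_\lambda\big\|_{L^2(\ball{2}{x_0})}\,\klg\,C(\lambda,a',\beta)\,e^{-a'(|x_0|-2)}\,.
\end{equation}
Interior elliptic estimates for the smooth-coefficient operator $\DA-\lambda$ applied $|\beta|+2$ times on $\ball{1}{x_0}\Subset\ball{2}{x_0}$, combined with the embedding $H^{|\beta|+2}(\ball{1}{x_0})\hookrightarrow C^{|\beta|}(\ol{\ball{1}{x_0}})$, give
\begin{equation}\label{eq-interior-est}
\sup_{\ball{1}{x_0}}\big|\partial^\beta\phi_\lambda\big|\,\klg\,C(x_0)\,\Big(\big\|\PA V\phi_\lambda\big\|_{H^{|\beta|+1}(\ball{2}{x_0})}+\big\|\phi_\lambda\big\|_{L^2(\ball{2}{x_0})}\Big)\,,
\end{equation}
where the only $x_0$-dependence of $C(x_0)$ is through the $C^{|\beta|+1}$-norms of $A$ on $\ball{2}{x_0}$, so that $C(x_0)\klg C(\ve,\beta)\,e^{\ve|x_0|}$ for every $\ve>0$ by \eqref{exp-est-A-infty}. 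Choosing $\ve<a'-a$ and combining \eqref{eq-ball-decay} with \eqref{eq-interior-est} gives $\sup_{\ball{1}{x_0}}|\partial^\beta\phi_\lambda|\klg C(\lambda,a,\beta)\,e^{-a|x_0|}$ for all $|x_0|\grg2$. Covering $\{|x|\grg1\}$ by such balls $\ball{1}{x_0}$, $|x_0|\grg2$, yields the assertion.

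\smallskip
\noindent
\emph{Main obstacle.} The difficulty is the non-locality of $\PA$: one must establish that $\PAF$ is bounded on $\HR$ and maps the relevant weighted Sobolev spaces to themselves with constants growing at most subexponentially in the spatial variable, and that its far-off-diagonal part is smoothing with exponentially decaying kernel bounds. This is done through the representation $\sgn(\DA)=\pi^{-1}\int_{\RR}(\DA^2+t^2)^{-1}\DA\,dt$ and Combes--Thomas type estimates for the conjugated resolvents $e^{F}(\DA^2+t^2)^{-1}e^{-F}$, which stay bounded precisely because $|\nabla F|<1\klg\sqrt{1+t^2}$; the margin $a'-a>0$ then absorbs the subexponential growth of $A$ and of all its derivatives permitted by \eqref{exp-est-A-infty}. (Several of these facts about $\PA$ and $\PAF$ are exactly what is prepared in Section~\ref{sec-spectral-proj}.)
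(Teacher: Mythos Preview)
Your overall strategy---bootstrap the $L^2$-bound of Theorem~\ref{main-thm-L2} into weighted $H^k$-bounds via the eigenvalue equation, then apply Sobolev embedding---is exactly the paper's plan; the paper packages this as Theorem~\ref{thm-exp-Hk}. But your implementation diverges from the paper's and leans on claims that are not established. First, the assertion that ``$\PA$, which since $A\in C^\infty$ preserves $H^s_\loc$ (see Section~\ref{sec-spectral-proj})'' is not supported: Section~\ref{sec-spectral-proj} gives only $H^1_\loc$-type information (Lemma~\ref{VPAphi-in-L2}) and commutator bounds involving $|\DA|^{1/2}$ or $|\D{0}|^{1/2}$; nothing there shows $\PA$ preserves higher $H^s_\loc$ when $A$ is unbounded. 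Second, your pointwise-decay step rests on the bound~\eqref{eq-ball-decay}, i.e.\ on knowing $\|\PA V\phi_\lambda\|_{H^{|\beta|+1}(\ball{2}{x_0})}\lesssim e^{-a'|x_0|}$, which you justify by appeal to exponential off-diagonal kernel decay of $\sgn(\DA)$ and Combes--Thomas for $(\DA^2+t^2)^{-1}$. Neither ingredient is proved in the paper, and with $A$ allowed to grow subexponentially both require genuine work (in particular, controlling $H^s$-mapping properties of $e^F(\DA^2+t^2)^{-1}e^{-F}$ uniformly in~$t$ involves commutators with derivatives of $A$ of all orders).

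The paper avoids these issues by never asking for general mapping properties of~$\PA$. Instead it runs a global induction on $k$ for $\|\chi_{r,R}\,e^{af}\phi_\lambda\|_{H^k}$ (uniformly in $R$), and in the inductive step decomposes the non-local term as in \eqref{fedi3a}--\eqref{fedi4}: the piece involving $\PO$ is handled by the smoothing of $\R{0}{0}^k$, the piece $\chi\,(\PAF-\PO)\,e^{-\ve f}$ is controlled in $\LO(H^k)$ via the Neumann-type expansion of Lemma~\ref{expo-xia} (which compares the conjugated magnetic resolvent $\R{\sA_0}{iy}$ to the free one $\R{0}{iy}$ term by term, each factor carrying its own damping $e^{-\ve f/(N+1)}$ to absorb the growth of~$A$), and the far-off-diagonal piece is treated by Corollary~\ref{cor-fedi1}. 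That expansion is the key technical device; your Combes--Thomas sketch for $\DA^2$ is a different route that might be made to work, but as written it is a gap rather than a proof.
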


\begin{proof}
The statement follows from Theorem~\ref{thm-exp-Hk} and the
Sobolev embedding theorem.
\end{proof}


\subsection{The Dirac operator}

\noindent
As a remark we state $L^2$- and pointwise exponential decay
estimates for the Dirac operator although they are consequences of
the $L^2$-estimates in \cite{BeGe} and the ellipitic regularity
of $\D{0}$. In Section~\ref{sec-exp-decay} we present a proof
of the $L^2$-exponential localization of spectral projections
of the Dirac operator since our argument
-- a new variant of one given in \cite{BFS1} which easily
extends to the no-pair operator -- is particularly simple
in this case.

We assume that  $V$ fulfills Hypothesis~\ref{hyp-Vogel}
and that $A\in L_\loc^\infty(\RR^3,\RR^3)$ in what follows.
Then it is well-known
(and explained in more detail in \cite[Proposition~4.3]{RTdA})
that the results of \cite{BdMP,Ch,Ne1} 
ensure the existence 
of a distinguished self-adjoint extension, $\DAV$, of the 
Dirac operator defined by
$$
\DAV\,\vp\,:=\,(\D{0}+\alpha\cdot A+V)\,\vp\,,\qquad \vp\in\COI\,.
$$ 
This extension is uniquely determined by the conditions
\begin{enumerate}
\item[(i)] 
$\dom(\D{A,V})\subset 
H_\loc^{1/2}(\RR^3,\CC^4)$.
\item[(ii)] For all 
$\psi\in H^{1/2}(\RR^3,\CC^4)$ having compact support and all 
$\phi\in\dom(\D{A,V})$,
$$
\SPn{\psi}{\D{A,V}\,\phi}
\,=\,
\SPb{|\D{0}|^{1/2}\,\psi}{\sgn(\D{0})\,|\D{0}|^{1/2}\,\phi}\,+\,
\SPb{|X|^{1/2}\,\psi}{U\,
|X|^{1/2}\,\phi}
\,,
$$
where $U\,|X|$ is the polar decomposition
of $X:=\alpha\cdot A+V$.
\end{enumerate}
Standard arguments show that 
$\DAV$ has the local compactness property and since $V$ drops off
to zero at infinity this in turn implies that
\begin{equation}\label{specess-DAV}
\specess(\DAV)\,=\,\specess(\DA)\,\subset\,
(-\infty,-1]\cup[1,\infty)\,.
\end{equation}

\begin{theorem}\label{main-thm-DAV}
Assume that $A\in L_\loc^\infty(\RR^3,\RR^3)$
and that $V$ fulfills Hypothesis~\ref{hyp-Vogel}.
Let $\phi_\lambda$ be a normalized 
eigenvector of $\DAV$ corresponding to an
eigenvalue $\lambda\in(-1,1)$. Then, for all
$a\in(0,\sqrt{1-\lambda^2})$, there is some 
$A$-independent constant $C(\lambda,a)\in(0,\infty)$
such that
$$
\big\|\,e^{a|\cdot|}\,\phi_\lambda\,\big\|\,\klg\,C(\lambda,a)\,.
$$
Assume additionally
that $A$ and $V$ fulfill Hypothesis~\ref{hyp-AV-smooth}.
Then we have
$\phi_\lambda\in C^\infty(\RR^3\setminus\{0\},\CC^4)$
and, for all
$a\in(0,\sqrt{1-\lambda^2})$ and $\beta\in\NN_0^3$,
there is some $C(\lambda,a,\beta)\in (0,\infty)$ such that
\begin{equation}\label{exp-bd-Dirac}
\forall\;x\in\RR^3\,,\;|x|\grg1\::\qquad
\big|\,\partial_x^\beta\phi_\lambda(x)\,\big|
\,\klg\,C(\lambda,a,\beta)\,e^{-a|x|}\,.
\end{equation}
\end{theorem}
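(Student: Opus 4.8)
The plan is to prove Theorem~\ref{main-thm-DAV} in two stages, reducing it to the analogous statement for the no-pair operator wherever convenient and otherwise adapting the standard Agmon-type and elliptic-regularity machinery to the Dirac setting. As noted in the excerpt, the $L^2$-exponential bound is essentially the result of \cite{BeGe}; the task is to organize the argument so that the constant is manifestly independent of $A$ and so that the pointwise bounds follow by bootstrapping.

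\smallskip

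\noindent\textbf{Step 1: $L^2$-exponential decay.} First I would establish $\|e^{a|\cdot|}\phi_\lambda\|<\infty$ with an $A$-independent bound for every $a\in(0,\sqrt{1-\lambda^2})$. The mechanism is a Combes--Thomas/IMS-type estimate exploiting the gap structure of $\spec(\DA)$: since $\DAV\phi_\lambda=\lambda\phi_\lambda$ with $\lambda\in(-1,1)$ and $\DAV-V=\DA$ with $\dist(\lambda,\spec(\DA))\geq 1-|\lambda|$, one conjugates by $e^{aF}$ with $F$ a smoothed, bounded version of $|x|$ (so $|\nabla F|\leq a$) and uses that the commutator $[\DA,e^{aF}]=e^{aF}(-i\alpha\cdot\nabla F)$ has norm $\leq a$ pointwise, while $\alpha\cdot\nabla F$ anticommutes in a controlled way with $\alpha\cdot p+A$. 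The key algebraic input is $(\alpha\cdot v)^2=|v|^2$ for real $v$, so that $(e^{aF}\DA e^{-aF})^*(e^{aF}\DA e^{-aF})=\DA^2-(\alpha\cdot\nabla F)^2+\text{(cross terms)}$ and $\DA^2\geq 1$ by the Lichnerowicz-type identity $\DA^2=(\sigma\cdot(p+A))^2\oplus(\sigma\cdot(p+A))^2+\id$ minus magnetic-field terms — but actually one only needs $\DA^2\geq 1$, which holds since $\spec(\DA)\subset(-\infty,-1]\cup[1,\infty)$. This yields that $e^{aF}\DAV e^{-aF}-\lambda$ is invertible for $a^2<1-\lambda^2$ with a bound depending only on $a,\lambda$, and the usual analyticity argument (differentiate in the parameter multiplying $F$, or a direct a priori estimate on $e^{aF_R}\phi_\lambda$ with $F_R$ cut off at radius $R$ and $R\to\infty$) gives $\|e^{a|\cdot|}\phi_\lambda\|\leq C(\lambda,a)$. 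The point that the bound is $A$-independent is automatic because $A$ never enters the relevant commutator norm or the spectral gap.

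\smallskip

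\noindent\textbf{Step 2: From $L^2$-decay to pointwise decay via elliptic regularity.} Assuming now Hypothesis~\ref{hyp-AV-smooth}, I would bootstrap. Away from $0$, $\DAV$ is a first-order elliptic system with smooth coefficients, so local elliptic regularity (difference quotients, or the fact that $\D0$ is elliptic and $\alpha\cdot A+V$ is smooth on $\RR^3\setminus\{0\}$) gives $\phi_\lambda\in C^\infty(\RR^3\setminus\{0\},\CC^4)$; this is the same argument that proves the $C^\infty$ part of Theorem~\ref{main-thm}. For the quantitative decay of derivatives, fix $a<a'<\sqrt{1-\lambda^2}$ and work on unit balls $B(y,1)$ with $|y|\geq 2$. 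On such a ball $\phi_\lambda$ solves $(\D0+\alpha\cdot A+V)\phi_\lambda=\lambda\phi_\lambda$, i.e. $\D0\phi_\lambda=(\lambda-\alpha\cdot A-V)\phi_\lambda$; iterating $\D0^2=-\Delta+\id$ converts this into a second-order elliptic equation, and the interior elliptic estimate gives, for any $k$,
$$
\|\phi_\lambda\|_{H^{k+1}(B(y,1/2))}\,\leq\,C_k\big(\|\phi_\lambda\|_{H^k(B(y,1))}+\text{lower order, with coefficients }\partial^\beta A,\partial^\beta V\big)\,.
$$
The coefficients $\partial^\beta A$ on $B(y,1)$ are bounded by $K(\ve,\beta)e^{\ve(|y|+1)}$ from \eqref{exp-est-A-infty}, and the $\partial^\beta V$ are bounded by $C(1,\beta)$ from \eqref{hyp-V-Cinfty}. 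Choosing $\ve$ small (so that the lost factor $e^{\ve|y|}$ is absorbed by the gap between $a$ and $a'$), an induction on $k$ shows $\|\phi_\lambda\|_{H^k(B(y,1/2))}\leq C(\lambda,a',k)e^{-a'|y|}\|e^{a'|\cdot|}\phi_\lambda\|_{L^2}$. By the Sobolev embedding $H^k(B(y,1/2))\hookrightarrow C^{k-2}$ with a translation-invariant constant, one gets $|\partial^\beta\phi_\lambda(y)|\leq C(\lambda,a',\beta)e^{-a'|y|}$, and since $e^{-a'|y|}\leq e^{-a|y|}$ this is \eqref{exp-bd-Dirac}.

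\smallskip

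\noindent\textbf{Main obstacle.} The technically delicate point is managing the exponentially growing coefficients $\partial^\beta A$ in the induction of Step~2: each differentiation of the equation trades a derivative of $\phi_\lambda$ for a factor that can grow like $e^{\ve|y|}$, so one must be careful that the accumulated loss over $k$ steps does not exceed the fixed budget $a'-a$. This is handled by the remark already made in the excerpt — one always keeps a strictly positive reserve of exponential decay — but it requires the induction hypothesis to be stated with the decay rate $a'$ (not $a$) and the $\ve$ in \eqref{exp-est-A-infty} chosen depending on $k$ and on $a'-a$. Everything else (the Combes--Thomas estimate, interior elliptic regularity, Sobolev embedding) is standard; the only genuinely Dirac-specific ingredient is the Clifford identity $(\alpha\cdot v)^2=|v|^2\,\id$ that makes the conjugated operator controllable, and the fact that the spectral gap of $\DA$ around $(-1,1)$ is $A$-independent, which is what delivers the $A$-independence of $C(\lambda,a)$.
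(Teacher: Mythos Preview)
Your overall two-step plan (Combes--Thomas for the $L^2$-bound, then elliptic bootstrap plus Sobolev for the pointwise bound) matches the paper's structure, and Step~2 is essentially what the paper does: it too invokes elliptic regularity of $\D{0}$, Sobolev embedding, and the device of writing $\alpha\cdot A\,\phi_\lambda=(e^{-\ve f}\alpha\cdot A)(e^{\ve f}\phi_\lambda)$ to absorb the subexponential growth of $\partial^\beta A$ into a slightly stronger decay rate (this is exactly the trick at line~\eqref{fedi1}).

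There is, however, a genuine gap in Step~1. You assert that $e^{aF}\DAV e^{-aF}-\lambda$ is invertible for $a^2<1-\lambda^2$. But since your $F$ is bounded, $e^{\pm aF}$ are bounded with bounded inverse, so $e^{aF}\DAV e^{-aF}$ is similar to $\DAV$ and has the \emph{same} spectrum; in particular $\lambda$ remains an eigenvalue and the conjugated operator is \emph{not} invertible. The Clifford algebra and the bound $\DA^2\grg1$ that you sketch do show that $e^{aF}\DA e^{-aF}-\lambda$ is invertible with an $A$-independent norm bound --- this is precisely Lemma~\ref{le-marah} --- but that statement is for $\DA$, not for $\DAV$. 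To pass from the former to an estimate on $\phi_\lambda$ you must first localize to the region where $V$ is small and control the commutator error. The paper's route (Theorem~\ref{thm-L2-exp-Dirac}) is to introduce the comparison operator $\D{A,V_R}:=\DA+\chi_R V$, observe that it has no spectrum near $\lambda$ once $R$ is large, write $\chi_{2R}E_I(\DAV)=\big(\chi_{2R}\,\omega(\DAV)-\omega(\D{A,V_R})\,\chi_{2R}\big)E_I(\DAV)$ via Helffer--Sj\"ostrand, and reduce via the resolvent identity to the conjugated resolvent of $\D{A,V_R}$, which \emph{is} uniformly invertible by Lemma~\ref{le-marah} and a Neumann series since $\|V_R\|\to0$. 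The direct Agmon localization you allude to would also work, but it has to be set up as $(\DAV-\lambda)\chi_R\phi_\lambda=-i\alpha\cdot\nabla\chi_R\,\phi_\lambda$, then conjugate and invert $\D{A,V_R}+i\alpha\cdot\nabla F-\lambda$ on the left; the invertibility of the full conjugated $\DAV-\lambda$ is never available, for any bounded $F$ and any $a$.
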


\begin{proof}
The assertions follow from Theorem~\ref{thm-L2-exp-Dirac}
and standard arguments using the elliptic regularity
of $\D{0}$ and the Sobolev embedding theorem. (Terms containing 
derivatives of the vector potential $A$
are dealt with as in \eqref{fedi1}.)
\end{proof}


\subsection{Examples}

To complete the picture we state some conditions on $A$ and $V$
which ensure the existence of infinitely many eigenvalues
of $\BAVC$ (resp. $\DAV$) below $1$ (resp. in $(-1,1)$)
and which imply that the essential spectrum covers the whole
half-line $[1,\infty)$ (resp. $(-\infty,-1]\cup[1,\infty)$).
The properties of $A$ which are explicitly used
in the proofs are stated in the following hypothesis, where
$$
\ball{R}{y}\,:=\,\big\{\,x\in\RR^3\::\;|x-y|<R\,\big\}\,,\qquad y\in\RR^3\,,
\;R>0\,.
$$

\begin{hypothesis}\label{hyp-A-ex}
(i) $A\in C^\infty(\RR^3,\RR^3)$
and, for every $\lambda\grg1$,
there exist radii, $1\klg R_1<R_2<\ldots$, $R_n\nearrow\infty$,
and normalized spinors
$\psi_1(\lambda),\psi_2(\lambda),\ldots\in\COI$
such that 
\begin{equation}\label{Weyl-DA}
\supp(\psi_n(\lambda))\subset\RR^3\setminus\ball{R_n}{0}\,,
\quad\lim_{n\to\infty}
(\DA-\lambda)\,\psi_n(\lambda)\,=\,0\,.
\end{equation}
(ii) The assumptions of Part~(i) are fulfilled
and the Weyl sequence $\{\psi_n(1)\}_{n\in\NN}$
has the following additional properties:
Its elements have vanishing lower spinor components,
$\psi_n(1)=(\psi_{n,1}(1),\psi_{n,2}(1),0,0)^\top$, $n\in\NN$,
there is some $\delta\in(0,1)$ such that
\begin{equation}\label{supp-psin-ex}
\supp(\psi_n(1))\,\subset\,\big\{\,R_n<|x|<(1+\delta)\,R_n\,\big\}\,,
\qquad 2R_n\,\klg\,R_{n+1}\,,
\end{equation}
for all $n\in\NN$, and 
\begin{equation}\label{Weyl-DA-ex}
\big\|\,\big(\DA-1\big)\,\psi_n(1)\,\big\|\,=\,\bigO(1/R_n)\,,
\qquad n\to\infty\,.
\end{equation}
\end{hypothesis}

\bigskip

\noindent
Obviously, the vectors $\psi_n(\lambda)$ in \eqref{Weyl-DA}
form a Weyl sequence for $\DAV$ and it is easy to see
that their projections onto $\HR^+_A$ define a Weyl sequence for 
$\B{A,V}$. Under Hypothesis~\ref{hyp-A-ex}(ii) the vectors
$\psi_n(1)$ can be used as test functions in a minimax principle
to prove the existence of infinitely many bound states.

To give some explicit conditions
we recall a result from \cite{HNW} 
which provides a large class
of examples where
Hypotheses~\ref{hyp-A-ex}(i)\&(ii) are fulfilled

\begin{example}[\cite{HNW}]\label{ex1-HNW}
{\em
(i)
Suppose that $A\in C^\infty(\RR^3,\RR^3)$, $B=\rot \,A$,
and set, for $x\in\RR^3$ and $\nu\in\NN$,
\begin{equation*}\label{def-epsnu}
\epsilon_0(x)\,:=\,|B(x)|\,,\qquad
\epsilon_\nu(x)\,:=\,
\frac{\sum_{|\alpha|=\nu} |\partial^\alpha B(x)|}{
1+\sum_{|\alpha|<\nu} |\partial^\alpha B(x)|}\,.
\end{equation*}
Suppose further that
there exist $\nu\in\NN_0$, $z_1,z_2,\ldots\in\RR^3$, and 
$\rho_1,\rho_2,\ldots>0$ such that $\rho_n\nearrow\infty$,
the balls $\ball{\rho_n}{z_n}$, $n\in\NN$, are mutually
disjoint and
\begin{equation*}\label{hyp-Weyl-eq2}
\sup\big\{\,\epsilon_\nu(x)\,\big|\:x\in\ball{\rho_n}{z_n}\,\big\}
\,\longrightarrow\,0\,,\qquad n\to\infty\,.
\end{equation*}
Then $A$ fulfills Hypothesis~\ref{hyp-A-ex}(i).
This follows directly from the constructions presented in \cite{HNW}.

\smallskip

\noindent
(ii) Suppose additionally that there
is some $C\in(0,\infty)$ such that
$\rho_n<|z_n|\klg C\,\rho_n$, for all $n\in\NN$,
and that either
$$
\sup\big\{\,|B(x)|\,:\;x\in\ball{\rho_n}{z_n}\,\big\}\,\klg\,
C/|z_n|^2\,,\qquad n\in\NN\,,
$$
or
\begin{equation*}\label{hyp-HNW-ex}
\forall\;n\in\NN\::\;\;|B(z_n)|\grg1/C\quad
\textrm{and}\quad
\sup\big\{\,\epsilon_\nu(x)\,\big|\:x\in\ball{\rho_n}{z_n}\,\big\}
\,=\,o(\rho_n^{-\nu})\,.
\end{equation*}
Then $A$ fulfills Hypothesis~\ref{hyp-A-ex}(ii).
This follows by inspecting and adapting the relevant proofs
in \cite{HNW}. Since this procedure is straight-forward but
a little bit lengthy we refrain from explaining any detail here.
\hfill$\Box$
}
\end{example}

\begin{theorem}\label{thm-ex-no-pair}
Assume that $V$ fulfills Hypothesis~\ref{hyp-Vogel}
with $\gamma\in(0,\gamma_{\mathrm{c}}]$.
If $A$ fulfills Hypothesis~\ref{hyp-A-ex}(i) then
$\specess(\B{A,V})=[1,\infty)$.
If $A$ fulfills Hypothesis~\ref{hyp-A-ex}(ii) and if
\begin{equation}\label{V=O(1/R)}
\exists\;\wt{\gamma}>0\quad\forall\;x\in\RR^3\setminus\{0\}\::
\qquad \max_{v\in\CC^4:\,|v|=1}\SPn{v}{V(x)\,v}\,\klg\,
-\wt{\gamma}\,\min\{1,|x|^{-1}\}\,,
\end{equation} 
then $\B{A,V}$
has infinitely many eigenvalues below and
accumulating at $1$.
\end{theorem}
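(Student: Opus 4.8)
The plan is to prove the two assertions separately, in both cases starting from the Weyl sequences $\{\psi_n(\lambda)\}_{n}\subset\COI$ furnished by Hypothesis~\ref{hyp-A-ex} and pushing them into $\HR^+_A$ by means of $\PA$. For the first assertion, Theorem~\ref{main-thm-L2} already yields $\specess(\B{A,V})\subset[1,\infty)$ (it applies since the $C^\infty$ vector potential of Hypothesis~\ref{hyp-A-ex}(i) is in particular locally Lipschitz), so only $[1,\infty)\subset\specess(\B{A,V})$ remains. Fix $\lambda\grg1$ and put $\vp_n^+:=\PA\psi_n(\lambda)\in\PA\COI\subset\dom(\B{A,V})$. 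Since $\PAm$ commutes with $\DA$ and $\spec(\DA\!\!\upharpoonright_{\PAm\HR})\subset(-\infty,-1]$, one has $(\lambda+1)\|\PAm\psi_n(\lambda)\|\klg\|(\DA-\lambda)\PAm\psi_n(\lambda)\|\klg\|(\DA-\lambda)\psi_n(\lambda)\|\to0$; hence $\|\vp_n^+\|\to1$, and since $\psi_n(\lambda)\rightharpoonup0$ (the supports escape to infinity) also $\vp_n^+\rightharpoonup0$. By Lemma~\ref{VPAphi-in-L2} we may write $(\B{A,V}-\lambda)\vp_n^+=\PA(\DA-\lambda)\psi_n(\lambda)+\PA V\PA\psi_n(\lambda)$; the first term tends to $0$ in norm, and for the second one splits $\RR^3$ into the region $\{|x|\grg r\}$, on which $\|V(x)\|$ is uniformly small for large $r$, and $\{|x|<r\}$, on which $V$ is controlled by $\gamma/|x|$, while the smeared tail $\id_{\{|x|<r\}}\PA\psi_n(\lambda)=\id_{\{|x|<r\}}\PA\id_{\{|x|>R_n\}}\psi_n(\lambda)$ becomes small in $L^2$ — and, against the weight $|x|^{-1}$, in a Kato-type norm — as $R_n\to\infty$, by the off-diagonal decay estimates for $\PA$ from Section~\ref{sec-spectral-proj}. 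Thus $(\B{A,V}-\lambda)\vp_n^+\to0$, and Weyl's criterion gives $\lambda\in\specess(\B{A,V})$.

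For the second assertion we invoke the min-max principle ($\B{A,V}$ is bounded below by Theorem~\ref{thm-BRA-bb}) with the test vectors $\vp_n^+:=\PA\psi_n(1)$, where $\{\psi_n(1)\}_n$ is the sequence of Hypothesis~\ref{hyp-A-ex}(ii): it is orthonormal, since the supports lie in the pairwise disjoint annuli $\{R_n<|x|<(1+\delta)R_n\}$ (disjoint because $(1+\delta)R_n<2R_n\klg R_{n+1}$), has vanishing lower spinor components, and obeys $\|(\DA-1)\psi_n(1)\|=\bigO(1/R_n)$. The algebraic key is that for a spinor $\psi=(\psi_1,\psi_2,0,0)^\top$ the off-diagonal block structure of the $\alpha_j$ and the diagonal form of $\beta$ force $\langle\psi\,|\,\DA\,\psi\rangle=\|\psi\|^2$ \emph{exactly}, so that $\langle\psi_n(1)\,|\,(\DA-1)\,\psi_n(1)\rangle=0$ and there is no first-order kinetic error. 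Combined with $\|\PAm\psi_n(1)\|\klg\tfrac12\|(\DA-1)\psi_n(1)\|=\bigO(1/R_n)$ (spectral gap), this gives $\langle\vp_n^+\,|\,\DA\,\vp_n^+\rangle-\|\vp_n^+\|^2=-\langle\PAm\psi_n(1)\,|\,\DA\,\PAm\psi_n(1)\rangle+\|\PAm\psi_n(1)\|^2=\bigO(1/R_n^2)$, whereas \eqref{V=O(1/R)} together with $1\klg R_n<|x|<(1+\delta)R_n$ on $\supp\psi_n(1)$ gives $\langle\psi_n(1)\,|\,V\,\psi_n(1)\rangle\klg-\wt{\gamma}/((1+\delta)R_n)$, the $\PAm$-corrections being $o(1/R_n)$ in view of $\|\PAm\psi_n(1)\|=\bigO(1/R_n)$. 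Hence $\langle\vp_n^+\,|\,\B{A,V}\,\vp_n^+\rangle<\|\vp_n^+\|^2$ for all large $n$.

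To turn this into eigenvalues one must control the form on whole $N$-dimensional subspaces, not just on each line $\CC\vp_n^+$. Take $\cM_N:=\mathrm{span}\{\vp_n^+:N_0\klg n<N_0+N\}$ with $N_0$ large but \emph{independent of} $N$, and look at the Hermitian matrix $M_{nm}:=\langle\vp_n^+\,|\,\vp_m^+\rangle-\langle\vp_n^+\,|\,\B{A,V}\,\vp_m^+\rangle$. Its diagonal entries satisfy $M_{nn}\grg c/R_n>0$ by the previous paragraph. For $n\neq m$, writing $\vp_k^+=\psi_k(1)-\PAm\psi_k(1)$, the ``leading'' contributions $\langle\psi_n(1)|\psi_m(1)\rangle$, $\langle\psi_n(1)|\DA\psi_m(1)\rangle$ and $\langle\psi_n(1)|V\psi_m(1)\rangle$ all vanish, because $\DA$ and $V$ are local and the supports are disjoint; the remaining terms each carry a factor $\PAm\psi_k(1)$, which, being the $\PAm$-image of a spinor supported in the annulus around $R_k$, is exponentially concentrated near radius $R_k$ (by the exponential off-diagonal decay of $\PA$ and $\PO$ proved in Section~\ref{sec-spectral-proj}) and hence exponentially small both on the disjoint supports of the other $\psi_n(1)$ and near the origin; with $2R_n\klg R_{n+1}$ this makes $|M_{nm}|$ decay super-polynomially in $\min\{R_n,R_m\}$. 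A routine diagonal-dominance estimate with the weights $R_n$ then shows $M>0$ on $\cM_N$, i.e.\ $\langle\vp\,|\,\B{A,V}\,\vp\rangle<\|\vp\|^2$ for all $0\neq\vp\in\cM_N$. Since $\inf\specess(\B{A,V})=1$ by the first assertion (Hypothesis~\ref{hyp-A-ex}(ii) includes (i)), the min-max principle makes the $N$-th min-max value of $\B{A,V}$ an eigenvalue strictly below $1$ for every $N$; as eigenvalues below the essential spectrum have finite multiplicity, $\B{A,V}$ then has infinitely many of them, necessarily accumulating at $1$.

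The main obstacle throughout is the nonlocality of $\PA$. In the first assertion it appears through $\PA V\PA\psi_n(\lambda)$, whose dangerous part sits at the Coulomb singularity although $\psi_n(\lambda)$ is supported far away; in the second it appears through the off-diagonal ``interference'' entries $\langle\vp_n^+|\B{A,V}\vp_m^+\rangle$, which would vanish by disjointness of supports were it not for $\PA$. Both are handled only via the quantitative (exponential off-diagonal) localization estimates for $\PA$ and $\PO$ from Section~\ref{sec-spectral-proj}, together with the Kato-type inequality controlling $|x|^{-1}$ by $|\DA|^{1/2}$. A secondary subtlety is that the test functions provide Rayleigh quotients of the form $1-\bigO(1/R_n)$, approaching $1$ rather than staying uniformly below it, so a naive singular-sequence argument does not produce eigenvalues and one must use the subspace version above — which is precisely why the \emph{exact} identity $\langle\psi_n(1)\,|\,\DA\,\psi_n(1)\rangle=\|\psi_n(1)\|^2$, hence the requirement of vanishing lower spinor components in Hypothesis~\ref{hyp-A-ex}(ii), is indispensable.
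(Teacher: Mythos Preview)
Your proposal is correct and follows essentially the same route as the paper: Theorem~\ref{main-thm-L2} for $\specess\subset[1,\infty)$, projected Weyl sequences $\PA\psi_n(\lambda)$ for the reverse inclusion (the paper's Lemma~\ref{thm-upper-bd}), and a min-max argument with the test vectors $\PA\psi_n(1)$ exploiting the vanishing lower spinor components to make the kinetic error $\bigO(1/R_n^2)$ while the potential contributes $-c/R_n$, with super-polynomial control of the cross terms via exponential weights on $\PA$ (the paper's Theorem~\ref{thm-eigenvalues} and Lemma~\ref{le-cross-terms}). Two small technical points where the paper is cleaner: (i) for $\PA V\PA\psi_n(\lambda)\to0$ the paper uses the specific estimate~\eqref{Vpeki} rather than a ``Kato-type inequality controlling $|x|^{-1}$ by $|\DA|^{1/2}$'' (no such diamagnetic bound is available for Dirac operators); (ii) for the diagonal potential term the paper uses negativity of $V$ to replace it by the bounded $\id_{\{|x|\grg1\}}V$, which avoids having to estimate $\langle\PAm\psi_n|V\PAm\psi_n\rangle$ directly --- though your implicit use of $V\klg0$ to drop that term works just as well.
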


\begin{proof}
Theorem~\ref{thm-ex-no-pair} follows from Theorem~\ref{main-thm-L2},
Lemma~\ref{thm-upper-bd},
and Theorem~\ref{thm-eigenvalues}.
\end{proof}

\begin{theorem}\label{thm-ex-Dirac}
Assume that $V$ fulfills Hypothesis~\ref{hyp-Vogel}.
If $A$ fulfills Hypothesis~\ref{hyp-A-ex}(i),
then $\specess(\DAV)=(-\infty,-1]\cup[1,\infty)$.
If $A$ fulfills Hypothesis~\ref{hyp-A-ex}(ii) and $V$ satisfies
\eqref{V=O(1/R)}
then there exist infinitely many eigenvalues of $\DAV$ in $(-1,1)$.
\end{theorem}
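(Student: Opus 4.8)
\emph{Plan; the essential spectrum.} I would treat the two assertions separately, both being consequences of properties of the purely magnetic operator $\DA$ provided by Hypothesis~\ref{hyp-A-ex}. By \eqref{specess-DAV} we have $\specess(\DAV)=\specess(\DA)\subset(-\infty,-1]\cup[1,\infty)$, so only the reverse inclusion for $\DA$ is at issue. For each $\lambda\grg1$, Hypothesis~\ref{hyp-A-ex}(i) supplies normalized $\psi_n(\lambda)\in\COI$ with $\supp\psi_n(\lambda)\subset\RR^3\setminus\ball{R_n}{0}$, $R_n\nearrow\infty$, and $(\DA-\lambda)\psi_n(\lambda)\to0$; the support condition forces $\psi_n(\lambda)\rightharpoonup0$, so $\{\psi_n(\lambda)\}_n$ is a singular Weyl sequence and $\lambda\in\specess(\DA)$, whence $[1,\infty)\subset\specess(\DA)$. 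For the lower half-line I would use the constant unitary $\mathcal{J}$ which in the block form $\beta=\bigl(\begin{smallmatrix}\id_2&0\\0&-\id_2\end{smallmatrix}\bigr)$, $\alpha_i=\bigl(\begin{smallmatrix}0&\sigma_i\\\sigma_i&0\end{smallmatrix}\bigr)$ reads $\mathcal{J}=\bigl(\begin{smallmatrix}0&\id_2\\-\id_2&0\end{smallmatrix}\bigr)$: a direct check gives $\mathcal{J}\alpha_i\mathcal{J}^{-1}=-\alpha_i$ ($i=1,2,3$), $\mathcal{J}\beta\mathcal{J}^{-1}=-\beta$, hence $\mathcal{J}\DA\mathcal{J}^{-1}=-\DA$. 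Therefore $\mathcal{J}\psi_n(\lambda)$ is a singular Weyl sequence for $\DA$ at $-\lambda$, so $(-\infty,-1]\subset\specess(\DA)$, and equality with $(-\infty,-1]\cup[1,\infty)$ follows together with \eqref{specess-DAV}.

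\emph{Reduction to a semibounded operator.} Since $\DAV$ is not bounded below, for the second assertion I would pass to $\DAV^{2}\grg0$. From \eqref{specess-DAV} and $\specess(\DAV)\subset(-\infty,-1]\cup[1,\infty)$ one gets $\specess(\DAV^{2})\subset[1,\infty)$ (a singular Weyl sequence for $\DAV^{2}$ at $s\grg0$ splits, via the spectral projections of $\DAV$, into one for $\DAV$ at $\sqrt{s}$ or at $-\sqrt{s}$, forcing $s\grg1$). The functional calculus gives $E_{[0,1)}(\DAV^{2})=E_{(-1,1)}(\DAV)$, so $\DAV$ has infinitely many eigenvalues in $(-1,1)$ if and only if $\DAV^{2}$ has infinitely many eigenvalues in $[0,1)$; and for the semibounded operator $\DAV^{2}$, whose essential spectrum starts at or above $1$, the latter holds once the min--max levels $\mu_{k}:=\inf\{\sup_{0\ne\psi\in M}\|\DAV\psi\|^{2}/\|\psi\|^{2}:\ M\subset\dom(\DAV),\ \dim M=k\}$ satisfy $\mu_{k}<1$ for all $k\in\NN$.

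\emph{The trial functions.} For fixed $k$ I would take $M=M_{k}:=\mathrm{span}\{\psi_{n_{1}}(1),\dots,\psi_{n_{k}}(1)\}$ with $n_{1}<\dots<n_{k}$ large, where $\psi_{n}(1)\in\COI$ are the spinors from Hypothesis~\ref{hyp-A-ex}(ii); they lie in $\dom(\DAV)$ since they vanish near the nucleus. Because $\DAV$ is a first order differential operator plus a matrix multiplication it does not enlarge supports, and by \eqref{supp-psin-ex} the $\psi_{n}(1)$ have pairwise disjoint supports; hence both $\{\psi_{n_{j}}(1)\}$ and $\{\DAV\psi_{n_{j}}(1)\}$ are orthogonal, so $\sup_{0\ne\psi\in M_{k}}\|\DAV\psi\|^{2}/\|\psi\|^{2}=\max_{j\klg k}\|\DAV\psi_{n_{j}}(1)\|^{2}$, and it suffices to prove $\|\DAV\psi_{n}(1)\|^{2}<1$ for all large $n$. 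Writing $\DAV\psi_{n}(1)=(\DA-1)\psi_{n}(1)+(\id+V)\psi_{n}(1)$ and expanding the square, the cross term $\SPn{(\DA-1)\psi_{n}(1)}{\psi_{n}(1)}$ vanishes because $\psi_{n}(1)$ has vanishing lower spinor components, while $\SPn{(\DA-1)\psi_{n}(1)}{V\psi_{n}(1)}$ is $\bigO(1/R_{n})$ times $\sup_{x\in\supp\psi_{n}(1)}\|V(x)\|=o(1)$ by \eqref{VCtozero}\&\eqref{Weyl-DA-ex}. For $n$ large $\|V(x)\|<1$ on $\supp\psi_{n}(1)$, so there every eigenvalue of $\id+V(x)$ is positive and, by \eqref{V=O(1/R)} and \eqref{supp-psin-ex}, at most $1-\wt{\gamma}\bigl((1+\delta)R_{n}\bigr)^{-1}$; hence $\|(\id+V)\psi_{n}(1)\|^{2}\klg\bigl(1-\wt{\gamma}((1+\delta)R_{n})^{-1}\bigr)^{2}$. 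Combined with $\|(\DA-1)\psi_{n}(1)\|^{2}=\bigO(1/R_{n}^{2})$ from \eqref{Weyl-DA-ex}, this yields $\|\DAV\psi_{n}(1)\|^{2}\klg1-\wt{\gamma}((1+\delta)R_{n})^{-1}+o(1/R_{n})<1$ for $n$ large. This is the Dirac analogue of the argument behind Theorem~\ref{thm-ex-no-pair}.

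\emph{Main obstacle.} The essential-spectrum statement is routine once the reflection $\mathcal{J}\DA\mathcal{J}^{-1}=-\DA$ is noticed. The genuine difficulty in the second part is that $\DAV$ is not semibounded, so the ordinary min--max cannot be applied to $\DAV$ directly; squaring removes this, but the trial-function estimate then hinges delicately on the fine structure encoded in Hypothesis~\ref{hyp-A-ex}(ii): the thin annular supports make the kinetic error $\|(\DA-1)\psi_{n}(1)\|^{2}$ of size $\bigO(1/R_{n}^{2})$ whereas \eqref{V=O(1/R)} only supplies a gain of exact order $1/R_{n}$, and the vanishing lower spinor components are precisely what annihilates the otherwise $\bigO(1/R_{n})$ kinetic cross term, so that the gain survives.
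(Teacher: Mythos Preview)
Your argument is correct. The paper's own proof is essentially a two-line sketch: the first assertion is declared ``clear'' from \eqref{specess-DAV} and the Weyl sequences of Hypothesis~\ref{hyp-A-ex}(i), and the second is delegated entirely to \cite[Theorem~2.9]{MaSt08a}. Your proof is therefore not a reproduction but a self-contained substitute, and it takes a somewhat different route for the eigenvalue part.

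For the essential spectrum you make explicit the symmetry $\mathcal{J}\DA\mathcal{J}^{-1}=-\DA$ that the paper leaves implicit; this is the standard way to cover $(-\infty,-1]$ and your matrix check is correct.

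For the second assertion the paper (via \cite{MaSt08a}) works with a variational principle adapted to the spectral gap of $\DAV$, using the quadratic form $\SPn{\psi}{(\DAV-1)\psi}$ on suitable trial functions---the direct analogue of the proof of Theorem~\ref{thm-eigenvalues} for $\B{A,V}$, where semiboundedness is not an issue. You instead square the operator and apply the ordinary min--max to $\DAV^{2}$. This is more elementary, and it succeeds here because $\DAV$ is \emph{local}: the disjointly supported trial spinors $\psi_{n}(1)$ remain orthogonal after applying $\DAV$, so the Rayleigh quotient on their span decouples into the individual ratios $\|\DAV\psi_{n}(1)\|^{2}$. The gap-variational approach of \cite{MaSt08a} does not need this locality (which is why it also covers the non-local no-pair setting), but in return requires a more sophisticated min--max machinery. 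Your observation that the vanishing lower spinor components annihilate the $\bigO(1/R_{n})$ cross term $\SPn{(\DA-1)\psi_{n}(1)}{\psi_{n}(1)}$ is exactly the mechanism behind \eqref{august1} in the paper's no-pair proof, transplanted to the squared functional.
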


\begin{proof}
In view of \eqref{specess-DAV} and
since $V$ drops off to zero at infinity
the first statement is clear. The second assertion
is a special case of
\cite[Theorem~2.9]{MaSt08a}.
\end{proof}


\section{Miscellaneous results on spectral projections}
\label{sec-spectral-proj}

\noindent
In order to obtain any of our results on the no-pair operator
it is crucial from a technical point of view to have some
control on commutators of $\PA$ with multiplication operators
and on the difference between $\PA$ and $\PO$.
Appropriate estimates are derived in this section. They
are based on the formula \eqref{def-PA} and
the representation of the
sign function of a self-adjoint operator, $T$, acting in some
Hilbert space, $\mathscr{K}$, with $0\in\vr(T)$
as a strongly
convergent Cauchy principal value, 
\begin{equation}\label{Cauchy-PV}
\sgn(T)\,\psi\,=\,
T\,|T|^{-1}\,\psi
\,=\,
\lim_{\tau\to\infty}\int_{-\tau}^\tau(T-iy)^{-1}\,\psi\,\frac{dy}{\pi}\,,
\qquad \psi\in\mathscr{K}\,.
\end{equation}
We write
\begin{equation}\label{def-RAV}
\RAV{z}\,:=\,(\DAV-z)^{-1}\,,\qquad
\RA{z}\,:=\,(\DA-z)^{-1}\,,
\end{equation}
in what follows.
Then another frequently used identity is
\begin{eqnarray}
\lefteqn{
\label{res-id}
\R{\wt{A},\wt{V}}{z}\,\mu\,-\,\mu\,\RAV{z}
}
\\
&=&\nonumber
\R{\wt{A},\wt{V}}{z}\,\big(i\alpha\cdot\nabla\mu-\mu\,(V-\wt{V})
+\mu\,\alpha\cdot(A-\wt{A})\big)\,\RAV{z}\,,
\end{eqnarray}
where $z\in\vr(\DAV)\cap\vr(\D{\wt{A},\wt{V}})$.
Here we assume that
$V$ and $\wt{V}$ fulfill Hypothesis~\ref{hyp-Vogel},
such that $\wt{V}$ and $\mu\,(V-\wt{V})$ are bounded, matrix-valued
multiplication operators,
and that $A$, $\wt{A}$, and $\mu$ satisfy
\begin{equation}\label{hyp-AwtAmu}
\left\{
\begin{array}{l}
A,\wt{A}\in L_\loc^\infty(\RR^3,\RR^3)\,,\quad \mu\in C^\infty(\RR^3,\RR)\,,
\\
\|\mu\|_\infty+\|\nabla\mu\|_\infty+\|\mu\,(A-\wt{A})\|_\infty\,<\,\infty\,.
\end{array}
\right.
\end{equation}
(Using the essential self-adjointness of
$\D{\wt{A},\wt{V}}\!\!\upharpoonright_{\COI}$, 
it is actually simpler and sufficient
to derive the adjoint of \eqref{res-id}.)
A combination of \eqref{def-PA}, \eqref{Cauchy-PV}, and \eqref{res-id} 
yields the
following formula, where
$\phi,\psi\in\HR$ and $A,\wt{A},\mu$ satisfy \eqref{hyp-AwtAmu},
\begin{eqnarray}
\lefteqn{
\label{basic-for-comm}
\SPb{\phi}{(\Lambda^+_{\wt{A}}\,\mu-\mu\,\PA)\,\psi}
}
\\
&=&
\lim_{\tau\to\infty}
\int_{-\tau}^\tau\nonumber
\SPb{\phi}{\R{\wt{A}}{iy}\,\alpha\cdot\big(i\nabla\mu\,
+\,\mu\,(A-\wt{A})\big)\,\RA{iy}
\,\psi}\,\frac{dy}{2\pi}\,.
\end{eqnarray}
We also recall the identities
\begin{equation}\label{Cauchy-Trick}
\int_\RR\big\|\,|\DA|^{1/2}\,\RA{iy}\,\psi\,\big\|^2\,dy
\,=\,
\int_\RR\int_\RR\frac{|\lambda|}{\lambda^2+y^2}\,dy\,
d\|E_\lambda(\DA)\psi\|^2\,=\,\pi\,\|\psi\|^2,
\end{equation}
for all $\psi\in\HR$, $A\in L_\loc^\infty(\RR^3,\RR^3)$, and 
\begin{eqnarray}
\|\RA{iy}\|&=&(1+y^2)^{-1/2}\,,\qquad y\in\RR\,,\;A\in 
L_\loc^\infty(\RR^3,\RR^3)\,,\label{RA-norm}
\\
\|\,\alpha\cdot v\,\|_{\LO(\CC^4)}&=&|v|\,,\qquad\qquad \qquad v\in\RR^3
\label{alpha-v-norm}\,.
\end{eqnarray}
Here \eqref{alpha-v-norm} follows from \eqref{Clifford}.
Finally, we need the following crucial estimate stating that
$\RA{z}$ stays bounded after
conjugation with suitable exponential weights, $e^F$,
acting as multiplication operators in $\HR$.
Although it is well-known
(see, e.g., \cite{BeGe}), we recall its proof since it determines
the exponential decay rates in our main theorems.

\begin{lemma}\label{le-marah}
Let $A\in L^\infty_\loc(\RR^3,\RR^3)$,
$\lambda\in(-1,1)$, $y\in\RR$, $a\in[0,\sqrt{1-\lambda^2})$,
and let
$F\in C^\infty(\RR^3,\RR)$ have a fixed sign and
satisfy $|\nabla F|\klg a$.
Then 
$\lambda+iy\in\vr(\DA+i\alpha\cdot\nabla F)$,
\begin{equation}\label{marah0}
e^F\,\RA{\lambda+iy}\,e^{-F}
=(\DA+i\alpha\cdot \nabla F+\lambda
+iy)^{-1}\!\!\upharpoonright_{\dom(e^{-F})}\,,
\end{equation}
and
\begin{equation}\label{marah1}
\big\|\,e^F\,\RA{\lambda+iy}\,e^{-F}\,\big\|\,
\klg\,\frac{\sqrt{3}\sqrt{1+y^2+\lambda^2+a^2}}{
1+y^2-\lambda^2-a^2}\,\klg\,
\frac{\sqrt{3}}{\sqrt{1+y^2}}
\frac{\sqrt{1+\lambda^2+a^2}}{1-\lambda^2-a^2}
\,.
\end{equation}
\end{lemma}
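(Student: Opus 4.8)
The plan is to prove Lemma~\ref{le-marah} by exhibiting the conjugated resolvent as a genuine resolvent of a modified Dirac operator and then bounding that resolvent by a direct algebraic computation. First I would check that conjugation by the multiplication operator $e^F$ produces a first-order perturbation: formally, $e^F(-i\alpha\cdot\nabla)e^{-F}=-i\alpha\cdot\nabla+i\alpha\cdot\nabla F$, so $e^F\DA e^{-F}=\DA+i\alpha\cdot\nabla F$ on $\COI$, and since $|\nabla F|\klg a<1$ the term $i\alpha\cdot\nabla F$ is a bounded, $\DA$-bounded perturbation with relative bound zero; hence $\DA+i\alpha\cdot\nabla F$ is self-adjoint-like (not self-adjoint, as $i\alpha\cdot\nabla F$ is anti-self-adjoint, but a bounded perturbation of a self-adjoint operator) on $\dom(\DA)$. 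The identity \eqref{marah0} then follows once we know $\lambda+iy\in\vr(\DA+i\alpha\cdot\nabla F)$; I would establish the latter as a by-product of the norm bound, since a two-sided bound on a candidate inverse, together with density, gives both that $\lambda+iy$ is in the resolvent set and that the formula holds on $\dom(e^{-F})$. The restriction arrow $\upharpoonright_{\dom(e^{-F})}$ is just the bookkeeping that $e^{-F}$ may be unbounded when $F$ has the ``wrong'' sign, so the natural domain of the right-hand side as written is $e^F\dom(\DA)\supset\dom(e^{-F})$ intersected appropriately — this is a routine point.

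The analytic heart is the bound \eqref{marah1}. I would proceed by the standard trick of squaring. Write $T:=\DA+i\alpha\cdot\nabla F$ and $z:=\lambda+iy$. For $\vp\in\COI$ compute $\|(T+z)\vp\|^2$ by expanding; using the Clifford relations \eqref{Clifford} and \eqref{alpha-v-norm}, the key cancellation is that $\DA$ is self-adjoint and $i\alpha\cdot\nabla F$ is skew-adjoint, so the cross terms organize into
\begin{equation*}
\|(T+z)\vp\|^2=\|\DA\vp\|^2+\|(\alpha\cdot\nabla F+iy)\vp\|^2+2\,\mathrm{Re}\,z\,\|\vp\|^2-\text{(controlled terms)},
\end{equation*}
more precisely one gets $\|\DA\vp\|^2\grg\|\vp\|^2$ from $\DA^2\grg\id$ (which holds since $\DA^2=(-i\alpha\cdot\nabla+\alpha\cdot A)^2+\id+$ a field term that is nonnegative in form sense, or more simply from $\spec(\DA)\subset(-\infty,-1]\cup[1,\infty)$), while $\|\alpha\cdot\nabla F\,\vp\|\klg a\|\vp\|$, and the remaining $\lambda$- and $y$-dependent pieces are collected by Cauchy--Schwarz. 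The outcome should be a lower bound of the shape $\|(T+z)\vp\|\grg c\,\|\vp\|$ with $c=(1+y^2-\lambda^2-a^2)/\sqrt{3}\sqrt{1+y^2+\lambda^2+a^2}$ — note $1+y^2-\lambda^2-a^2>0$ precisely because $a^2<1-\lambda^2$, which is the hypothesis. A symmetric argument bounding $\|(T^*+\bar z)\vp\|$ from below gives surjectivity, so $z\in\vr(T)$ and $\|(T+z)^{-1}\|\klg 1/c$, which is the first inequality in \eqref{marah1}; the second inequality is then elementary: bound $\sqrt{1+y^2+\lambda^2+a^2}\klg\sqrt{1+y^2}\,\sqrt{1+\lambda^2+a^2}$ since $\lambda^2+a^2<1$, and $1+y^2-\lambda^2-a^2\grg(1+y^2)(1-\lambda^2-a^2)/(1)$... actually one uses $1+y^2-\lambda^2-a^2\grg(1-\lambda^2-a^2)(1+y^2)$ is false; rather $1+y^2-\lambda^2-a^2=(1-\lambda^2-a^2)+y^2\grg(1-\lambda^2-a^2)(1+y^2)$ fails for large $y$ unless $1-\lambda^2-a^2\klg1$, which does hold, so in fact $(1-\lambda^2-a^2)+y^2\grg(1-\lambda^2-a^2)(1+y^2)$ is equivalent to $y^2(1-(1-\lambda^2-a^2))\grg0$, true. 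So the chain of elementary estimates closes.

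I expect the main obstacle to be the careful justification of the operator-theoretic identity \eqref{marah0} on exactly the right domain, rather than the norm estimate itself. One has to argue that $e^F$ maps $\dom(\DA)$ into $\dom(T)=\dom(\DA)$ (true because multiplication by a smooth $e^F$ with $|\nabla F|$ bounded preserves $H^1_{\mathrm{loc}}$-regularity and the commutator $[\DA,e^F]=-i\alpha\cdot\nabla F\,e^F$ is bounded), and conversely that $e^{-F}$ maps $\dom(T)$ into $\dom(\DA)$; when $F\grg0$ the factor $e^{-F}$ is bounded so this is automatic, and when $F\klg0$ one works with $e^{-F}$ bounded and $e^{F}$ unbounded — this is exactly why the lemma allows $F$ of "fixed sign" and writes the restriction $\upharpoonright_{\dom(e^{-F})}$. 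The cleanest route is to verify the identity first on $\COI$ by direct computation, then extend using that both sides are bounded (once the estimate is in hand) and that $\COI$, or $e^{-F}\COI$, is a core. Since essential self-adjointness of $\DA$ on $\COI$ is quoted in the excerpt, and $i\alpha\cdot\nabla F$ is a bounded perturbation, the core property transfers, and the extension argument is then standard. The field-term positivity needed for $\DA^2\grg\id$ can be bypassed entirely by invoking the already-stated spectral inclusion $\spec(\DA)\subset(-\infty,-1]\cup[1,\infty)$, so no new input about $B=\rot A$ is required.
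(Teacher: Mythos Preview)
Your outline of the domain bookkeeping and the elementary inequality chain at the end are fine, but the ``analytic heart'' has a real gap. If you set $D:=\DA$, $S:=i\alpha\cdot\nabla F$ and expand $\|(D+S-z)\vp\|^2$ directly, the cross term does \emph{not} cancel: since $D$ is self-adjoint and $S$ skew-adjoint one gets
\[
2\,\Re\langle D\vp,S\vp\rangle\,=\,\langle\vp,[D,S]\vp\rangle\,,
\]
and $[\DA,i\alpha\cdot\nabla F]$ contains $\Delta F$ together with terms of the type $A\times\nabla F$. Neither is controlled by the hypotheses of the lemma (only $|\nabla F|\klg a$ and $A\in L^\infty_\loc$ are assumed). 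Bounding the cross term crudely by Cauchy--Schwarz, $|2\Re\langle D\vp,S\vp\rangle|\klg 2a\,\|\DA\vp\|\,\|\vp\|$, and then optimizing gives a lower bound on $\|(D+S-z)\vp\|$ that is positive only on a \emph{strictly smaller} range of $a$ (for instance at $\lambda=0$ one loses positivity well before $a=1$), so you would not recover the stated decay rates.

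The paper sidesteps this by not squaring $(D+S-z)\vp$ against itself but against $(D-S+\bar z)\vp$: one computes
\[
\Re\big\langle e^{-F}(\DA+\bar z)e^{F}\vp,\;e^{F}(\DA-z)e^{-F}\vp\big\rangle
\,=\,\|\alpha\cdot(-i\nabla+A)\vp\|^2+\langle\vp,(1-\Re z^2-|\nabla F|^2)\vp\rangle,
\]
where the sign flip in front of $S$ makes the commutator term drop out exactly. One then uses $\Re\langle A\vp,B\vp\rangle\klg\frac{1}{4\ve}\|B\vp\|^2+\ve\|A\vp\|^2$ with $\|A\vp\|^2\klg 3(\|\DA\vp\|^2+|z|^2\|\vp\|^2+\langle\vp,|\nabla F|^2\vp\rangle)$ and optimizes in $\ve$ to obtain the precise constant $b_-/\sqrt{3b_+}$ with $b_\pm=1+y^2\pm\lambda^2\pm a^2$. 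This is the missing idea; once you have it, your treatment of the resolvent-set statement and of \eqref{marah0} via cores goes through as you describe.
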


\begin{proof}
A straightforward calculation 
yields, for $z=\lambda+iy$, $\lambda\in(-1,1)$, $y\in\RR$, 
$\ve>0$, and $\vp\in\COI$,
\begin{eqnarray*}
\lefteqn{
\frac{1}{4\ve}\:\big\|\,e^F\,(\DA-z)\,e^{-F}\,\vp\,\big\|^2
\,+\,3\ve\,\big\|\,\alpha\cdot(-i\nabla+A)\,\vp\,\big\|^2
}
\\
& &
\,+\,3\ve\,(1+|z|^2)\,\|\vp\|^2\,+\,3\ve\,\SPb{\vp}{|\nabla F|^2\,\vp}
\\
& &\quad\grg\,
\Re\,\SPb{e^{-F}\,(\DA+\ol{z})\,e^{F}\,\vp}{e^F\,(\DA-z)\,e^{-F}\,\vp}
\\
& &\quad=\,
\big\|\,\alpha\cdot(-i\nabla+A)\,\vp\,\big\|^2
\,+\,\SPb{\vp}{\big(1-\Re z^2-|\nabla F|^2\big)\,\vp}\,.
\end{eqnarray*}
Together with $|\nabla F|\klg a$ and
$\Re z^2=\lambda^2-y^2$ this implies
$$
\big\|\,e^F\,(\DA-z)\,e^{-F}\,\vp\,\big\|^2
\,\grg\,
4\ve\,(b_--\,3\ve\,b_+)
\,\|\vp\|^2\,,
$$
where $b_\pm:=1+y^2\pm\lambda^2\pm a^2$.
The optimal choice for $\ve$ is
$\ve=b_-/(6b_+)$.
Since $\COI$ is a core for the closed operator 
$\DA+i\alpha\cdot\nabla F$ with domain $\dom(\DA)$
it follows that
\begin{equation}\label{marah99}
\big\|\,(\DA+i\alpha\cdot\nabla F-z)\,\psi\,\big\|
\,\grg\,b_-\,(3b_+)^{-1/2}
\,\|\psi\|\,,\qquad \psi\in\dom(\DA)\,.
\end{equation}
We may replace $F,z$ by $-F,\ol{z}$ in \eqref{marah99},
whence $\Ran(\DA+i\alpha\cdot\nabla F-z)^\bot=
\Ker(\DA-i\alpha\cdot\nabla F-\ol{z})=\{0\}$.
On the other hand we know that $\Ran(\DA+i\alpha\cdot\nabla F-z)$
is closed since $\DA+i\alpha\cdot\nabla F-z$ is closed with a
continuous inverse.
It follows that $z\in\vr(\DA+i\alpha\cdot\nabla F)$.

We assume that $F\grg0$ in the rest of this proof. Let $\psi\in\HR$.
We pick a sequence, $\{\vp_n\}_{n\in\NN}\in\COI^\NN$,
which converges to 
$\eta:=(\DA+i\alpha\cdot\nabla F-z)^{-1}\,\psi\in\dom(\DA)$
with respect to the graph norm of $\DA$.
Passing to the limit in
$$
\RA{z}\,e^{-F}\,(\DA+i\alpha\cdot \nabla F-z)\,\vp_n
\,=\,e^{-F}\,\vp_n\,,
$$
we obtain
$\RA{z}\,e^{-F}\,\psi=e^{-F}\,\eta$, which implies
\begin{equation}\label{mildred11}
e^F\,\RA{z}\,e^{-F}\,=\,(\DA+i\alpha\cdot\nabla F-z)^{-1}\,.
\end{equation}
Taking the adjoint we get
\begin{equation}\label{mildred22}
e^{-F}\,\RA{\ol{z}}\,e^F\,\subset\,
\big(e^F\,\RA{z}\,e^{-F}\big)^*\,=\,(\DA-i\alpha\cdot\nabla F-\ol{z})^{-1}
\,.
\end{equation}
\eqref{mildred11} and \eqref{mildred22} together prove \eqref{marah0}.
\end{proof}

\bigskip

\noindent
To shorten the presentation and 
since it is sufficient for our applications below 
we consider only bounded weight functions $F$ in the following
Lemma~\ref{le-peki}. 
Similar estimates have already been derived in 
\cite{MaSt08a}.

\begin{lemma}\label{le-peki}
Let
$A\in L^\infty_\loc(\RR^3,\RR^3)$, $a\in[0,1)$,
$\chi\in C^\infty(\RR^3,[0,1])$
with $\nabla\chi\in C_0^\infty(\RR^3,\RR^3)$, and let
$F\in C^\infty(\RR^3,\RR)\cap L^\infty(\RR^3,\RR)$ have a fixed sign and
satisfy $|\nabla F|\klg a$.
Then
\begin{equation}\label{peki}
\big\|\,|\DA|^{1/2}\,[\PA\,,\,\chi\,e^F]\,e^{-F}\,\big\|\,\klg
\,\frac{\sqrt{6}}{2}\cdot\frac{a+\|\nabla\chi\|_\infty}{1-a^2}\,.
\end{equation}
In particular,
\begin{equation}\label{peki2}
\big\|\,\PAF\,\big\|\,\klg\,1\,
+\,\frac{\sqrt{6}}{2}\cdot\frac{a}{1-a^2}\,,\qquad
\textrm{where}\quad \PAF\,:=\,e^F\,\PA\,e^{-F}\,.
\end{equation}
Moreover,
\begin{equation}\label{sonja99}
\big\|\,
e^F\,[\chi\,,\,\PA]
\,\big\|\,\klg\,\frac{\sqrt{6}}{2(1-a^2)}\,\|\,e^F\,\nabla\chi\,\|_\infty\,.
\end{equation}
(ii)
Assume additionally that $\nabla\chi=\nabla F=0$ 
in a neighbourhood, $\cU\subset\RR^3$, of $0$
and let $\zeta\in C_0^\infty(\cU,[0,1])$.
Then
\begin{equation}\label{Vpeki}
\Big\|\,\frac{\zeta}{|\cdot|}
\,[\PA\,,\,\chi\,e^F]\,e^{-F}\,\Big\|\,\klg\,\sqrt{6}
\,\big(\|\nabla\zeta\|_\infty+\|\zeta\,A\|_\infty\big)\,
\frac{a+\|\nabla\chi\|_\infty}{1-a^2}
\,.
\end{equation}
If $a=0$, then the factor $\sqrt{6}$ in \eqref{peki},
\eqref{sonja99}, and \eqref{Vpeki} can be replaced by $1$.
\end{lemma}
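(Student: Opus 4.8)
The plan is to reduce everything to the resolvent bound of Lemma~\ref{le-marah} (with $\lambda=0$) combined with the Cauchy principal value representation \eqref{Cauchy-PV} of $\sgn(\DA)$ and the ``Cauchy trick'' identity \eqref{Cauchy-Trick}. First I would fix a weight $F$ of fixed sign with $|\nabla F|\klg a<1$ and a cutoff $\chi$ with $\nabla\chi\in C_0^\infty$. Since $\PA=\tfrac12\id+\tfrac12\sgn(\DA)$ by \eqref{def-PA}, the commutator $[\PA,\chi\,e^F]$ equals $\tfrac12[\sgn(\DA),\chi\,e^F]$, so it suffices to commute $\chi e^F$ through the principal value integral $\int(\DA-iy)^{-1}\,dy/\pi$. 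Writing $g:=\chi\,e^F$ I would use the elementary resolvent commutator identity
\begin{equation*}
[(\DA-iy)^{-1},g]\,=\,(\DA-iy)^{-1}\,[\DA,g]\,(\DA-iy)^{-1}
\,=\,-(\DA-iy)^{-1}\,(i\alpha\cdot\nabla g)\,(\DA-iy)^{-1},
\end{equation*}
where $\nabla g=e^F(\nabla\chi+\chi\nabla F)$, so that $\alpha\cdot\nabla g\cdot e^{-F}=\alpha\cdot(\nabla\chi+\chi\nabla F)$ has operator norm $\klg\|\nabla\chi\|_\infty+a$ by \eqref{alpha-v-norm}. Hence
\begin{equation*}
|\DA|^{1/2}\,[\PA,g]\,e^{-F}
\,=\,-\frac{1}{2}\lim_{\tau\to\infty}\int_{-\tau}^{\tau}
|\DA|^{1/2}(\DA-iy)^{-1}\,\big(i\alpha\cdot(\nabla\chi+\chi\nabla F)\big)\,e^{F}(\DA-iy)^{-1}e^{-F}\,\frac{dy}{\pi}.
\end{equation*}

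Next I would estimate the two resolvent factors by complementary means. For the right-hand factor I apply Lemma~\ref{le-marah} with $\lambda=0$, which gives $\|e^F\RA{iy}e^{-F}\|\klg\sqrt{3}\,(1+y^2)^{-1/2}\,(1+a^2)^{1/2}(1-a^2)^{-1}$; I would actually keep the sharper bound $\sqrt{3}\sqrt{1+y^2+a^2}/(1+y^2-a^2)$ since, after integration against $dy$, pairing it with the other factor via Cauchy--Schwarz will produce the stated constant $\tfrac{\sqrt6}{2}$. For the left-hand factor I would pair the operator against an arbitrary $\phi\in\HR$ and use \eqref{Cauchy-Trick}, namely $\int_\RR\||\DA|^{1/2}\RA{iy}\phi\|^2\,dy=\pi\|\phi\|^2$; an analogous (in fact identical, after taking adjoints) computation controls $\int_\RR\|(|\DA|^{1/2}\RA{iy})^*\psi\|^2\,dy$, which is what is needed since $|\DA|^{1/2}\RA{iy}$ appears to the left. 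Combining: bound the integrand by
$\||\DA|^{1/2}\RA{iy}\phi\|\cdot\big(\|\nabla\chi\|_\infty+a\big)\cdot\|e^F\RA{iy}e^{-F}\psi\|$,
then Cauchy--Schwarz in $y$ splits this into $\big(\int\||\DA|^{1/2}\RA{iy}\phi\|^2dy\big)^{1/2}$, which is $\sqrt\pi\,\|\phi\|$, times $\big(\int\|e^F\RA{iy}e^{-F}\|^2dy\big)^{1/2}\|\psi\|$. The last integral is $\int_\RR 3\,(1+y^2+a^2)/(1+y^2-a^2)^2\,dy$; a residue computation (or direct evaluation) gives exactly $3\pi/(2(1-a^2)^2)$, and collecting the prefactor $\tfrac12\cdot\tfrac1\pi$ yields precisely $\tfrac{\sqrt6}{2}\cdot(a+\|\nabla\chi\|_\infty)/(1-a^2)$, which is \eqref{peki}.

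The remaining assertions follow quickly. For \eqref{peki2} take $\chi\equiv1$ (so $\nabla\chi=0$), then $\PAF=e^F\PA e^{-F}=\PA+[\,e^F\PA e^{-F}-\PA\,]=\PA+e^F[\PA,e^{-F}\!\cdot e^F]e^{-F}$; more directly, $\PAF-\PA=e^F[\PA,e^{-F}]e^{F}\cdot$ --- I would simply write $\|\PAF\|\klg\|\PA\|+\||\DA|^{-1/2}\|\,\||\DA|^{1/2}(\PAF-\PA)\|$ is not bounded, so instead use that $\PAF-\PA$ has the same principal-value representation without the $|\DA|^{1/2}$ factor, bounding one resolvent in operator norm by \eqref{RA-norm} and the other by Lemma~\ref{le-marah}, giving the norm $\tfrac{\sqrt6}{2}\,a/(1-a^2)$ and hence \eqref{peki2}. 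For \eqref{sonja99} take $F$ replaced by the genuine weight only on the outside: with $\chi$ compactly-varying, $e^F[\chi,\PA]=-e^F[\PA,\chi]$ and the same computation with $g=\chi$ (no $e^F$ inside, so $\nabla g=\nabla\chi$) but the outer $e^F$ distributed as $e^F\RA{iy}=(e^F\RA{iy}e^{-F})e^F$ on the far left factor --- here one resolvent gets the weighted bound from Lemma~\ref{le-marah} and one gets \eqref{RA-norm}, and $\|e^F\nabla\chi\|_\infty$ emerges as the middle factor, yielding $\tfrac{\sqrt6}{2(1-a^2)}\|e^F\nabla\chi\|_\infty$. For part~(ii), since $\nabla\chi=\nabla F=0$ near $0$, the function $\nabla g=e^F(\nabla\chi+\chi\nabla F)$ is supported away from $0$, so inserting $\zeta/|\cdot|$ on the left I use $\zeta\,|\cdot|^{-1}(\DA-iy)^{-1}=\zeta\,|\cdot|^{-1}\RA{iy}\zeta'+\zeta\,|\cdot|^{-1}[\,\RA{iy},\zeta'\,]$ for a cutoff $\zeta'$ that is $1$ on $\supp\zeta$; the Hardy-type bound $\||\cdot|^{-1}\RA{iy}\zeta'\|\klg 2\||\DA|^{1/2}\RA{iy}\|\cdot(\dots)$ together with $\|\nabla\zeta'\|_\infty+\|\zeta'A\|_\infty$ controls this, doubling the constant to $\sqrt6$ --- this is where the local $L^\infty$ bound on $A$ enters and is the one genuinely new point. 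Finally, the improvement $\sqrt6\rightsquigarrow1$ when $a=0$ is because then one may take $F\equiv0$, drop Lemma~\ref{le-marah} entirely, and bound \emph{both} resolvents using only \eqref{Cauchy-Trick} and \eqref{RA-norm}, which removes the lossy factor $\sqrt3$ from Lemma~\ref{le-marah}.

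The main obstacle I anticipate is bookkeeping rather than conceptual: one must be careful about on which side the unbounded factor $|\DA|^{1/2}$ (or $\zeta/|\cdot|$) sits, so that \eqref{Cauchy-Trick} is applied to the correct resolvent and the weighted bound of Lemma~\ref{le-marah} to the other, and one must justify interchanging the principal-value limit with the pairing --- this is standard, using that $\int_{|y|>\tau}(1+y^2)^{-1/2}\|e^F\RA{iy}e^{-F}\phi\|\,dy$ tends to $0$ by dominated convergence once the integrable majorant from \eqref{marah1} is in hand. The exact evaluation of $\int_\RR 3(1+y^2+a^2)/(1+y^2-a^2)^2\,dy=3\pi/(2(1-a^2)^2)$ is the only real computation and is routine.
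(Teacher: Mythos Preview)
Your approach to \eqref{peki}, \eqref{peki2}, and \eqref{sonja99} is essentially the paper's: commute through the Cauchy principal value \eqref{Cauchy-PV}, apply Cauchy--Schwarz in $y$, and use \eqref{Cauchy-Trick} on one resolvent and \eqref{marah1} on the other. Two corrections are needed. First, your integral is wrong: a direct computation gives
\[
\int_\RR \frac{3(1+y^2+a^2)}{(1+y^2-a^2)^2}\,dy\,=\,\frac{3\pi}{(1-a^2)^{3/2}}\,,
\]
not $3\pi/(2(1-a^2)^2)$; carried through, this yields $\tfrac{\sqrt3}{2}(a+\|\nabla\chi\|_\infty)(1-a^2)^{-3/4}$, which is \emph{sharper} than the stated bound but not equal to it. The paper sidesteps this by using the cruder second inequality in \eqref{marah1}, namely $\|e^F\RA{iy}e^{-F}\|\klg\sqrt{6}\,\big((1-a^2)\sqrt{1+y^2}\big)^{-1}$ (using $\sqrt3\sqrt{1+a^2}\klg\sqrt6$), so the remaining $y$-integral is just $\int(1+y^2)^{-1}dy=\pi$ and the constant $\tfrac{\sqrt6}{2}/(1-a^2)$ falls out exactly.

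Second, your plan for \eqref{Vpeki} has a real gap. You attempt to bound $\zeta\,|\cdot|^{-1}\RA{iy}$ directly via a ``Hardy-type bound'' for the \emph{magnetic} resolvent, but no such estimate is established here, and the extra cutoff $\zeta'$ does not help. The paper's argument is different and hinges on the support condition you noted but did not exploit: since $\zeta\,(\nabla\chi+\chi\nabla F)=0$, one applies the resolvent identity \eqref{res-id} with $\wt{A}=0$ to write
\[
\zeta\,\RA{iy}\,=\,\R{0}{iy}\,\zeta\,-\,\R{0}{iy}\,\alpha\cdot(i\nabla\zeta+\zeta A)\,\RA{iy}\,.
\]
The first term kills $i\alpha\cdot(\nabla\chi+\chi\nabla F)$, so only the commutator piece survives, and this places a \emph{free} resolvent $\R{0}{iy}$ next to $|\cdot|^{-1}$. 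Now Hardy's inequality gives $\big\|\,|\cdot|^{-1}\R{0}{iy}\,\big\|\klg2$ uniformly in $y$, and the factor $\|\nabla\zeta\|_\infty+\|\zeta A\|_\infty$ appears naturally from the commutator. This passage from $\RA{iy}$ to $\R{0}{iy}$ is the step you are missing.
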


\begin{proof}
On account of \eqref{basic-for-comm}
we have, for $\phi\in\dom(|\DA|^{1/2})$ and $\psi\in\HR$,
\begin{eqnarray}
\lefteqn{\nonumber
\big|\SPb{|\DA|^{1/2}\,\phi}{[\PA\,,\,\chi\,e^F]\,e^{-F}\,\psi}\big|
}
\\
&\klg&
\int_{\RR}\Big|
\nonumber
\SPB{|\DA|^{1/2}\,\phi}{\RA{iy}\,i\alpha\cdot(\nabla\chi+\chi\,\nabla F)
\,e^F\,\RA{iy}\,e^{-F}\,\psi}\Big|\,\frac{dy}{2\pi}
\\
&\klg&\nonumber
\sqrt{6}\:\frac{a+\|\nabla\chi\|_\infty}{2\pi(1-a^2)}\,
\Big(\int_\RR\big\|\,|\DA|^{1/2}\,\RA{-iy}\,\phi\,\big\|^2\,dy\Big)^{1/2}
\Big(\int_\RR\frac{\|\psi\|^2\,dy}{1+y^2}\Big)^{1/2}.
\end{eqnarray} 
In the last line we have used 
\eqref{alpha-v-norm} and \eqref{marah1}
(with $\lambda=0$ and $\sqrt{3}\sqrt{1+a^2}\klg\sqrt{6}$).
Applying \eqref{Cauchy-Trick}
we conclude
that 
$[\PA\,,\,\chi\,e^F]\,e^{-F}\,\psi\in\dom(|\DA|^{1/2*})=\dom(|\DA|^{1/2})$
and that \eqref{peki} holds true.
The bound \eqref{sonja99}
follows from
$$
\big|\SPb{e^F\,\phi}{[\chi\,,\,\PA]\,\psi}\big|
\,\klg\,
\int_{\RR}\Big|\SPb{e^{-F}\,\R{A}{-iy}e^F\,\phi}{
(e^F\,\alpha\cdot\nabla\chi)\,\R{A}{iy}\,\psi}\Big|\,\frac{dy}{2\pi}\,,
$$
for all $\phi,\psi\in\HR$, together with
\eqref{RA-norm}, \eqref{alpha-v-norm}, and \eqref{marah1}.

In order to prove Part~(ii) we first observe that the additional
assumption implies $\zeta\,(\nabla\chi+\chi\,\nabla F)=0$.
Together with \eqref{RA-norm} and \eqref{res-id} (with $\wt{A}=0$)
this permits 
to get,
for $\phi\in H^1(\RR^3,\CC^4)$ and $\psi\in\HR$,
\begin{eqnarray*}
\lefteqn{
\big|\SPb{\zeta\,|\cdot|^{-1}\,\phi}{[\PA\,,\,\chi\,e^F]\,e^{-F}\,\psi}\big|
}
\\
&\klg&
\int_{\RR}\Big|\nonumber
\SPB{\frac{1}{|\cdot|}
\,\phi}{\zeta\,\RA{iy}\,i\alpha\cdot(\nabla\chi+\chi\,\nabla F)
\,e^F\,\RA{iy}\,e^{-F}\,\psi}\Big|\,\frac{dy}{2\pi}
\\
&=&
\int_{\RR}\Big|\nonumber
\SPB{\frac{1}{|\cdot|}\,\phi}{
\R{0}{iy}\,\alpha\cdot\{i\nabla\zeta+\zeta\,A\}\,\times
\\
& &\qquad\times
\,\RA{iy}\,i\alpha\cdot(\nabla\chi+\chi\,\nabla F)
\,e^F\,\RA{iy}\,e^{-F}\,\psi}\Big|\,\frac{dy}{2\pi}
\\
&\klg&\frac{\sqrt{6}}{2\pi}
\int_\RR
\big\|\,\R{0}{-iy}\,|\cdot|^{-1}\,\big\|
\,\big(\|\nabla\zeta\|_\infty+\|\zeta\,A\|_\infty\big)
\frac{a+\|\nabla\chi\|_\infty}{1-a^2}\,
\frac{\|\phi\|\,\|\psi\|\,dy}{1+y^2}\,.
\end{eqnarray*}
By Hardy's inequality 
$\|\R{0}{-iy}\,|\cdot|^{-1}\|
=\|\,|\cdot|^{-1}\,\R{0}{iy}\|\klg 2$, for all
$y\in \RR$. 

The last statement of this lemma follows from an obvious
modification of
the proof above. In fact, in the case $a=0$ we can always
use \eqref{RA-norm} where \eqref{marah1} has been applied before.
\end{proof}

\bigskip

\noindent
In what follows we set, for any vector-valued function
$u:\RR^3\to\CC^3$,
$$
\|u\|_{\Lip^1}\,:=\,\sum_{i=1}^3\Big\{\,
\|u_i\|_\infty\,+\,\sup_{x\not=y}\,\frac{|u_i(x)-u_i(y)|}{|x-y|}\,\Big\}\,.
$$

\begin{lemma}\label{le-vgl-PA-PO}
Let $A\in L^\infty_\loc(\RR^3,\RR^3)$ and
$\mu\in C_0^\infty(\RR^3,\CC)$. Then
\begin{eqnarray}
\big\|\,|\D{0}|^{1/2}\,(\PO\,\mu-\mu\,\PA)\,|\DA|^{1/2}\,\big\|
&\klg&\frac{1}{2}\,\big(\|\nabla\mu\|_\infty+\|\mu\,A\|_\infty\big)
\label{rudi0}
\,.
\end{eqnarray}
Assume further that $A$ is Lipschitz continuous
in a neighbourhood, $\cU\subset\RR^3$, of 
$\supp(\mu)$ and let $\chi\in C_0^\infty(\cU,[0,1])$
be such that $\chi\,\mu=\mu$. Then
\begin{equation}
\big\|\,\D{0}\,(\PO\,\mu-\mu\,\PA)\,\big\|
\,\klg\,\label{rudi2a}
\frac{1}{2}\,\big(\|\nabla\mu\|_{\Lip^1}+\|\mu\,A\|_{\Lip^1}\big)
\big(\sqrt{3}+\|\nabla\chi\|_\infty+\|\chi\,A\|_\infty\big)
\,.
\end{equation}
\end{lemma}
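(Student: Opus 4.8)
The plan is to use the master commutator formula \eqref{basic-for-comm} with $\wt{A}=0$ and $\mu$ in place of $\chi\,e^F$ (so the weight $F$ is absent), which expresses $\PO\,\mu-\mu\,\PA$ as a principal-value integral over $y\in\RR$ of $\R{0}{iy}\,i\alpha\cdot(\nabla\mu+\mu\,A)\,\RA{iy}$. For the bound \eqref{rudi0} I would sandwich this between $|\D{0}|^{1/2}$ on the left and $|\DA|^{1/2}$ on the right, move $|\D{0}|^{1/2}$ onto $\R{0}{iy}$ and $|\DA|^{1/2}$ onto $\RA{iy}$, and then estimate the scalar integrand using \eqref{alpha-v-norm} together with the norm bounds. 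The factor $\alpha\cdot(\nabla\mu+\mu\,A)$ is bounded in operator norm by $\|\nabla\mu\|_\infty+\|\mu\,A\|_\infty$ by \eqref{alpha-v-norm}. The two resolvent factors are handled by the Cauchy-type identity \eqref{Cauchy-Trick}: indeed $\int_\RR\||\D{0}|^{1/2}\R{0}{-iy}\phi\|^2\,dy=\pi\|\phi\|^2$ and likewise $\int_\RR\||\DA|^{1/2}\R{A}{iy}\psi\|^2\,dy=\pi\|\psi\|^2$, so by Cauchy--Schwarz in $y$ the integral $\int_\RR(\cdots)\,\tfrac{dy}{2\pi}$ is at most $\tfrac{1}{2\pi}\cdot\pi\cdot(\|\nabla\mu\|_\infty+\|\mu\,A\|_\infty)\|\phi\|\|\psi\|$, which gives the constant $\tfrac12$ in \eqref{rudi0}. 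As usual one first derives the bilinear estimate on the dense set $\phi\in\dom(|\D0|^{1/2})$, $\psi\in\dom(|\DA|^{1/2})$ and then concludes that $\PO\mu-\mu\PA$ maps $\dom(|\DA|^{1/2})$ into $\dom(|\D0|^{1/2})$ with the stated operator-norm bound.

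For the stronger bound \eqref{rudi2a} the idea is to replace one power $|\D{0}|^{1/2}$ (respectively $|\DA|^{1/2}$) in \eqref{rudi0} by a full power of $\D0$. Here the local Lipschitz hypothesis on $A$ near $\supp\mu$ enters: it guarantees that $\DA$ (with the auxiliary cutoff $\chi$, $\chi\mu=\mu$) behaves well enough that $\D{0}$ applied to the relevant vectors can be controlled by $|\DA|$ up to commutator terms involving $\nabla\chi$ and $\chi A$. Concretely I would insert $\chi$ via $\mu=\chi\mu$ and then write $\D0\,\mu = \D0\,\chi\,\mu$; commuting $\D0$ past $\chi$ produces a lower-order term $-i\alpha\cdot\nabla\chi$, while $\chi\D0\mu$ can be rewritten as $\chi(\DA-\alpha\cdot A)\mu=\chi\DA\mu-\chi\alpha\cdot A\,\mu$. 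The point is that $\DA$ applied to the output of the commutator formula produces, by \eqref{res-id} again (with $\wt A=0$), yet another term of the form $\R{A}{iy}\,i\alpha\cdot\nabla\chi$ or $\R A{iy}\chi\alpha\cdot A$, shifting one resolvent to an extra factor; the Lipschitz regularity of $A$ on $\cU$ is exactly what makes $\chi\alpha\cdot A$ and $\alpha\cdot\nabla(\chi A)$ bounded multiplication operators so that these manipulations stay within $\HR$. Combining the $\sqrt3$ coming from an application of \eqref{marah1} at $\lambda=0$ (giving $\sqrt3\sqrt{1+a^2}$ with $a=0$, i.e.\ simply $\sqrt3$) with the extra $\|\nabla\chi\|_\infty+\|\chi A\|_\infty$ from the commutators, and keeping the $\tfrac12$ and the $\|\nabla\mu\|_{\Lip^1}+\|\mu A\|_{\Lip^1}$ from differentiating the symbol, reproduces the right-hand side of \eqref{rudi2a}. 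One again does the estimate on a dense set (e.g.\ $\COI$) and passes to the closure.

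The main obstacle I expect is bookkeeping in the derivation of \eqref{rudi2a}: one must keep track of which resolvent in the product $\R{0}{iy}(\cdots)\RA{iy}$ absorbs the extra power of $\D0$ (or $\DA$), make sure every commutator that is generated is either a bounded multiplication operator — which is where the $\Lip^1$ norms of $\nabla\mu$, $\mu A$ and the $L^\infty$ norms of $\nabla\chi$, $\chi A$ come from — or can be reabsorbed into another resolvent via \eqref{res-id}, and verify that after all the integrations in $y$ one is left with an absolutely convergent integral of the form $\int_\RR(1+y^2)^{-1}\,dy$ or $\int_\RR\||\DA|^{1/2}\RA{iy}\cdot\|^2\,dy$ so that \eqref{Cauchy-Trick} and \eqref{RA-norm} close the estimate. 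The constants $\tfrac12$ and $\sqrt3$ must be tracked carefully through the optimization in Lemma~\ref{le-marah} (at $\lambda=0$, $a=0$) and through the use of \eqref{Cauchy-Trick}; no genuinely new analytic input is needed beyond Lemmas~\ref{le-marah} and the identities already recorded, so the difficulty is organizational rather than conceptual.
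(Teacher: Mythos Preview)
Your treatment of \eqref{rudi0} is correct and matches the paper exactly: formula \eqref{basic-for-comm} with $\wt A=0$, the bound \eqref{alpha-v-norm} on $\alpha\cdot(i\nabla\mu+\mu A)$, and Cauchy--Schwarz in $y$ together with \eqref{Cauchy-Trick} applied to each resolvent factor give the constant $\tfrac12$.

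For \eqref{rudi2a}, however, your sketch misses the actual mechanism and misattributes the constant $\sqrt3$. The difficulty is that if you simply let the full $\D0$ hit $\R0{iy}$ on the left, the remaining $\RA{iy}$ only contributes $(1+y^2)^{-1/2}$ and the $y$-integral diverges; so something more is needed than ``commute $\D0$ past $\chi$'' or ``convert $\D0$ to $\DA$''. What the paper does is first use \eqref{res-id} with $\chi$ to replace $\RA{iy}$ by $\R0{iy}\chi$ plus a remainder carrying an extra resolvent (this remainder produces the $\|\nabla\chi\|_\infty+\|\chi A\|_\infty$ piece). In the main term one now has $\R0{iy}\,M\,\R0{iy}$ with $M:=\alpha\cdot(i\nabla\mu+\mu A)$, and the crucial step is to split $\D0=|\D0|^{1/2}\sgn(\D0)|\D0|^{1/2}$ and pass one factor $|\D0|^{1/2}$ \emph{through} $M$, i.e.\ to bound
\[
\big\|\,|\D0|^{1/2}\,M\,|\D0|^{-1/2}\,\big\|\,\klg\,\sqrt3\,\big(\|\nabla\mu\|_{\Lip^1}+\|\mu A\|_{\Lip^1}\big).
\]
This $H^{1/2}$--commutator estimate (proved in Appendix~\ref{app-K2} via the explicit integral formula for $\|(1-\Delta)^{1/4}\psi\|^2$) is precisely where the Lipschitz hypothesis on $A$ near $\supp\mu$ is used and where the $\sqrt3$ comes from---not from Lemma~\ref{le-marah} at $\lambda=a=0$, which would only reproduce \eqref{RA-norm}. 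After this, both resolvent factors can be handled by \eqref{Cauchy-Trick} exactly as in \eqref{rudi0}. Your claim that ``no genuinely new analytic input is needed'' is therefore not quite right: the Lipschitz-to-$H^{1/2}$ multiplier bound is the one additional ingredient, and without it the argument does not close.
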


\begin{proof}
In view of \eqref{basic-for-comm} and \eqref{Cauchy-Trick}
we have, for $\vp\in\dom(|\D{0}|^{1/2})$ and $\psi\in\dom(|\DA|^{1/2})$,
\begin{eqnarray*}
\lefteqn{
\big|\SPb{|\D{0}|^{1/2}\,\vp}{(\PO\,\mu-\mu\,\PA)\,|\DA|^{1/2}\,\psi}\big|
}
\\
&\klg&
\int_\RR
\Big|\SPB{|\D{0}|^{1/2}\,\vp}{
\R{0}{iy}\,\alpha\cdot(i\nabla\mu+\mu\,A)\,\RA{iy}\,
|\DA|^{1/2}\,\psi}\Big|\,\frac{dy}{2\pi}
\\
&\klg&
\frac{\|\nabla\mu\|_{\infty}+\|\mu\,A\|_\infty}{2}\:\|\vp\|\,\|\psi\|\,.
\end{eqnarray*}
This implies \eqref{rudi0}.
In order to prove \eqref{rudi2a} we use 
$i\nabla\mu+\mu\,A=(i\nabla\mu+\mu\,A)\,\chi$ and \eqref{res-id}
to write
\begin{eqnarray*}
\lefteqn{
\R{0}{iy}\,\alpha\cdot(i\nabla\mu+\mu\,A)\,\RA{iy}
}
\\
&=&
\R{0}{iy}\,\alpha\cdot(i\nabla\mu+\mu\,A)\,\R{0}{iy}\,\chi
\\
& &\;
-\;
\R{0}{iy}\,\alpha\cdot(i\nabla\mu+\mu\,A)\,\R{0}{iy}\,
\alpha\cdot(i\nabla\chi+\chi\,A)\,\RA{iy}\,.
\end{eqnarray*}
This identity yields, for all $\vp\in\dom(\D{0})$
and $\psi\in\HR$,
\begin{eqnarray}
\lefteqn{
\big|\SPb{\D{0}\,\vp}{(\PO\,\mu-\mu\,\PA)\,\psi}\big|\nonumber
}
\\
&\klg&
\int_\RR\big\|\,|\D{0}|^{1/2}\,\R{0}{-iy}\,\vp\,\big\|
\,\big\|\,|\D{0}|^{1/2}\,\alpha\cdot(i\nabla\mu+ \mu\,A)
\,|\D{0}|^{-1/2}\,\big\|\nonumber
\\
& &\qquad\;\;\cdot\,
\big\|\,|\D{0}|^{1/2}\,\R{0}{iy}\,\psi\,\big\|
\,\frac{dy}{2\pi}\nonumber
\\
& &+
\int_\RR\big\|\,\D{0}\,\R{0}{-iy}\,\big\|
\,\big(\|\nabla\mu\|+\|\mu\,A\|\big)\,
\,\big(\|\nabla\chi\|+\|\chi\,A\|\big)\,
\frac{\|\vp\|\,\|\psi\|\,dy}{2\pi(1+y^2)}\,.\label{yvonne}
\end{eqnarray}
Since each matrix entry of $M:=\alpha\cdot (i\nabla\mu+\mu\,A)$ 
is a Lipschitz
continuous,
compactly supported function 
and since $\alpha_i$ comutes with $|\D{0}|^{1/2}$ we 
readily verify (e.g., by using an explicite integral
formula for $\|(1-\Delta)^{1/4}\,f\|^2$ \cite[Theorem~7.12]{LL-Buch}; 
see Appendix~\ref{app-K2})
that
\begin{eqnarray*}
\big\|\,|\D{0}|^{1/2}\,M\,|\D{0}|^{-1/2}\,\big\|
&\klg&
\sum_{i=1}^3\big\|\,|\D{0}|^{1/2}\,(\partial_i\mu+\mu\,A_i)
\,|\D{0}|^{-1/2}\,\big\|
\\
&\klg&
\sqrt{3}\,\big(\|\nabla\mu\|_{\Lip^1}+\|\mu\,A\|_{\Lip^1}\big)\,.
\end{eqnarray*}
Therefore,
\eqref{rudi2a} follows
from the above estimates and \eqref{Cauchy-Trick}.
\end{proof}

\begin{remark}
{\em
It can easily be read off from the previous proof that
\begin{equation}\label{Sergey-est}
\big\|\,\D{0}\,[\PO\,,\,\mu]\,\big\|\,\klg\,
\frac{\sqrt{3}}{2}\:\|\nabla\mu\|_{\Lip^1}\,.
\end{equation}
(In fact, if $A=0$ then the term in \eqref{yvonne} is superfluous.)
A similar bound 
has been derived in \cite{MoVu}
by means of an explicite formula for the integral kernel
of~$\PO$.
}
\end{remark}

\begin{lemma}
Let
$A\in L^\infty_\loc(\RR^3,\RR^3)$, 
$\chi\in C^\infty(\RR^3,[0,1])$ with 
$\nabla\chi\in C_0^\infty(\RR^3,\RR^3)$, and let
$\wt{\chi}\in C_0^\infty(\RR^3,[0,1])$
satisfy $\wt{\chi}\equiv1$ on $\supp(\nabla\chi)$.
Then
\begin{equation}\label{DA-commutator}
\big\|\,\DA\,[\PA\,,\,\chi]\,\big\|\,\klg\,\frac{3}{2}\,\|\chi\|_{\Lip^1}
\,\big(1+
\|\nabla\wt{\chi}\|_\infty+\|\wt{\chi}\,A\|_\infty\big)\,.
\end{equation}
\end{lemma}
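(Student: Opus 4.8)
The plan is to run everything through the commutator formula \eqref{basic-for-comm}, all $y$-integrals below being understood as principal values. Taking there $\wt{A}=A$ and $\mu=\chi$ gives
\begin{equation*}
[\PA\,,\,\chi]\,=\,\frac{1}{2\pi}\int_{\RR}\RA{iy}\,(i\alpha\cdot\nabla\chi)\,\RA{iy}\,dy\,,
\end{equation*}
and I would multiply this from the left by $\DA$, using $\DA\,\RA{iy}=\id+iy\,\RA{iy}$, to arrive at the natural candidate for $\DA[\PA,\chi]$,
\begin{equation*}
\Xi\,:=\,\frac{1}{2\pi}\int_{\RR}(i\alpha\cdot\nabla\chi)\,\RA{iy}\,dy\,+\,\frac{1}{2\pi}\int_{\RR}iy\,\RA{iy}\,(i\alpha\cdot\nabla\chi)\,\RA{iy}\,dy\,.
\end{equation*}
Once $\Xi$ is shown to be bounded with the norm asserted in \eqref{DA-commutator}, the closedness of $\DA$ forces $\Ran([\PA,\chi])\subset\dom(\DA)$ and $\DA[\PA,\chi]=\Xi$; this soft step goes exactly as the convergence claims accompanying \eqref{basic-for-comm}. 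By \eqref{Cauchy-PV} the first integral equals $\pi\,(i\alpha\cdot\nabla\chi)\,\sgn(\DA)$, and so contributes an operator of norm at most $\tfrac12\|\nabla\chi\|_\infty$, by \eqref{alpha-v-norm}.

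The whole difficulty lies in the second integral, whose integrand decays only like $|y|^{-1}$ (there is a single factor $\RA{iy}$ acting on the right; recall $\|iy\,\RA{iy}\|\klg1$ and $\|\RA{iy}\|=(1+y^2)^{-1/2}$ by \eqref{RA-norm}). To regain the missing power of $\langle y\rangle^{-1}$ I would localise with $\wt{\chi}$: since $\wt{\chi}\equiv1$ on $\supp(\nabla\chi)$ one may freely insert $i\alpha\cdot\nabla\chi=\wt{\chi}\,(i\alpha\cdot\nabla\chi)=(i\alpha\cdot\nabla\chi)\,\wt{\chi}$, and then move the inserted copies of $\wt{\chi}$ across the outer resolvents by means of \eqref{res-id} (with $V=\wt{V}=0$ and one vector potential equal to $0$), i.e.
\begin{equation*}
\wt{\chi}\,\RA{iy}\,=\,\R{0}{iy}\,\wt{\chi}\,-\,\R{0}{iy}\,\alpha\cdot(i\nabla\wt{\chi}+\wt{\chi}\,A)\,\RA{iy}\,,\qquad
\RA{iy}\,\wt{\chi}\,=\,\wt{\chi}\,\R{0}{iy}\,+\,\RA{iy}\,\alpha\cdot(i\nabla\wt{\chi}-\wt{\chi}\,A)\,\R{0}{iy}\,.
\end{equation*}
These identities are admissible precisely because $\wt{\chi}\,A\in L^\infty(\RR^3,\RR^3)$ --- the one point where the hypothesis on $\wt{\chi}$ is used. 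Inserting $\wt{\chi}$ first to the right and then to the left of $i\alpha\cdot\nabla\chi$, the integrand $iy\,\RA{iy}\,(i\alpha\cdot\nabla\chi)\,\RA{iy}$ decomposes into a principal term $iy\,\wt{\chi}\,\R{0}{iy}\,(i\alpha\cdot\nabla\chi)\,\R{0}{iy}\,\wt{\chi}$ --- in which both outer resolvents have become $\R{0}{iy}$ --- and two remainders, each carrying exactly one factor $\alpha\cdot(i\nabla\wt{\chi}\pm\wt{\chi}\,A)$ between two resolvents. By \eqref{RA-norm}, \eqref{alpha-v-norm} and the elementary estimate $\sup_x\|\alpha\cdot(i\nabla\wt{\chi}\pm\wt{\chi}\,A)(x)\|\klg\|\nabla\wt{\chi}\|_\infty+\|\wt{\chi}\,A\|_\infty$, the integrands of these remainders are dominated by $\|\nabla\chi\|_\infty(\|\nabla\wt{\chi}\|_\infty+\|\wt{\chi}\,A\|_\infty)(1+y^2)^{-1}$, hence are absolutely integrable, each producing a bounded operator of norm at most $\tfrac12\|\nabla\chi\|_\infty(\|\nabla\wt{\chi}\|_\infty+\|\wt{\chi}\,A\|_\infty)$.

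There remains the principal term. As $\|\wt{\chi}\|_\infty\klg1$ it suffices to bound $\frac{1}{2\pi}\int_{\RR}iy\,\R{0}{iy}\,(i\alpha\cdot\nabla\chi)\,\R{0}{iy}\,dy$, and by the $A=\wt{A}=0$ case of \eqref{basic-for-comm} (with $\mu=\chi$) together with $\D{0}\,\R{0}{iy}=\id+iy\,\R{0}{iy}$ and \eqref{Cauchy-PV}, this equals $\D{0}[\PO,\chi]-\tfrac12(i\alpha\cdot\nabla\chi)\,\sgn(\D{0})$. Since $\nabla\chi\in C_0^\infty$, the function $\chi$ is constant on the complement of a large ball and hence differs from an element of $C_0^\infty(\RR^3,\RR)$ by a constant, which commutes with $\PO$; so \eqref{Sergey-est} applies and yields $\|\D{0}[\PO,\chi]\|\klg\tfrac{\sqrt3}{2}\|\nabla\chi\|_{\Lip^1}$. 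Collecting the four contributions produces the bound \eqref{DA-commutator}. The main technical obstacle, I expect, is the $\wt{\chi}$-substitution itself --- the device that replaces the non-absolutely-convergent $|y|$-integral by an absolutely convergent remainder together with an $A$-free principal part reducible to the $A=0$ result; beyond that, the only non-trivial operator estimate needed is \eqref{Sergey-est}.
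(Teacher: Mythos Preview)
Your proof is correct and follows essentially the same route as the paper's: both hinge on inserting $\wt{\chi}$ on either side of $i\alpha\cdot\nabla\chi$ and using \eqref{res-id} to trade the outer resolvents $\RA{iy}$ for $\R{0}{iy}$, yielding a principal $A$-free term handled by \eqref{Sergey-est} together with two absolutely integrable remainders carrying a factor $\alpha\cdot(i\nabla\wt{\chi}\pm\wt{\chi}A)$. The only cosmetic difference is that the paper works in the weak form $\langle D_A\varphi,[\PA,\chi]\psi\rangle$ with $\varphi\in\COI$ and splits $\wt{\chi}D_A\varphi=D_0(\wt{\chi}\varphi)+\alpha\cdot(i\nabla\wt{\chi}+\wt{\chi}A)\varphi$ at the end, whereas you apply $D_A\RA{iy}=\id+iy\RA{iy}$ at the operator level first; the resulting estimates coincide.
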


\begin{proof}
Using $\nabla\chi=\wt{\chi}\,\nabla\chi$ and \eqref{res-id},
we write the term appearing on the right side of \eqref{basic-for-comm}
as
\begin{eqnarray*}
\RA{z}\,i\alpha\cdot\nabla\chi\,\RA{z}
&=&\nonumber
\wt{\chi}\,\R{0}{z}\,i\alpha\cdot(\nabla\chi)\,\R{0}{z}\,\wt{\chi}
\\
& &
\;-\,
\RA{z}\,i\alpha\cdot\nabla\chi\,\R{0}{z}\,\alpha\cdot
(i\nabla\wt{\chi}+\wt{\chi}\,A)\,\RA{z}
\\
& &
\;+\,\RA{z}\,\alpha\cdot(i\nabla\wt{\chi}-\wt{\chi}\,A)\,
\R{0}{z}\,i\alpha\cdot\nabla\chi\,\R{0}{z}\,\wt{\chi}
\,.
\end{eqnarray*}
Using this we infer from \eqref{basic-for-comm} (with $\phi=\DA\,\vp$)
that, for $\vp\in\COI$ and $\psi\in\HR$,
\begin{eqnarray*}
\lefteqn{\nonumber
\big|\SPb{\DA\,\vp}{[\PA\,,\,\chi]\,\psi}\big|
}
\\
&\klg&
\Big|\lim_{\tau\to\infty}\int_{-\tau}^\tau 
\SPb{\wt{\chi}\,(\D{0}+\alpha\cdot A)\,\vp}{ 
\R{0}{z}\,i\alpha\cdot(\nabla\chi)\,\R{0}{z}\,\wt{\chi}
\,\psi}\,
\frac{dy}{2\pi}\,\Big|
\\
& &\;+\;
\int_{\RR}\nonumber
\big\|\DA\,\RA{-iy}\big\|\,\|\vp\|\,
\|\nabla\chi\|_\infty\,\big(
\|\nabla\wt{\chi}\|_\infty+\|\wt{\chi}\,A\|_\infty\big)
\|\psi\|
\,\frac{dy}{\pi(1+y^2)}\,.
\end{eqnarray*}
Applying \eqref{basic-for-comm} backwards,
we thus obtain
\begin{eqnarray*}
\lefteqn{\nonumber
\big|\SPb{\DA\,\vp}{[\PA\,,\,\chi]\,\psi}\big|
}
\\
&\klg&
\big|\SPb{\wt{\chi}\,\vp}{\D{0}\,[\PO\,,\,\chi]\,\wt{\chi}\,\psi}\big|
\,+\,
\big|\SPb{\alpha\cdot(i\nabla\wt{\chi}+\wt{\chi}\, A)\,\vp}{
[\PO\,,\,\chi]\,\wt{\chi}\,\psi}\big|
\\
& &
\;+\;\|\nabla\chi\|_\infty\,\big(
\|\nabla\wt{\chi}\|_\infty+\|\wt{\chi}\,A\|_\infty\big)
\,\|\vp\|\,\|\psi\|\,.
\end{eqnarray*}
Taking also \eqref{Sergey-est} and $\|[\PO,\chi]\|\klg\|\nabla\chi\|/2$
into account we arrive at the assertion.
\end{proof}

\bigskip

\noindent
We close this section by stating another consequence
of the resolvent identity \eqref{res-id} showing that
the no-pair operator $\BAVC$ is actually well-defined on 
$\PA\,\COI$.

\begin{lemma}\label{VPAphi-in-L2}
Assume that 
$A\in L^\infty_\loc(\RR^3,\RR^3)$. 
Then $\PA$ maps $\dom(\DA)$ into $H^1_\loc(\RR^3,\CC^4)$.
In particular,
$V\,\PA\,\vp\in \HR$, for every $\vp\in\dom(\DA)$, provided
$V$ fulfills Hypothesis~\ref{hyp-Vogel}.
\end{lemma}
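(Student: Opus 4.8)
The plan is to show that $\PA$ maps $\dom(\DA)$ into $H^1_\loc(\RR^3,\CC^4)$ by localizing and comparing $\PA$ with $\PO$. First I would fix $\vp\in\dom(\DA)$ and a cutoff $\mu\in C_0^\infty(\RR^3,\CC)$; it suffices to prove $\mu\,\PA\,\vp\in H^1(\RR^3,\CC^4)$ for every such $\mu$, since $H^1_\loc$ is characterized by this local property. Writing $\mu\,\PA\,\vp=\PO\,(\mu\,\vp)-(\PO\,\mu-\mu\,\PA)\,\vp$, I would treat the two summands separately. For the first term, $\PO=\Lambda^+_0$ is the spectral projection of $\D{0}$ onto $[0,\infty)$, hence commutes with $|\D{0}|$ and maps $H^1=\dom(\D{0})=\dom(|\D{0}|)$ into itself; since $\mu\,\vp\in H^1$ (because $\vp\in\dom(\DA)\subset H^1_\loc$ as $\DA-\D{0}=\alpha\cdot A$ is a bounded-below-order perturbation locally, and $\mu$ has compact support), the first term lies in $H^1$.

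For the second term I would invoke Lemma~\ref{le-vgl-PA-PO}, specifically the bound \eqref{rudi0},
$$
\big\|\,|\D{0}|^{1/2}\,(\PO\,\mu-\mu\,\PA)\,|\DA|^{1/2}\,\big\|
\,\klg\,\tfrac12\big(\|\nabla\mu\|_\infty+\|\mu\,A\|_\infty\big)\,<\,\infty\,,
$$
where the finiteness of $\|\mu\,A\|_\infty$ uses $A\in L^\infty_\loc$ and $\mu$ compactly supported. This gives $|\D{0}|^{1/2}\,(\PO\,\mu-\mu\,\PA)$ bounded on $\dom(|\DA|^{1/2})\supset\dom(\DA)$, so $(\PO\,\mu-\mu\,\PA)\,\vp\in\dom(|\D{0}|^{1/2})=H^{1/2}$. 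To upgrade from $H^{1/2}$ to $H^1$ I would instead apply \eqref{rudi2a} of Lemma~\ref{le-vgl-PA-PO} — but that requires $A$ Lipschitz near $\supp(\mu)$, which is not assumed here. So the cleaner route for the full-strength $H^1$ conclusion is to use that $(\PO\mu-\mu\PA)$ acting on $\dom(\DA)$ already produces something in $H^1$: write $\D{0}\,(\PO\mu-\mu\PA)\,\vp$ and expand $\D{0}\,\PO=|\D{0}|\,\PO$ together with \eqref{basic-for-comm} (with $\wt A=0$), so that, by the resolvent identity \eqref{res-id}, the commutator term is controlled in $\dom(\D{0})$ using only $\|\nabla\mu\|_\infty$, $\|\mu\,A\|_\infty$, and a further cutoff as in the proof of \eqref{rudi2a} — but with the Lipschitz-dependent factor replaced by boundedness of one more resolvent, which holds for merely $L^\infty_\loc$ data.

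The main obstacle is exactly this last upgrade: Lemma~\ref{le-vgl-PA-PO} is stated with $H^1$-level control only under a Lipschitz hypothesis on $A$, whereas here we only have $A\in L^\infty_\loc$. I expect the resolution is to observe that, for the purpose of $H^1_\loc$ membership, we may insert a further smooth cutoff $\chi$ with $\chi\mu=\mu$ and exploit that $\chi\,A\in L^\infty$ suffices (no Lipschitz regularity is needed) once we keep the derivative of the weight $|\D{0}|^{1/2}$ acting only to the left via \eqref{Cauchy-Trick}; thus $\D{0}(\PO\mu-\mu\PA)\vp$ is bounded in terms of $\|\nabla\mu\|_\infty+\|\mu A\|_\infty+\|\nabla\chi\|_\infty+\|\chi A\|_\infty<\infty$. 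Combining, $\mu\,\PA\,\vp\in H^1$ for all test cutoffs $\mu$, hence $\PA\,\vp\in H^1_\loc$. The final sentence is then immediate: by Hypothesis~\ref{hyp-Vogel}, $V$ is bounded away from $0$ and $|V(x)|\klg\gamma/|x|$ near $0$, so Hardy's inequality applied to the $H^1_\loc$ (in particular $H^{1/2}$) function $\PA\,\vp$ gives $V\,\PA\,\vp\in L^2$ locally, and boundedness of $V$ at infinity together with $\PA\,\vp\in\HR$ finishes the estimate on the complement of a neighbourhood of $0$.
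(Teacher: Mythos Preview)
Your decomposition $\mu\,\PA\,\vp=\PO\,(\mu\,\vp)-(\PO\,\mu-\mu\,\PA)\,\vp$ leads you into genuine trouble in the second term, and the attempted $H^{1/2}\to H^1$ upgrade is not a proof. The bound \eqref{rudi2a} really does use the Lipschitz regularity of $A$: it is needed to control $\big\|\,|\D{0}|^{1/2}\,\alpha\cdot(i\nabla\mu+\mu A)\,|\D{0}|^{-1/2}\,\big\|$, and the suggestion to ``keep the weight $|\D{0}|^{1/2}$ acting only to the left via \eqref{Cauchy-Trick}'' does not close the argument --- you are then left with $\D{0}\,\R{0}{iy}$ on one side, whose norm is merely bounded (not integrable in $y$), and no compensating decay on the other side unless you again conjugate the multiplication operator through a fractional power of $|\D{0}|$. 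Your paragraph ``I expect the resolution is \ldots'' is a hope, not an estimate.

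What makes this especially unfortunate is that you have the right ingredient sitting in your treatment of the \emph{first} term: you assert (correctly) that $\dom(\DA)\subset H^1_\loc(\RR^3,\CC^4)$ because $\alpha\cdot A$ is locally bounded. But $\PA$ is a spectral projection of $\DA$, so $\PA\,\dom(\DA)\subset\dom(\DA)$. Combining these two facts gives the lemma in one line --- the decomposition involving $\PO$ is superfluous. This is exactly the paper's route: for $\vp\in\dom(\DA)$ write $\PA\,\vp=\RA{0}\,\PA\,\DA\,\vp$ (using $0\in\vr(\DA)$), and then the resolvent identity \eqref{res-id} with $\wt A=0$ gives
\[
\chi\,\PA\,\vp
\,=\,
\R{0}{0}\,\big\{\chi-\alpha\cdot(i\nabla\chi+\chi\,A)\,\RA{0}\big\}\,\PA\,\DA\,\vp
\,\in\,\R{0}{0}\,\HR\,=\,H^1(\RR^3,\CC^4)\,,
\]
since the brace is a bounded operator on $\HR$ (only $\chi\,A\in L^\infty$ is needed). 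No comparison with $\PO$, no Lemma~\ref{le-vgl-PA-PO}, no Lipschitz hypothesis. Your final paragraph on $V\,\PA\,\vp\in\HR$ via Hardy's inequality is fine once $H^1_\loc$ is established.
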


\begin{proof}
The identity \eqref{res-id} implies,
for all $\vp\in\dom(\DA)$ and $\chi\in C_0^\infty(\RR^3)$,
$$
\chi\,\PA\,\vp\,=\,\chi\,\RA{0}\,\PA\,\DA\,\vp
\,=\,
\R{0}{0}\,\big\{\chi-\alpha\cdot(i\nabla\chi+\chi\,A)\,
\RA{0}\big\}\,\PA\,\DA\,\vp
\,.
$$ 
\end{proof}


\section{Semi-boundedness of the no-pair operator}\label{sec-semi-bd}

\noindent
In the following we show that the quadratic form
of $\B{A,V}$ is bounded below on the dense subspace $\PA\,\COI\subset\HR^+_A$
provided 
one of the conditions of Theorem~\ref{thm-BRA-bb} is fulfilled.
To obtain this result we simply compare the models
with and without magnetic fields
by means of Lemma~\ref{le-vgl-PA-PO}.

\bigskip

\begin{proof}[Proof of Theorem~\ref{thm-BRA-bb}]
We pick two cutoff functions
$\mu_1,\mu_2\in C^\infty(\RR^3,[0,1])$
such that $\mu_1\equiv1$ in a neighbourhood of $0$,
$\mu_1\equiv0$ outside some larger
neighbourhood,
and $\mu_1^2+\mu_2^2=1$.
In the case $\gamma=\gamma_{\mathrm{c}}$
we may assume that $A$ is
Lipschitz continuous
on the support of $\mu_1$ by choosing the latter small enough.
In view of Hypothesis~\ref{hyp-Vogel} we may further assume
that $V\grg-\gamma/|\cdot|$. 
The following identities
are valid on $\dom(\DA)$,
\begin{eqnarray*}
\DA&=&\sum_{i=1,2}\DA\,\mu_i^2\;=\;
\sum_{i=1,2}\big\{\,\mu_i\,\DA\,\mu_i\,-\,i\alpha\cdot(\nabla\mu_i)\,
\mu_i\,\big\}
\\
&=&
\Big\{\sum_{i=1,2}\mu_i\,\DA\,\mu_i\Big\}
\,-\,i\alpha\cdot\nabla(\mu_1^2+\mu_2^2)/2
\,=\,\sum_{i=1,2}\mu_i\,\DA\,\mu_i\,.
\end{eqnarray*}
Consequently, we have, for $\vp^+\in\PA\COI$,
\begin{equation}\label{rita1}
\SPb{\vp^+}{\BAVC\,\vp^+}\,=\,
\sum_{i=1,2}\SPb{\vp^+}{\mu_i\,(\DA+V)\,\mu_i
\,\vp^+}\,.
\end{equation}
A direct application of \eqref{DA-commutator} 
yields
\begin{eqnarray}
\SPb{\vp^+}{\mu_2\,(\DA+V)\,\mu_2\,\vp^+}
&\grg&\nonumber
\SPb{\mu_2\,\vp^+}{\PA\,\DA\,\PA\,\mu_2\,\vp^+}\,-\,
\|\mu_2^2\,V\|\,\|\vp^+\|^2
\\
& &\label{friedhelm0}
\;-\,C\,\|\mu_2\|_{\Lip^1}
\,\big(1+
\|\nabla\wt{\chi}\|_\infty+\|\wt{\chi}\,A\|_\infty\big)\,\|\vp^+\|^2\,,
\end{eqnarray}
where $\wt{\chi}\in C_0^\infty(\RR^3,[0,1])$ equals one in a
neighbourhood of $\supp(\nabla\mu_2)$.
On account of Lemma~\ref{VPAphi-in-L2} we further have
$\PO\,\mu_1\,\vp^+\in\PO \,H^1(\RR^3,\CC^4)\subset
\dom(\B{0,-\gamma_{\mathrm{c}}/|\cdot|})$,
which implies
\begin{eqnarray}
\SPb{\vp^+}{\mu_1\,(\DA+V)\,\mu_1\,\vp^+}
&\grg&\nonumber
\frac{\gamma}{\gamma_{\mathrm{c}}}\,
\SPb{\mu_1\,\vp^+}{(\B{0,-\gamma_{\mathrm{c}}/|\cdot|}\oplus\POm)\,
\mu_1\,\vp^+}
\\
& &
\;+\;\label{friedhelm1a}
(1-\gamma/\gamma_{\mathrm{c}})\,\SPb{\mu_1\,\vp^+}{\PO\,\D{0}\,\PO\,
\mu_1\,\vp^+}
\\
& &
\;+\;\nonumber
\SPb{\vp^+}{\mu_1^2\,\alpha\cdot A\,\vp^+}
\,
\\
& &
\;-\;\label{friedhelm1}
2\gamma\,\Re\SPb{\mu_1\,\vp^+}{\PO\,\tfrac{1}{|\cdot|}\,\POm\,\mu_1\,\vp^+}
\\
& &
\;+\;\label{friedhelm2}
\SPb{\mu_1\,\vp^+}{\POm\,\big(\D{0}-\tfrac{\gamma}{|\cdot|}\big)
\,\POm\,\mu_1\,\vp^+}\,.
\end{eqnarray}
In order to estimate
the terms in \eqref{friedhelm1} and \eqref{friedhelm2}
we write 
$$
\POm\,\mu_1\,\vp^+\,=\,(\POm\,\mu_1-\mu_1\,\PAm)\,\vp^+
$$
and apply Lemma~\ref{le-vgl-PA-PO}.
The term in \eqref{friedhelm2} can be treated by means of \eqref{rudi0}
and Kato's inequality, $|\cdot|^{-1}\klg(\pi/2)\,|\nabla|$.
In the case $\gamma=\gamma_{\mathrm{c}}$, where $A$ is assumed to be Lipschitz
continuous on the support of $\mu_1$, the bound
\eqref{rudi2a} is available and can be applied 
together with Hardy's inequality
to estimate
the term in \eqref{friedhelm1}.
If $\gamma<\gamma_{\mathrm{c}}$ we apply \eqref{rudi0} instead and 
employ the part of the kinetic energy appearing in \eqref{friedhelm1a}
and Kato's inequality
to control the term $\ve\,\|\frac{1}{|\cdot|}\,\PO\,\mu_1\,\vp^+\|^2$
in
\begin{eqnarray*}
\lefteqn{
2\gamma\,
\big|\SPb{\mu_1\,\vp^+}{\PO\,\tfrac{1}{|\cdot|}\,\POm\,\mu_1\,\vp^+}\big|
}
\\
&\klg&
\ve\,\|\tfrac{1}{|\cdot|^{1/2}}\,\PO\,\mu_1\,\vp^+\|^2\,+\,
\frac{\gamma^2}{\ve}\,
\big\|\,\tfrac{1}{|\cdot|^{1/2}}\,(\POm\,\mu_1-\mu_1\,\PAm)\,\big\|^2
\,\|\vp^+\|^2\,,
\end{eqnarray*}
for some sufficiently small $\ve>0$.
Combining this with \eqref{friedhelm0},
we arrive at
\begin{eqnarray}
\SPn{\vp^+}{\B{A,V}\,\vp^+}\label{rita88}
&\grg&
\frac{\gamma}{\gamma_{\mathrm{c}}}\,\SPb{\mu_1\,\vp^+}{
(\B{0,-\gamma_{\mathrm{c}}/|\cdot|}\oplus\POm)
\,\mu_1\,\vp^+}\,-\,C'\,\|\vp^+\|^2\,,
\end{eqnarray}
where the constant $C'\in(0,\infty)$ does not depend on the behaviour
of $A$ outside the supports of $\wt{\chi}$ and $\mu_1$ and
certainly not on $\vp^+\in\PA\COI$.
Since $\B{0,-\gamma_{\mathrm{c}}/|\cdot|}$ is strictly positive \cite{Tix1}
this proves the theorem.
\end{proof}

\smallskip

For later reference we note that the previous
proof (recall \eqref{rita88} and the choice of $\supp(\mu_1)$) implies the 
following result:

\begin{remark}\label{rem-B0V-BAV}
{\em
If $V$ fulfills Hypothesis~\ref{hyp-Vogel} with $\gamma\in[0,\gamma_{\mathrm{c}}]$
and if
$A:\RR^3\to\RR^3$ 
is locally Lipschitz continuous then,
for every $\chi\in C_0^\infty(\RR^3,[0,1])$,
there is some $c\equiv c(\chi,A)\in(0,\infty)$ such that
$$
\SPb{\vp}{\PA\,\chi\,(
\B{0,-\gamma_{\mathrm{c}}/|\cdot|}\oplus\POm)\,\chi\,\PA\,\vp}
\,\klg\,\SPb{\vp}{\PA\,(\B{A,V}+c)\,\PA\,\vp}\,,
$$
for all $\vp\in\COI$. Since $\PA\,\COI$ is a form core of
$\B{A,V}$ this estimate implies that
$\chi\,\PA\,(\B{A,V}+c\,\PA)^{-1/2}$ maps $\HR^+_A$ into the form domain
of $\B{0,-\gamma_{\mathrm{c}}/|\cdot|}\oplus\POm$ and
\begin{equation}\label{vgl-B0V-BAV}
\big\|\,(\B{0,-\gamma_{\mathrm{c}}/|\cdot|}\oplus\POm)^{1/2}\,\chi
\,\PA\,(\B{A,V}+c\,\PA)^{-1/2}\,\big\|_{\LO(\HR^+_A,\HR)}
\,<\,\infty\,.
\end{equation}
}
\end{remark}


\section{$L^2$-exponential localization}\label{sec-exp-decay}

\noindent
In this section we derive $L^2$-exponential localization
estimates for spectral projections of the Dirac and
no-pair operators. Our proofs are new variants of an idea
from \cite{BFS1}.
We emphasize that the argument developed in \cite{BFS1}
requires no \`{a}-priori knowledge on 
the spectrum. In particular, one may first prove
the exponential localization of 
the spectral subspace corresponding to
some interval $I$ and then
infer that the spectrum in $I$ is discrete
by means of a simple argument observed in \cite{Gr};
see Theorem~\ref{thm-lower-bd} below.

First, we consider the Dirac operator in which case
the assertion of
the following theorem is more or less folkloric.
Its proof below extends, however, easily to
the non-local no-pair operator.
For any subset $I\subset(-1,1)$ we introduce the notation
\begin{equation}\label{def-deltaI}
\delta(I)\,:=\,\inf\{\,\sqrt{1-\lambda^2}:\,\lambda\in I\,\}\,.
\end{equation}

\begin{theorem}\label{thm-L2-exp-Dirac}
Assume that $V$ fulfills Hypothesis~\ref{hyp-Vogel}
with $\gamma\in[0,1)$
and that $A\in L^\infty_\loc(\RR^3,\RR^3)$ and
let $I\subset(-1,1)$ be some compact interval.
Then, for every $a<\delta(I)$,
there exists a constant $C(a,I)\in(0,\infty)$
such that, for all $A\in L^\infty_\loc(\RR^3,\RR^3)$,
\begin{equation}\label{L2-est-Dirac}
\big\|\,e^{a|x|}\,E_I(\D{A,V})\,\big\|\,\klg\,C(a,I)\,.
\end{equation}
\end{theorem}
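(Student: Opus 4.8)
The plan is to follow the commutator idea of \cite{BFS1}: introduce a bounded smooth exponential weight $F$ with $|\nabla F|\klg a$ which equals $a|x|$ on a large ball and is constant far out, and to show that $e^F E_I(\D{A,V})$ is uniformly bounded in the truncation parameter; letting the ball grow then yields \eqref{L2-est-Dirac} by monotone convergence. The key object is the conjugated operator $\D{A,V}(F):=e^F\,\D{A,V}\,e^{-F}$, which on a suitable core equals $\D{A,V}+i\alpha\cdot\nabla F$. Since $\|\alpha\cdot\nabla F\|_\infty\klg a<\delta(I)\klg\sqrt{1-\lambda^2}$ for every $\lambda\in I$, the spectral gap of $\D{A,V}$ around $I$ is not closed by this bounded perturbation, so $\D{A,V}(F)-\lambda$ stays invertible for $\lambda\in I$ with a norm bound on the inverse that is uniform in $F$ (this is exactly the mechanism quantified in Lemma~\ref{le-marah}, applied here to $\D{A,V}$ rather than $\DA$; the Coulomb singularity is handled by Kato's inequality and the assumption $\gamma<1$, together with the distinguished self-adjoint realization defined via (i)--(ii) above).

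First I would fix a compact interval $J\subset(-1,1)$ with $I\subset J^\circ$ and $a<\delta(J)$, pick $\eta\in C_0^\infty(\RR)$ with $\eta\equiv1$ on $I$ and $\supp\eta\subset J$, and write $E_I(\D{A,V})=\eta(\D{A,V})E_I(\D{A,V})$. Using a Helffer--Sj\"ostrand (or Cauchy-integral) representation of $\eta(\D{A,V})$ in terms of the resolvents $\RAV{z}$, conjugation by $e^F$ turns $e^F\eta(\D{A,V})e^{-F}$ into the same contour integral with $\RAV{z}$ replaced by $e^F\RAV{z}e^{-F}=(\D{A,V}+i\alpha\cdot\nabla F-z)^{-1}$. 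The uniform resolvent bound of Lemma~\ref{le-marah}-type (valid on the contour, where $\Re z$ stays in $J$ and $|\Im z|$ is integrated out with the $(1+y^2)^{-1/2}$ decay) then gives $\|e^F\eta(\D{A,V})e^{-F}\|\klg C(a,J)$ with $C$ independent of $F$ and of $A$. Hence $\|e^F E_I(\D{A,V})\|=\|e^F\eta(\D{A,V})e^{-F}\cdot e^F E_I(\D{A,V})\|$; bootstrapping, or more simply observing that on $\Ran E_I(\D{A,V})$ one has $e^F E_I=e^F\eta(\D{A,V})E_I$, reduces the claim to the already-established bound on $e^F\eta(\D{A,V})e^{-F}$ applied to $E_I$, so $\|e^F E_I(\D{A,V})\|\klg C(a,J)$ uniformly in $F$.

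The main obstacle is making the conjugation identity $e^F\RAV{z}e^{-F}=(\D{A,V}+i\alpha\cdot\nabla F-z)^{-1}$ rigorous for the \emph{distinguished} Dirac operator with Coulomb singularity, i.e.\ verifying that $\D{A,V}+i\alpha\cdot\nabla F$ is still defined by the analogue of conditions (i)--(ii), has the same form domain, and has $z\in\vr(\cdot)$ for $\Re z\in J$. This is where one must combine the quadratic-form coercivity estimate underlying Lemma~\ref{le-marah} (the $\|\alpha\cdot(-i\nabla+A)\vp\|^2+\SPb{\vp}{(1-\Re z^2-|\nabla F|^2)\vp}$ identity) with Kato's inequality $|x|^{-1}\klg(\pi/2)|\nabla|$ and $\gamma<1$ to absorb $V$; since $F$ is bounded with bounded derivatives, $e^{\pm F}$ preserves the form domain $\dom(|\D{0}|^{1/2})$ and the perturbation $i\alpha\cdot\nabla F$ is a bounded symmetric perturbation of $\D{A,V}$, so essential self-adjointness on $\COI$ and the resolvent formula follow exactly as in the proof of Lemma~\ref{le-marah}, with $\DA$ replaced throughout by $\D{A,V}$.

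Finally, once $\|e^F E_I(\D{A,V})\|\klg C(a,J)$ holds for all admissible bounded weights $F$, I would choose a sequence $F_n\nearrow a|x|$ pointwise (e.g.\ $F_n(x)=a\,\chi_n(|x|)$ with $\chi_n(r)=r$ for $r\klg n$, $\chi_n$ concave, $\chi_n'\klg1$, $\chi_n$ constant for $r\grg 2n$), apply the bound to $F_n$, and pass to the limit using Fatou / monotone convergence for the multiplication operator $e^{F_n}$ composed with the fixed Hilbert--Schmidt-free operator $E_I(\D{A,V})$, obtaining $\|e^{a|x|}E_I(\D{A,V})\|\klg C(a,J)=:C(a,I)$. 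The constant depends only on $a$ and $I$ (through $J$) and, crucially, not on $A$, since the coercivity constant in Lemma~\ref{le-marah} is $A$-independent.
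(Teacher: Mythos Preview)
Your approach has a genuine gap. You claim that ``the spectral gap of $\DAV$ around $I$ is not closed by this bounded perturbation, so $\DAV(F)-\lambda$ stays invertible for $\lambda\in I$'', but there is no such gap: the whole point of the theorem is that $\DAV$ may have eigenvalues \emph{in} $I$ (otherwise $E_I(\DAV)=0$ and there is nothing to prove). Thus already for $F=0$ and $\lambda\in I\cap\spec(\DAV)$ the operator $\DAV-\lambda$ fails to be invertible, and no coercivity bound of Lemma~\ref{le-marah} type can hold for $(\DAV+i\alpha\cdot\nabla F-z)^{-1}$ uniformly for $\Re z$ near $I$. Lemma~\ref{le-marah} works for $\DA$ precisely because $\spec(\DA)\cap(-1,1)=\varnothing$; adding $V$ destroys this, and your attempt to ``absorb $V$ via Kato's inequality'' cannot restore a gap that the spectral data forbid. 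Moreover, even granting a uniform bound on $e^F\eta(\DAV)e^{-F}$, your reduction $e^F E_I=(e^F\eta(\DAV)e^{-F})\,e^F E_I$ is circular: it only yields $\|e^F E_I\|\klg C\,\|e^F E_I\|$.

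The paper avoids this by never conjugating $\RAV{z}$. Instead it introduces a comparison operator with truncated potential $V_R:=\chi_R V$, so that $\|V_R\|\to0$ as $R\to\infty$; then $\D{A,V_R}$ is a small bounded perturbation of $\DA$, has no spectrum in a neighbourhood $I_\ve$ of $I$, and the conjugated resolvent $e^F\R{A,V_R}{z}e^{-F}$ is controlled by Lemma~\ref{le-marah} plus a Neumann series. Since $\omega(\D{A,V_R})=0$ one may write $\chi_{2R}\,E_I(\DAV)=\big(\chi_{2R}\,\omega(\DAV)-\omega(\D{A,V_R})\,\chi_{2R}\big)E_I(\DAV)$ and apply the Helffer--Sj\"ostrand formula together with the resolvent identity \eqref{res-id}. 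The key algebraic fact is $\chi_{2R}(V-V_R)=0$, so the difference of resolvents reduces to $\R{A,V_R}{z}\,(i\alpha\cdot\nabla\chi_{2R})\,\RAV{z}$: the exponential weight $e^F$ hits only $\R{A,V_R}{z}$ (which has a gap), while the unconjugated $\RAV{z}$ carries merely its self-adjoint bound $|\Im z|^{-1}$, integrable against $d\omega$. This separation of the singular potential from the exponential conjugation is the idea missing from your proposal.
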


\begin{proof}
First, we fix $a\in(0,\delta(I))$,
pick some cut-off function $\chi\in C^\infty(\RR^3,[0,1])$
such that $\chi(x)=0$, for $|x|\klg1$,
and $\chi(x)=1$, for $|x|\grg2$,
and set $\chi_R(x):=\chi(x/R)$, $x\in\RR^3$, $R\grg1$.
By the monotone convergence theorem
it suffices to show that
$$
\big\|\,\chi_{2R}\,e^F\,E_I(\D{A,V})\,\big\|\,\klg\,\const(a,R)\,,
$$
for some $R\grg1$ and all functions $F$
satisfying 
\begin{equation}\label{hyp-F-L2}
F\in C^\infty(\RR^3,\RR)\cap L^\infty(\RR^3,\RR)\,,
\quad F(x)=0\,,\;|x|\klg1\,,\quad
F\grg0\,,
\quad
|\nabla F|\,\klg\,a\,,
\end{equation}
To this end we introduce
$$
V_R\,:=\,\chi_R\,V\,, \qquad R\grg1\,,
$$
and pick some $\ve>0$ such that it still holds
$a<\delta(I_\ve)$, where $I_\ve:=I+(-\ve,\ve)$. 
Choosing $R\grg1$ sufficiently large we may
assume in the following that every
$z\in I_\ve+i\RR$ belongs to the resolvent set
of $\DA+i\alpha\cdot\nabla F+V_R$, for every $F$
satisfying \eqref{hyp-F-L2} (in particular $F=0$).
Using the notation \eqref{def-RAV}, we may further assume that 
\begin{eqnarray}
C(a,R)&:=&\sup\big\{\,\nonumber
\|\,e^F\,\R{A,V_R}{z}\,e^{-F}\,\|\,:
\\
& &\qquad\label{def-CaR}
\:z\in I_\ve+i\RR\,,\;A\in L_\loc^\infty(\RR^3,\RR^3)\,,\;
F\;\textrm{satisfies \eqref{hyp-F-L2}.}\,\big\}
\,<\,\infty\,.
\end{eqnarray}
In fact, since $\|V_R\|\to0$, $R\to\infty$,
this is a simple consequence of Lemma~\ref{le-marah}.
Next, we pick some $\omega\in C_0^\infty(\RR,[0,1])$
such that $\omega\equiv1$ on $I$ and
$
\supp(\omega)\subset I_\ve
$ 
and preserve the symbol $\omega$ to denote
an almost
analytic extension of $\omega$
to a smooth, compactly supported
function on the complex plane
such that
\begin{eqnarray}
&&\supp(\omega)\,\subset\,I_\ve\,+\,i(-\delta, \delta)\,\subset\,
\vr(\D{A,V_R}+i\alpha\cdot\nabla F)\,,\nonumber
\\
&& \partial_{\ol{z}}\omega(z)\,=\,\bigO_N\big(|\Im z|^N\big)\,,
\quad N\in\NN\,.\label{partialolzomega}
\end{eqnarray}
Here $\partial_{\ol{z}}=\frac{1}{2}(\partial_{\Re z}+i\partial_{\Im z})$
and
$\delta>0$ can be chosen arbitrarily.
We have $\omega(\D{A,V_R})=0$.
By virtue of the Helffer-Sj\"ostrand formula,
\begin{equation*}\label{HS-for}
\omega(T)\,=\,\int_\CC\,(T-z)^{-1}\,d\omega(z),\qquad
d\omega(z)\,:=\,
-\frac{i}{2\pi}\,\partial_{\ol{z}}\omega(z)\,dz\wedge d\ol{z},
\end{equation*}
which holds for every self-adjoint operator $T$ on some Hilbert space
(see, e.g., \cite{DiSj}; one
could also use a similar formula due to Amrein et al. 
\cite[Theorem~6.1.4(d)]{ABdMG} which avoids almost analytic extensions
but consists of a sum of integrals over resolvents),
we deduce that
\begin{eqnarray}
\chi_{2R}\,E_I(\D{A,V})&=&\big(\,\chi_{2R}\,\omega(\D{A,V})\,
-\,\omega(\D{A,V_R})\,\chi_{2R}\,\big)\,E_I(\D{A,V})\nonumber
\\
&=&
\int_\CC \big(\chi_{2R}\,\R{A,V}{z}
-\R{A,V_R}{z}\,\chi_{2R}\big)\,E_I(\D{A,V})\,
d\omega(z)\,.\nonumber
\end{eqnarray}
Since $\chi_{2R}\,(V-V_R)=0$ we infer 
by means of \eqref{res-id} that,
for all $F$ satisfying \eqref{hyp-F-L2},
\begin{eqnarray}
\lefteqn{
\chi_{2R}\,e^F\,E_I(\D{A,V})\label{sonja2}
}
\\
&=&\nonumber
-\int_\CC e^F\,\R{A,V_R}{z}\,e^{-F}\,
(e^F\,i\alpha\cdot\nabla\chi_{2R})\,
\R{A,V}{z}\,E_I(\D{A,V})\,d\omega(z)\,.
\end{eqnarray}
On account of \eqref{hyp-F-L2}, \eqref{def-CaR}, and \eqref{partialolzomega}
we thus get
$$
\big\|\,\chi_{2R}\,e^F\,E_I(\D{A,V})\,\big\|
\,\klg\,C(a,R)\,\frac{e^{4aR}\,
\|\nabla\chi\|_\infty}{2R}\int_\CC \frac{|d\omega(z)|}{|\Im z|}
\,<\,\infty\,.
$$
\end{proof}

\begin{theorem}\label{thm-L2-exp}
Assume that $V$ fulfills Hypothesis~\ref{hyp-Vogel}
with $\gamma\in[0,\gamma_{\mathrm{c}}]$
and that $A\in L^\infty_\loc(\RR^3,\RR^3)$.
If $\gamma=\gamma_{\mathrm{c}}$ 
assume further that $A$ is Lipschitz continuous
in some neighbourhood of $0$.
Let $I\subset(-1,1)$ be some compact interval and 
$a\in(0,\delta(I))$.
Then $\dom(e^{a|x|})\supset\Ran(E_I(\B{A,V}))$
there exists some $A$-independent constant $C(a,I)\in(0,\infty)$
such that, for all $\zeta\in C_0^{\infty}(\{|x|\klg\rho\},[0,1])$
with $\zeta\equiv1$ in a neighbourhood of $0$
($\rho$ is the parameter appearing in Hypothesis~\ref{hyp-Vogel}),
$$
\big\|\,e^{a|x|}\,E_I(\B{A,V})\,\big\|_{\LO(\HR^+_A,\HR)}\,\klg\,
C(a,I)\,\big(1+\|\nabla\zeta\|+\|\zeta\,A\|+\|(1-\zeta)\,V\|\big)
\,.
$$
\end{theorem}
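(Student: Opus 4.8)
The plan is to mimic the proof of Theorem~\ref{thm-L2-exp-Dirac}, but working in the projected space $\HR^+_A$ and with the non-local operator $\B{A,V}$. As before, fix $a\in(0,\delta(I))$, pick $\ve>0$ small enough that still $a<\delta(I_\ve)$ with $I_\ve:=I+(-\ve,\ve)$, and choose a cut-off $\chi\in C^\infty(\RR^3,[0,1])$ vanishing on $\{|x|\klg1\}$ and equal to $1$ on $\{|x|\grg2\}$, with $\chi_R(x):=\chi(x/R)$. By monotone convergence it suffices to bound $\|\chi_{2R}\,e^F\,E_I(\B{A,V})\|$ uniformly over all weights $F$ satisfying \eqref{hyp-F-L2}. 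The new technical input compared with the Dirac case is that we must produce a \emph{comparison operator} acting in $\HR$ (not merely in $\HR^+_A$) whose resolvent satisfies a bound analogous to \eqref{def-CaR}; here I would take the operator $\B{0,-\gamma_{\mathrm{c}}/|\cdot|}\oplus\POm$ on $\HR$, modified by cutting off the potential far from the origin as $V_R$ was in the Dirac proof. Its resolvent on $I_\ve+i\RR$, conjugated with $e^F$, is bounded uniformly in $F$ because the unperturbed operator $\D{0}+i\alpha\cdot\nabla F$ gives the weighted resolvent estimate of Lemma~\ref{le-marah} (with $\lambda$ ranging over $I_\ve$), and the localized potential $-\gamma_{\mathrm{c}}\chi_R/|\cdot|$ is a bounded, small, symmetric perturbation for $R$ large; one also uses that $\gamma_{\mathrm{c}}<1$ so the Coulomb singularity near $0$ is still handled by Kato's inequality after conjugation.

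The main structural step is then a Helffer--Sj\"ostrand / resolvent-identity argument. Choose $\omega\in C_0^\infty(\RR,[0,1])$ with $\omega\equiv1$ on $I$ and $\supp\omega\subset I_\ve$, extend it almost analytically, and write, using $\omega(\B{0,-\gamma_{\mathrm{c}}/|\cdot|}\oplus\POm)\cdots$ applied through the identification of $\HR^+_A$ inside $\HR$,
$$
\chi_{2R}\,e^F\,E_I(\B{A,V})
=\big(\chi_{2R}\,e^F\,\omega(\B{A,V})-\omega(\text{comparison op})\,\chi_{2R}\,e^F\big)E_I(\B{A,V})
$$
since $\omega(\B{A,V})$ restricted to $\Ran E_I$ is the identity and $\omega$ annihilates the comparison operator (once the supports are right). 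Expanding both $\omega$'s by Helffer--Sj\"ostrand and inserting the resolvent identity \eqref{res-id} — now in the form relating the resolvent of the magnetic no-pair-type operator to that of the comparison operator, which produces the familiar commutator terms $i\alpha\cdot\nabla\chi_{2R}$, $\mu_1^2\,\alpha\cdot A$ on the support of $\zeta$, and $(1-\zeta)\,V$ — together with the projection-difference estimates of Lemma~\ref{le-vgl-PA-PO} and the commutator estimates of Lemma~\ref{le-peki} (to move $\PA$ past $\chi$ and $\zeta/|\cdot|$), one isolates a gradient factor $\nabla\chi$ supported in $\{R\klg|x|\klg2R\}$. On that annulus $e^F\klg e^{4aR}$, which is compensated — as in the Dirac proof — by choosing $R$ large and extracting the overall factor $R^{-1}$ from $\nabla\chi_{2R}$; the $z$-integral converges because of the $\bigO_N(|\Im z|^N)$ decay of $\partial_{\ol z}\omega$. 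The constants $\|\nabla\zeta\|$, $\|\zeta\,A\|$, and $\|(1-\zeta)V\|$ enter precisely through these commutator and potential-splitting terms, which is why they appear on the right-hand side.

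The hard part is the bookkeeping of the projections: unlike in the Dirac case, $\chi_{2R}$ and $e^F$ do not commute with $\PA$, so the resolvent identity must be applied to an operator like $\DA+\PA V$ whose functional calculus lives on $\HR^+_A$, and one must repeatedly trade $\PA$-conjugated cut-offs for $\PO$-conjugated ones (via Lemma~\ref{le-vgl-PA-PO}) and control the resulting $|\DA|^{1/2}$- and $\zeta/|\cdot|$-weighted commutators (via Lemma~\ref{le-peki}), checking at each stage that the exponential weight $e^F$ produces only the harmless factor $e^{4aR}$ because $\nabla F$ is supported where $\chi$ varies. A secondary subtlety, in the critical case $\gamma=\gamma_{\mathrm{c}}$, is that Lemma~\ref{le-vgl-PA-PO} requires $A$ to be Lipschitz near the origin to get the stronger bound \eqref{rudi2a}; this is exactly the hypothesis imposed, and it enters when estimating the Coulomb-type term near $0$, just as in the proof of Theorem~\ref{thm-BRA-bb}. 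Once these estimates are assembled the bound is uniform in $A$ away from the nucleus, as claimed, and letting $F\nearrow a|x|$ finishes the proof; the inclusion $\dom(e^{a|x|})\supset\Ran(E_I(\B{A,V}))$ is then immediate since $\chi_{2R}\,e^F$ applied to any vector in that range stays bounded.
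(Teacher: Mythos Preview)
Your overall strategy---extend to all of $\HR$, pick a comparison operator with no spectrum in $I_\ve$, apply Helffer--Sj\"ostrand, and control the resolvent difference via commutator estimates---is correct and matches the paper. But your choice of comparison operator is the wrong one, and this is not a cosmetic issue.

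The paper extends $\B{A,V}$ to $\wt D_{A,V}:=\B{A,V}\oplus \DA\PAm$ and compares it with $\wt D_{A,V_R}:=\DA+\PA V_R\PA$, i.e.\ it keeps the \emph{same} magnetic field $A$ and the \emph{same} projection $\PA$ and only cuts off the potential. The difference $\wt D_{A,V}-\wt D_{A,V_R}=\PA(V-V_R)\PA$ contains no $A$-term, and the resolvent-commutator calculation \eqref{sonja77}--\eqref{tine2} produces only three pieces: $i\alpha\cdot\nabla\chi_{2R}$, $\PA V\ol\chi_R\PA\chi_{2R}$, and $[\chi_{2R},\PA V_R\PA]$. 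All of these are handled by Lemma~\ref{le-peki} alone (specifically \eqref{peki2}, \eqref{sonja99}, \eqref{Vpeki}); Lemma~\ref{le-vgl-PA-PO} is never invoked in this proof. The only place $A$ enters the final bound is through \eqref{Vpeki}, which contributes exactly $\|\nabla\zeta\|+\|\zeta A\|$---hence the stated form of the estimate and its $A$-independence away from the origin.

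Your comparison operator is built from $D_0$ and $\PO$. The resolvent identity then produces, at minimum, the term $\alpha\cdot A$ (from $\DA-D_0$) multiplied by $\chi_{2R}$; but $A$ is only in $L^\infty_\loc$ and may be arbitrarily large on $\{R\le|x|\le 4R\}$, so the resulting bound would depend on $A$ on that annulus rather than only through $\|\zeta A\|$, and the theorem's claimed $A$-independence would fail. You would also generate terms of the form $\PA V\PA-\PO W_R\PO$, whose control requires estimates on $\PA-\PO$ against the full singular potential $V$; Lemma~\ref{le-vgl-PA-PO} gives only bounds for compactly supported $\mu$ and does not suffice here. Finally, the Lipschitz hypothesis on $A$ at $\gamma=\gamma_{\mathrm c}$ is \emph{not} used in the decay argument itself---it enters only through Theorem~\ref{thm-BRA-bb}, so that $\B{A,V}$ is a self-adjoint operator to begin with.

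Two smaller points. The factor $e^{4aR}$ is \emph{not} compensated by the $R^{-1}$ from $\nabla\chi_{2R}$; rather, $R$ is fixed once and for all and the constant is allowed to depend on it (monotone convergence is in $F$, not in $R$). And $\nabla F$ is not ``supported where $\chi$ varies''; it is $\nabla\chi_{2R}$ that is localized, and on its support one has $F\le a|x|\le 4aR$.
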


\begin{proof}
We fix some 
$a\in(0,\delta(I))$
and define   
\begin{equation}\label{def-wtDAV}
\wt{D}_{A,V}\,:=\,\B{A,V}\oplus\DA\,\PAm\,,
\end{equation}
so that
$E_I(\wt{D}_{A,V})\,\PA=E_I(\B{A,V})\oplus 0$.  
We choose $\chi_R$, $V_R$, $\ve$, $I$, and $\omega$
in the same way as in the proof of Theorem~\ref{thm-L2-exp-Dirac}
and introduce the comparison operator
\begin{equation}\label{def-wtHH}
\wt{D}_{A,V_R}\,:=\,
\DA\,+\,\PA\,V_R\,\PA\,.
\end{equation}
Then it is clear that 
$$
\omega(\wt{D}_{A,V_R})\,\PA\,=\,0\,,
$$
for all sufficiently large $R\grg1$.
In particular, writing
\begin{equation}\label{def-wtR}
\wt{R}_{A,V}(z)\,:=\,(\wt{D}_{A,V}-z)^{-1}\,,
\qquad \wt{R}_{A,V_R}(z)\,:=\,(\wt{D}_{A,V_R}-z)^{-1}\,,
\end{equation}
we deduce the following analogue of \eqref{sonja2}
$$
\chi_{2R}\,e^F\,E_I(\wt{D}_{A,V})\,\PA\,=\,
\int_\CC e^F\,\big(\chi_{2R}\,\wt{R}_{A,V}(z)
-\wt{R}_{A,V_R}(z)\,\chi_{2R}\big)\,
\PA\,E_I(\wt{D}_{A,V})
\,d\omega(z)\,.
$$
Therefore, it suffices to show that,
for some sufficiently large $R\grg1$,
there is some $C(a,R)\in(0,\infty)$ such that, for
all $F$ satisfying \eqref{hyp-F-L2}
\begin{equation}\label{tine-1}
\sup_{z\in\supp(\omega)\setminus\RR}|\Im z|\,
\big\|\,e^F\,\big(\wt{R}_{A,V_R}(z)\,\chi_{2R}
-\chi_{2R}\,\wt{R}_{A,V}(z)\big)
\,\big\|\,\klg\,C(a,R)\,.
\end{equation}
To this end we first remark that due to 
\eqref{peki2}, \eqref{sonja99}, and $\|V_R\|\to0$, $R\to\infty$,
we find some constant $C'(a,R)\in(0,\infty)$
such that, for all $A\in L_\loc^\infty(\RR^3,\RR^3)$
and all $F$ satisfying \eqref{hyp-F-L2},
\begin{eqnarray}
\lefteqn{\nonumber
\big\|\,\big[\,\chi_{2R}\,,\,\PA\,V_R\,\PA\,\big]\,e^F\,\big\|
}
\\
&\klg&\label{tine0}
\big\|\,\big([\chi_{2R}\,,\,\PA]\,e^F\big)\,V_R\,\PAFm\,\big\|
\,+\,
\big\|\,\PA\,V_R\,\big([\chi_{2R}\,,\,\PA]\,e^F\big)\,\big\|
\,\klg\,C'(a,R)\,.
\end{eqnarray}
Writing $\ol{\chi}_{R}:=1-\chi_R$ we further observe that
$$
V\,\ol{\chi}_R\,\PA\,e^F\,\chi_{2R}
\,=\,
(\id_{\{|x|\klg2R\}}\,e^F)\,V\,\big[\,e^{-F}\,\ol{\chi}_R\,,\,\PA\,\big]
\,e^F\,\chi_{2R}\,,
$$
which together with \eqref{peki} and \eqref{Vpeki} implies 
\begin{equation}\label{tine1}
\big\|\,V\,\ol{\chi}_R\,\PA\,e^F\,\chi_{2R}\,\big\|\,\klg\,
C''(a,R)\,\big(\|\nabla\zeta\|+\|\zeta\,A\|+\|(1-\zeta)\,V\|\big)\,,
\end{equation}
for some constant $C''(a,R)\in(0,\infty)$ which neither depends
on $A$ nor $\zeta$.
Now, a straightforward computation yields, for $\vp\in\COI$
and $z\in\CC\setminus\RR$,
\begin{eqnarray}
\lefteqn{\nonumber
\big(\,\chi_{2R}\,\wt{R}_{A,V_R}(z)-\wt{R}_{A,V}(z)\,\chi_{2R}\,\big)\,
(\wt{D}_{A,V_R}-z)\,\vp
}
\\
&=&\label{sonja77}
\wt{R}_{A,V}(z)\,\PA\,V\,\ol{\chi}_R\,\PA\,\chi_{2R}\,\vp\,-\,
\wt{R}_{A,V}(z)\,i\alpha\cdot\nabla\chi_{2R}\,\vp
\\
& &\;-\,\nonumber
\wt{R}_{A,V}(z)\,
[\chi_{2R}\,,\,\PA\,V_R\,\PA]
\,\vp\,.
\end{eqnarray}
Since the range of $(\wt{D}_{A,V_R}-z)\!\!\upharpoonright_{\COI}$
is dense this together with \eqref{tine1} implies
\begin{eqnarray}
\lefteqn{\nonumber
\big(\chi_{2R}\,\wt{R}_{A,V_R}(z)-\wt{R}_{A,V}(z)\,\chi_{2R}\big)\,e^F
}
\\
&=&
\wt{R}_{A,V}(z)\,\PA\,V\,\ol{\chi}_R\,\PA\,e^F\,
\chi_{2R}\,\big(e^{-F}\,\wt{R}_{A,V_R}(z)\,e^F\big)\nonumber
\\
& &
-\,\wt{R}_{A,V}(z)\,
\big\{\,i\alpha\cdot\nabla\chi_{2R}\,e^F+
[\chi_{2R}\,,\,\PA\,V_R\,\PA]\,e^F\label{tine2}
\,\big\}
\,\big(e^{-F}\,\wt{R}_{A,V_R}(z)\,e^F\big)\,.
\end{eqnarray}
Since $\|\PA\,V_R\,\PA\|\to0$, $R\to\infty$, Lemma~\ref{le-marah}
ensures that, for sufficiently large $R\grg1$,
the norm of $e^{-F}\,\wt{R}_{A,V_R}(z)\,e^F$
is uniformly bounded, for all $z\in\supp(\omega)$, 
$A\in L_\loc^\infty(\RR^3,\RR^3)$,
and every $F$ satisfying \eqref{hyp-F-L2}.
Taking the adjoint of
\eqref{tine2} and using
\eqref{tine0} and \eqref{tine1} 
we thus obtain \eqref{tine-1}.
\end{proof}

\begin{theorem}\label{thm-L2-exp-klg0}
Assume that $V$ fulfills Hypothesis~\ref{hyp-Vogel} 
with $\gamma\in(0,\gamma_{\mathrm{c}}]$
and that $A\in L^\infty_\loc(\RR^3,\RR^3)$.
Assume further that $A$ is locally Lipschitz continuous
if $\gamma=\gamma_{\mathrm{c}}$. 
Then, for every $a\in[0,1)$, 
$\Ran(E_{(-\infty,0)}(\BAVC))\subset\dom(e^{a|\cdot|})$
and there is some $A$-independent
$C(a)\in(0,\infty)$
such that, for all $\zeta\in C_0^{\infty}(\{|x|\klg\rho\},[0,1])$
with $\zeta\equiv1$ in a neighbourhood of $0$,
$$
\big\|\,e^{a|\cdot|}\,E_{(-\infty,0)}(\BAVC)\,\big\|_{\LO(\HR^+_A,\HR)}
\,\klg\,
C(a)\,\big(1+\|\nabla\zeta\|+\|\zeta\,A\|+\|(1-\zeta)\,V\|\big)
\,.
$$
\end{theorem}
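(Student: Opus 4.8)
The plan is to reduce the estimate on $E_{(-\infty,0)}(\BAVC)$ to the case already handled in Theorem~\ref{thm-L2-exp} by a deformation/interpolation argument in the energy. The essential point is that, for $\lambda<0$, the relevant decay rate is governed by $\triangle(\lambda)=1$ rather than $\sqrt{1-\lambda^2}$, and this gain must come from the fact that the spectral subspace $\Ran(E_{(-\infty,0)}(\BAVC))$ lives below the gap bottom $-1$ of $\DA\!\!\upharpoonright_{\HR^+_A}$ only through the non-local potential $\PA V\PA$; away from the nucleus the kinetic operator $\DA$ restricted to $\HR^+_A$ is bounded below by $1$, so exponential weights up to rate $a<1$ can be absorbed. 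Concretely, I would fix $a\in[0,1)$, choose $\chi_R$, $V_R=\chi_R V$, the cutoffs $\zeta$, and proceed exactly as in the proof of Theorem~\ref{thm-L2-exp}: introduce $\wt D_{A,V}:=\BAVC\oplus\DA\PAm$ and the comparison operator $\wt D_{A,V_R}:=\DA+\PA V_R\PA$. The new feature is that the spectral parameter now ranges over a \emph{noncompact} set $(-\infty,0]$, so the Helffer--Sj\"ostrand calculus with a compactly supported $\omega$ is not directly applicable; instead I would use a cutoff $\omega\in C_0^\infty((-\infty,c),[0,1])$ with $\omega\equiv1$ on $(-\infty,0]$ for suitable $c<1$, together with the fact that, by Theorem~\ref{thm-BRA-bb}, $\BAVC$ is semibounded below, so only finitely much of the real line (a \emph{compact} interval) actually carries spectral weight of $E_{(-\infty,0)}(\BAVC)$.

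The key steps, in order, would be:
\begin{enumerate}
\item[(1)] Use semiboundedness (Theorem~\ref{thm-BRA-bb}) to replace $(-\infty,0)$ by a compact interval $[-M,0)$ for some $M<\infty$, so that $E_{(-\infty,0)}(\BAVC)=E_{[-M,c]}(\BAVC)$ after enlarging to some $c<1$, and then cover $[-M,c]$ by finitely many short subintervals $I_1,\dots,I_k$.
\item[(2)] On each $I_j$ that meets $(-1,1)$ apply Theorem~\ref{thm-L2-exp} directly with the rate $\delta(I_j)$; since $a<1$ may exceed $\delta(I_j)$ for $I_j$ near $\pm1$, subdivide so that each such $I_j$ is short enough that the \emph{endpoint} decay rate still exceeds $a$ — impossible near $\pm1$, which forces a different argument there. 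For $I_j\subset(-\infty,-1)$ (the genuinely new region) observe that $\DA\!\!\upharpoonright_{\HR^+_A}\grg1$, so $(\wt D_{A,V_R}-z)$ with $z$ near $I_j$ and $\|V_R\|$ small is invertible with resolvent admitting the exponential conjugation bound of Lemma~\ref{le-marah} \emph{with $\lambda$ replaced by the gap bottom $1$}, i.e. with rate budget $a<\sqrt{1-1^2}$\dots which is $0$ — so this naive substitution also fails, and one must use that $z<-1$ is \emph{to the left} of both spectral half-lines of $\DA$, where $(\DA-z)^{-1}$ is bounded by $(|z|-1)^{-1}$ and the conjugated resolvent is controlled for \emph{any} $a<1$ by a direct quadratic-form computation analogous to the proof of Lemma~\ref{le-marah} but with $\Re z^2$ large and negative replaced by the correct sign.
\item[(3)] Assemble the pieces as in Theorem~\ref{thm-L2-exp}: write the analogue of \eqref{sonja77}--\eqref{tine2}, using the commutator bounds \eqref{peki}, \eqref{sonja99}, \eqref{Vpeki}, \eqref{peki2} and $\|\PA V_R\PA\|\to0$, take adjoints, and read off the $A$-independent constant times $(1+\|\nabla\zeta\|+\|\zeta A\|+\|(1-\zeta)V\|)$.
\end{enumerate}

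The main obstacle is step (2): making precise, for a spectral parameter $z$ with $\Re z<-1$ lying strictly to the left of $\spec(\DA)$, that $e^F\,(\DA+i\alpha\cdot\nabla F+V_R-z)^{-1}\,e^{-F}$ is bounded uniformly for \emph{every} $F$ with $|\nabla F|\le a$ and \emph{every} $a<1$. The quadratic-form identity used in Lemma~\ref{le-marah} produces, for such $z$, the coefficient $1-\Re z^2-|\nabla F|^2$, which is negative and large in modulus when $|\Re z|$ is large, so that particular inequality is the wrong one; instead one should expand $\Re\langle e^{-F}(\DA+\bar z)e^F\vp\,|\,e^F(\DA-z)e^{-F}\vp\rangle$ keeping the term $|(\DA-\Re z)^{1/2}\vp|^2$-like structure, or more simply exploit that $\DA\!\!\upharpoonright_{\HR^+_A}\grg1>\Re z$ after projecting, so on $\PA\HR$ one has $(\PA\DA\PA-z)\grg(1-\Re z)>0$ and the weight $e^F$ is absorbed because $\pm i\alpha\cdot\nabla F$ is a bounded perturbation of size $a<1<1-\Re z$. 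Getting the projections $\PA$ to commute sufficiently well with $e^F$ here — which is exactly what Lemma~\ref{le-peki} via \eqref{peki2} is for — is the technical heart, and one must be careful that the bound $\|\PAF\|\le1+\tfrac{\sqrt6}{2}\,a/(1-a^2)$ blows up as $a\to1$, so the final constant $C(a)$ is allowed to depend on $a$ and diverge at $a=1$, consistent with the statement.
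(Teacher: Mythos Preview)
Your approach has a genuine gap in step~(2), and the paper takes a cleaner route that sidesteps it entirely.

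The difficulty you identify for subintervals $I_j\subset(-\infty,-1)$ is real: Lemma~\ref{le-marah} fails when $|\Re z|>1$, and your proposed fix---exploiting $\PA\DA\PA\grg1$---does not go through cleanly, because the exponential weight $e^F$ does not preserve $\HR^+_A$. The conjugated resolvent $e^F\,\wt R_{A,V_R}(z)\,e^{-F}$ acts on all of $\HR$, and on $\HR^-_A$ the comparison operator $\wt D_{A,V_R}$ coincides with $\DA$, whose spectrum covers $(-\infty,-1]$; so for $\Re z<-1$ you pick up an $|\Im z|^{-1}$ singularity from the $\HR^-_A$ component, and the commutator bounds of Lemma~\ref{le-peki} that mediate between $\PA$ and $\PAF$ are not strong enough to kill this after integrating $|d\omega(z)|/|\Im z|$ over a strip. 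Your sketch of an alternative quadratic-form computation does not produce a usable lower bound either, since for $\Re z<-1$ the term $1-\Re z^2-|\nabla F|^2$ has the wrong sign.

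The paper avoids the region $\Re z<-1$ altogether. First, Theorem~\ref{thm-L2-exp} together with Theorem~\ref{thm-lower-bd} (applied only to compact subintervals of $(-1,1)$) shows that $\spec(\BAVC)\cap(-1,1)$ is discrete. Hence one can pick $e_0\in(-1,0)\cap\vr(\BAVC)$ so close to $0$ that $1-a^2-e_0^2>0$, and then $E_{(-\infty,0)}(\BAVC)=E_{(-\infty,e_0]}(\BAVC)$. Now write $E_{(-\infty,e_0]}(T)=\tfrac12\big(\id-\sgn(T-e_0)\big)$ for both $T=\wt D_{A,V}$ and $T=\wt D_{A,V_R}$, note that $E_{(-\infty,e_0]}(\wt D_{A,V_R})=\PAm$ for $R$ large, and represent each sign function by the Cauchy principal value \eqref{Cauchy-PV}. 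The resulting integral runs along the single vertical line $e_0+i\RR$ with $|e_0|<\sqrt{1-a^2}$, so Lemma~\ref{le-marah} applies directly; a Neumann expansion using $\PAF V_R\PAF\to0$ then yields $\|e^F\,\wt R_{A,V_R}(e_0+iy)\,e^{-F}\|\klg C'(a,R)(1+y^2)^{-1/2}$ uniformly in $y$, which is integrable. The remainder---the resolvent identity producing the analogue of \eqref{tine2} and the commutator bounds \eqref{tine0}, \eqref{tine1}---is exactly as you outlined in step~(3).

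So the missing idea is: do not partition the energy range and do not use Helffer--Sj\"ostrand; instead shift the spectral cutoff to a point $e_0$ inside the gap $(-1,0)$ and use the sign-function representation, which confines all resolvent integration to a vertical line where Lemma~\ref{le-marah} is valid for every $a<1$.
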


\begin{proof}
We fix $a\in[0,1)$.
It follows from Theorem~\ref{thm-L2-exp} and 
Theorem~\ref{thm-lower-bd} below that  
the spectrum of $\B{A,V}$ in $(-1,1)$ is discrete,
\begin{equation}
\label{gina}
\spec(\BAVC)\cap(-1,1)\,\subset\,\specd(\BAVC)\,.
\end{equation}
In particular,
we find some $e_0\in(-1,0)\cap\vr(\BAVC)$ 
such that $E_{(-\infty,0)}(\BAVC)=E_{(-\infty,e_0]}(\BAVC)$
and $1-a^2-e_0^2>0$.
Using the notation \eqref{def-wtDAV} and \eqref{def-wtHH}
we have
$$
E_{(-\infty,0)}(\wt{D}_{A,V})\,=\,E_{(-\infty,e_0]}(\wt{D}_{A,V})\,,
\qquad 
E_{(-\infty,e_0]}(\wt{D}_{A,V_R})\,=\,\PAm\,,
$$
provided $R\grg1$ is sufficiently large.
Thanks to \eqref{peki2} we know that, for fixed $R$, 
$e^F\,\PA\,(1-\chi_{2R})=\PAF\,e^F\,(1-\chi_{2R})$
is uniformly bounded, for all $F$ satisfying \eqref{hyp-F-L2}.
It thus remains to consider
\begin{eqnarray}
\lefteqn{\nonumber
e^F\,\PA\,\chi_{2R}\,\big(E_{(-\infty,e_0]}(\BAVC)\oplus0\big)
}
\\
&=&\nonumber
e^F\,\PA\,\big(\chi_{2R}\,E_{(-\infty,e_0]}(\wt{D}_{A,V})-
E_{(-\infty,e_0]}(\wt{D}_{A,V_R})\,\chi_{2R}\big)\,\PA
\\
&=&\label{maria100}
\frac{1}{2}\:e^F\,\PA\,\big(\sgn\big[\wt{D}_{A,V_R}-e_0\big]\,\chi_{2R}
-
\chi_{2R}\,\sgn\big[\wt{D}_{A,V}-e_0\big]\big)\,\PA
\,.
\end{eqnarray}
Using \eqref{Cauchy-PV} and \eqref{def-wtR} to represent 
the sign function of 
$\wt{D}_{A,V}-e_0$ and $\wt{D}_{A,V_R}-e_0$
by a strongly convergent Cauchy principal value
and using \eqref{maria100} 
we obtain, for all $\psi\in\HR$,
\begin{eqnarray*}
\lefteqn{
\big\|\,e^F\,\PA\,\chi_{2R}\,\big(E_{(-\infty,0)}(\BAVC)\oplus0
\big)\,\psi\,\big\|
}
\\
&\klg&
\int_{\RR}\Big\|\,\PAF\,e^F\,\big(\wt{R}_{A,V_R}(e_0+iy)\,\chi_{2R}
-\chi_{2R}\,\wt{R}_{A,V}(e_0+iy)\big)\,\PA\,\psi\,\Big\|\,\frac{dy}{2\pi}
\end{eqnarray*}
If $\delta_0>0$ denotes the distance from $e_0$ to the spectrum of
$\wt{D}_{A,V}$, then we have 
$\|\wt{R}_{A,V}(e_0+iy)\|=(\delta_0^2+y^2)^{1/2}$.
A straight-forward Neumann expansion employing \eqref{marah1} and 
$$
\PAF\,V_R\,\PAF\,\longrightarrow\,0\,,\qquad R\to\infty\,,
$$
further shows that, for every sufficiently large $R\grg1$, there is 
some $C'(a,R)\in(0,\infty)$ such that, for all
$A\in L_\loc^\infty(\RR^3,\RR^3)$ and all $F$ satisfying \eqref{hyp-F-L2},
$$
\big\|\,e^F\,\wt{R}_{A,V_R}(e_0+iy)\,e^{-F}\,\big\|\,\klg\,
\frac{C'(a,R)}{\sqrt{1+y^2}}\,,\qquad y\in\RR\,.
$$
Using \eqref{peki2}, \eqref{tine0}, \eqref{tine1}, 
and \eqref{tine2}, we thus
arrive at
$$
\big\|\,e^F\,\PA\,\chi_{2R}\,\big(E_{(-\infty,0)}(\BAVC)\oplus0
\big)\,\psi\,\big\|
\,\klg\,C''(a,R)\,
\big(1+\|\nabla\zeta\|+\|\zeta\,A\|+\|(1-\zeta)\,V\|\big),
$$
for all $\psi\in\HR$, $\|\psi\|=1$,
where the constant 
$C''(a,R)\in(0,\infty)$ neither depends on $A$ nor $\zeta$.
\end{proof}


\section{The discrete and essential spectra of $\B{A,V}$}

\noindent
Next, we consider the discrete and essential spectra
of $\B{A,V}$. To start with we
prove a theorem we have already refered to in the proof
of Theorem~\ref{thm-L2-exp-klg0} (to obtain \eqref{gina})
and which completes our proof of Theorem~\ref{main-thm-L2}.
It is used to infer the lower bound on the essential spectrum
of $\B{A,V}$ from our localization estimates and proved
by adapting an argument we learned from \cite{Gr}
to the non-local no-pair operator.
Certainly, one could also try to locate the essential spectrum
of $\B{A,V}$ by a more direct method without relying on
exponential localization estimates.
We refer to \cite{LaSi} for recent developments
relevant to this question
and numerous references.

\begin{theorem}\label{thm-lower-bd}
Assume that $V$ fulfills Hypothesis~\ref{hyp-Vogel}
with $\gamma\in[0,\gamma_{\mathrm{c}}]$ and let 
$A\in L_\loc^\infty(\RR^3,\RR^3)$.
If $\gamma=\gamma_{\mathrm{c}}$ assume further that $A$ is locally
Lipschitz continuous.
Let $I\subset(-\infty,1)$ be an interval such that
$\Ran(E_I(\B{A,V}))\subset\dom(e^{\ve|\cdot|})$, for
some $\ve>0$.
Then the spectral projection $E_I(\B{A,V})$
is a compact and, hence, finite rank operator on $\HR^+_A$.
\end{theorem}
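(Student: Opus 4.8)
The plan is to show that the orthogonal projection $P:=E_I(\B{A,V})$ is compact on $\HR^+_A$; since a compact orthogonal projection has finite rank, this yields the claim. The proof exploits two independent localizations of $\Ran(P)$: exponential decay at infinity, which is the hypothesis, and a small amount of Sobolev regularity on compact sets, which is extracted from the semi-boundedness of $\B{A,V}$.

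The first step is to record three mapping properties of $P$. Since multiplication by $e^{\ve|\cdot|}$ is a closed operator and $P\in\LO(\HR^+_A)$ maps into $\dom(e^{\ve|\cdot|})$, the closed graph theorem gives $W:=e^{\ve|\cdot|}\,P\in\LO(\HR^+_A,\HR)$. Next, by Theorem~\ref{thm-BRA-bb} the operator $\B{A,V}$ is bounded below, hence $I\cap\spec(\B{A,V})$ is bounded, $\B{A,V}\,P$ is bounded, and $\Ran(P)\subset\dom(\B{A,V})\subset\cQ(\B{A,V})=\PA\,\dom(|\DA|^{1/2})$. Finally, using that the form of $\DA$ on $\HR^+_A$ equals $\vp\mapsto\|\,|\DA|^{1/2}\vp\,\|^2$, one gets $\|\,|\DA|^{1/2}\vp\,\|^2=\SPn{\vp}{\B{A,V}\vp}-\SPn{\vp}{V\vp}$ for $\vp$ in the form domain; since, exactly as in the proof of Theorem~\ref{thm-BRA-bb}, $V$ is form bounded relative to $|\DA|$ with bound $\gamma/\gamma_{\mathrm{c}}<1$, this forces $|\DA|^{1/2}\,P\in\LO(\HR^+_A,\HR)$ when $\gamma<\gamma_{\mathrm{c}}$. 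In the critical case $\gamma=\gamma_{\mathrm{c}}$ one works instead with the bounded operator $\chi\,\PA\,(\B{A,V}+c\,\PA)^{-1/2}$ of Remark~\ref{rem-B0V-BAV} (available because $A$ is then locally Lipschitz), which maps into the form domain of $\B{0,-\gamma_{\mathrm{c}}/|\cdot|}\oplus\POm$ by \eqref{vgl-B0V-BAV}.

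The technical core is the following local compactness statement: for every $\chi\in C_0^\infty(\RR^3,[0,1])$, the operator $\chi\,|\DA|^{-1/2}$ is compact on $\HR$. I would deduce it from the representation
\begin{equation*}
|\DA|^{-1/2}\,=\,\frac{\sqrt{2}}{\pi}\int_0^\infty t^{1/2}\,(\DA^2+t^2)^{-1}\,dt\,,
\end{equation*}
which converges in operator norm because $\DA^2\grg1$ and $t\mapsto t^{1/2}(1+t^2)^{-1}$ is integrable. For fixed $t>0$, $(\DA^2+t^2)^{-1}=\RA{it}\,\RA{-it}$, and the resolvent identity \eqref{res-id}, taken with $\wt{A}=\wt{V}=V=0$ and $\mu=\chi$, yields
\begin{equation*}
\chi\,\RA{it}\,=\,\R{0}{it}\,\chi\,-\,\R{0}{it}\,\alpha\cdot\big(i\nabla\chi+\chi\,A\big)\,\RA{it}\,.
\end{equation*}
The right-hand side takes values in $H^1(\RR^3,\CC^4)$ because $\R{0}{it}$ maps $\HR$ boundedly into $H^1(\RR^3,\CC^4)$, and the left-hand side is supported in $\supp(\chi)$; hence $\chi\,\RA{it}$ maps $\HR$ boundedly into $\{u\in H^1(\RR^3,\CC^4):\supp(u)\subset\supp(\chi)\}$, which is compactly embedded in $\HR$ by Rellich--Kondrachov. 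Thus $\chi\,(\DA^2+t^2)^{-1}=(\chi\,\RA{it})\,\RA{-it}$ is compact with norm $\klg(1+t^2)^{-1}$, and the displayed integral realizes $\chi\,|\DA|^{-1/2}$ as a norm limit of compact operators.

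To finish, pick $\chi\in C_0^\infty(\RR^3,[0,1])$ with $\chi\equiv1$ on $\{|x|\klg1\}$ and $\supp(\chi)\subset\{|x|\klg2\}$, and put $\chi_R(x):=\chi(x/R)$. For each $R\grg1$ the operator $\chi_R\,P=(\chi_R\,|\DA|^{-1/2})\,(|\DA|^{1/2}\,P)$ is compact (with $\chi_R\,\PA\,(\B{A,V}+c\,\PA)^{-1/2}$ and $(\B{A,V}+c\,\PA)^{1/2}P$ in place of the two factors when $\gamma=\gamma_{\mathrm{c}}$), whereas $\|(1-\chi_R)\,P\|\klg e^{-\ve R}\,\|W\|\to0$ as $R\to\infty$ because $1-\chi_R$ is supported in $\{|x|\grg R\}$. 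Hence $P=\lim_{R\to\infty}\chi_R\,P$ in operator norm, so $P$ is compact and therefore of finite rank. The step I expect to be the real obstacle is the extraction of local regularity of $\Ran(P)$ in the second paragraph: because the potential part $\PA V$ of $\B{A,V}$ is non-local and only form bounded --- with relative bound possibly equal to $1$ at $\gamma=\gamma_{\mathrm{c}}$ --- one cannot place $\Ran(P)$ in $\dom(\DA)$ and must work with $|\DA|^{1/2}$ and, in the critical case, with the comparison estimate \eqref{vgl-B0V-BAV}; the remaining ingredients are soft.
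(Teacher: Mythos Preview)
In the subcritical case $\gamma<\gamma_{\mathrm{c}}$ your argument is correct and matches the paper's: write $\chi_R\,P=(\chi_R\,|\DA|^{-1/2})(|\DA|^{1/2}P)$, use local compactness of $\DA$, and pass to the norm limit. Your integral-representation proof that $\chi\,|\DA|^{-1/2}$ is compact is a nice elaboration of what the paper simply quotes as the ``local compactness property.''

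The gap is in the critical case $\gamma=\gamma_{\mathrm{c}}$. You replace the two factors by $\chi_R\,\PA\,(\B{A,V}+c\,\PA)^{-1/2}$ and $(\B{A,V}+c\,\PA)^{1/2}P$ and assert that the first is compact. But Remark~\ref{rem-B0V-BAV} only tells you that $\chi_R\,\PA\,(\B{A,V}+c\,\PA)^{-1/2}$ maps into the \emph{form domain} of $\B{0,-\gamma_{\mathrm{c}}/|\cdot|}\oplus\POm$; at the critical coupling this Friedrichs form domain is not a priori contained in any $H^s$, so you cannot immediately pull compactness from Rellich--Kondrachov. This is exactly the point where the paper has to bring in an additional ingredient: Tix's result that $\dom(\B{0,-\gamma_{\mathrm{c}}/|\cdot|})\subset\dom(|\D{0}|^s)$ for $s<1/2$, whence the form domain sits in $\dom(|\D{0}|^\vk)$ for $\vk<1/4$. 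One then has to insert $\PO$ rather than $\PA$ to use this, writing $\chi_R\,\PA=\chi_{2R}\big(\PO\,\chi_R+(\chi_R\,\PA-\PO\,\chi_R)\big)$; the $\PO$ piece factors through $\chi_{2R}\,|\D{0}|^{-\vk}$ (compact) and $|\D{0}|^\vk\,\PO\,\B{0,-\gamma_{\mathrm{c}}/|\cdot|}^{-1/2}$ (bounded by Tix), while the remainder is controlled by the commutator bound \eqref{rudi0} and the compactness of $\chi_{2R}\,|\D{0}|^{-1/2}$. Without this extra regularity input your substitution does not yield a compact factor, and the critical case is not covered.
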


\begin{proof}
We pick some cut-off function $\chi\in C^\infty(\RR^3,[0,1])$
such that $\chi(x)=1$, for $|x|\klg1$,
and $\chi(x)=0$, for $|x|\grg2$,
and set $\chi_R(x):=\chi(x/R)$, $x\in\RR^3$, $R\grg1$.
Since
$e^{\ve|x|}\,E_I(\B{A,V})\in\LO(\HR^+_A,\HR)$
and since $\chi_R\,e^{-\ve|x|}\to e^{-\ve|x|}$, $R\to\infty$, in the operator
norm, it suffices to show that
$\chi_R\,E_I(\B{A,V})=\chi_R\,e^{-\ve|x|}\,e^{\ve|x|}\,E_I(\B{A,V})$
is compact, for every $R\grg1$.
First, we show this assuming that $A\in L^\infty_\loc(\RR^3,\RR^3)$
and that $V$ fulfills Hypothesis~\ref{hyp-Vogel} with
$\gamma\in[0,\gamma_{\mathrm{c}})$.

Since $\DA$ has the local compactness property
we know that $\chi_R\,|\DA|^{-1/2}$ is compact, for all $R\grg1$. 
It thus remains to show that
$|\DA|^{1/2}\,E_I(\B{A,V})\in\LO(\HR^+_A)$,
which in turn is readily proved writing
\begin{equation}\label{judith}
|\DA|^{1/2}\,E_I(\B{A,V})
\,=\,\big\{|\DA|^{1/2}\,\PA\,(\B{A,V}+c)^{-1/2}\big\}
\,(\B{A,V}+c)^{1/2}\,E_I(\B{A,V})\,,
\end{equation}
where $c>-\inf\spec(\B{A,V})$.
In fact, by Theorem~\ref{thm-BRA-bb} the form domain of $\B{A,V}$
is $\PA\,\dom(|\DA|^{1/2})$ and, hence, the operator $\{\cdots\}$
in \eqref{judith} is bounded.

Next, we treat the case $\gamma=\gamma_{\mathrm{c}}$ assuming
that $A$ is locally Lipschitz continuous.
In this case Remark~\ref{rem-B0V-BAV} is applicable
and we may represent 
$\chi_R\,E_I(\B{A,V})=\chi_{2R}\,\chi_R\,\PA\,E_I(\B{A,V})$ as
\begin{eqnarray}
\lefteqn{
\chi_R\,E_I(\B{A,V})\,=\,
(\chi_{2R}\,|\D{0}|^{-\vk})\,\big\{\,|\D{0}|^{\vk}\,\PO
\B{0,-\gamma_{\mathrm{c}}/|\cdot|}^{-1/2}\big\}\,\times\label{sheila1}
}
\\
& &\;\;
\times\,\label{sheila2}
\big\{\B{0,-\gamma_{\mathrm{c}}/|\cdot|}^{1/2}\,\PO\,
\chi_R\,(\B{A,V}+c)^{-1/2}\big\}\,(\B{A,V}+c)^{1/2}\,E_I(\B{A,V})
\\
& &+\;\label{sheila3}
\chi_{2R}\,|\D{0}|^{-1/2}\,\big\{\,|\D{0}|^{1/2}\,
\,(\chi_R\,\PA-\PO\,\chi_R)\big\}\,E_I(\B{A,V})\,,
\end{eqnarray}
for some $\vk\in(0,1/4)$.
We recall from \cite{Tix2}
that $\dom(\B{0,-\gamma_{\mathrm{c}}/|\cdot|})\subset\dom(|\D{0}|^s)$, 
for every
$s\in(0,1/2)$
and, hence, 
$\dom(\B{0,-\gamma_{\mathrm{c}}/|\cdot|}^{1/2})\subset\dom(|\D{0}|^\vk)$.
Therefore, the operator $\{\cdots\}$ in \eqref{sheila1} is bounded.
The operator $\{\cdots\}$ in \eqref{sheila2} is bounded
because of Remark~\ref{rem-B0V-BAV}, and the one in curly
brackets in \eqref{sheila3} is bounded according to \eqref{rudi0}.
Since $\chi_{2R}\,|\D{0}|^{-s}$ is compact, for all $s>0$,
the theorem is proved.
\end{proof}

\bigskip

In the remaining part of this section we prove Theorem~\ref{thm-ex-no-pair}.

\begin{lemma}\label{thm-upper-bd}
Assume that $V$ fulfills Hypothesis~\ref{hyp-Vogel}
with $\gamma\in[0,\gamma_{\mathrm{c}}]$
and that $A$ fulfills Hypothesis~\ref{hyp-A-ex}(i).
Let $\lambda\in[1,\infty)$ and let $\{\psi_n(\lambda)\}_{n\in\NN}$
denote the Weyl sequence appearing in
Hypothesis~\ref{hyp-A-ex}(i).
Then
\begin{eqnarray}
\|\PA\,\psi_n(\lambda)\|&\longrightarrow&1\,,\label{jurek1}
\qquad n\to\infty\,,
\\
\|(\B{A,V}-\lambda)\,\PA\,\psi_n(\lambda)\|
&\longrightarrow&0\,,\qquad n\to\infty\,.\label{jurek2}
\end{eqnarray}
\end{lemma}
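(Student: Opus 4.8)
The plan is to decompose everything along the spectral splitting $\id=\PA+\PAm$ of $\DA$ and to exploit that the Weyl sequence lives where $|x|\grg R_n$ with $R_n\to\infty$; the only delicate region is the Coulomb-type singularity of $V$ at the origin, which $\PAm\psi_n(\lambda)$ can reach only through the commutator of $\PA$ with a cutoff supported far from $0$, and that commutator will be controlled by Lemma~\ref{le-peki}(ii).

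First I would prove \eqref{jurek1}. Since $\PAm=E_{(-\infty,0)}(\DA)$ commutes with $\DA$, we have $(\DA-\lambda)\,\PAm\psi_n(\lambda)=\PAm\,(\DA-\lambda)\,\psi_n(\lambda)$, whose norm tends to $0$ by \eqref{Weyl-DA}. On the other hand $\spec(\DA)\subset(-\infty,-1]\cup[1,\infty)$, so $\DA\klg-1$ on $\Ran\PAm$, and hence, because $\lambda\grg1$, $\|(\DA-\lambda)\,\PAm\psi_n(\lambda)\|\grg(1+\lambda)\,\|\PAm\psi_n(\lambda)\|$. Consequently $\|\PAm\psi_n(\lambda)\|\to0$, and since $\|\psi_n(\lambda)\|=1$ this gives $\|\PA\psi_n(\lambda)\|^2=1-\|\PAm\psi_n(\lambda)\|^2\to1$, i.e.\ \eqref{jurek1}.

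For \eqref{jurek2} I would start from the defining formula \eqref{def-BRO}: on $\PA\COI$ we have $\B{A,V}\,\PA\psi_n(\lambda)=\DA\,\PA\psi_n(\lambda)+\PA\,V\,\PA\psi_n(\lambda)$ (the last term lies in $\HR$ by Lemma~\ref{VPAphi-in-L2}), and since $\PA$ commutes with $\DA$,
$$
(\B{A,V}-\lambda)\,\PA\psi_n(\lambda)\,=\,\PA\,(\DA-\lambda)\,\psi_n(\lambda)\,+\,\PA\,V\,\PA\psi_n(\lambda)\,.
$$
The first term has norm at most $\|(\DA-\lambda)\,\psi_n(\lambda)\|\to0$. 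For the second, write $V\,\PA\psi_n(\lambda)=V\psi_n(\lambda)-V\,\PAm\psi_n(\lambda)$. Because $V(x)\to0$ as $|x|\to\infty$ and $\supp\psi_n(\lambda)\subset\{|x|\grg R_n\}$, we get $\|V\psi_n(\lambda)\|\klg\sup_{|x|\grg R_n}\|V(x)\|_{\LO(\CC^4)}\to0$. To handle $V\,\PAm\psi_n(\lambda)$, I would fix $\zeta\in C_0^\infty(\{|x|<\rho\},[0,1])$ with $\zeta\equiv1$ near $0$, and a cutoff $\eta_n\in C^\infty(\RR^3,[0,1])$ with $\eta_n\equiv1$ on $\{|x|\grg R_n\}$, $\eta_n\equiv0$ on $\{|x|\klg R_n/2\}$ and $\|\nabla\eta_n\|_\infty=\bigO(R_n^{-1})$, and split $V\,\PAm\psi_n(\lambda)=(1-\zeta)\,V\,\PAm\psi_n(\lambda)+\zeta\,V\,\PAm\psi_n(\lambda)$. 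Away from the origin $V$ is bounded, so $\|(1-\zeta)\,V\,\PAm\psi_n(\lambda)\|\klg\|(1-\zeta)\,V\|_\infty\,\|\PAm\psi_n(\lambda)\|\to0$ by \eqref{jurek1}. For the singular part I would use $\eta_n\psi_n(\lambda)=\psi_n(\lambda)$ and, for $n$ large, $\zeta\,\eta_n=0$, whence $\zeta\,\PAm\psi_n(\lambda)=\zeta\,[\PAm,\eta_n]\,\psi_n(\lambda)=-\zeta\,[\PA,\eta_n]\,\psi_n(\lambda)$; then $\|V(x)\|_{\LO(\CC^4)}\klg\gamma/|x|$ on $\supp\zeta$ together with Lemma~\ref{le-peki}(ii) (applied with $F=0$, $a=0$, $\cU=\{|x|<R_n/2\}$) yields
$$
\|\zeta\,V\,\PAm\psi_n(\lambda)\|\,\klg\,\gamma\,\Big\|\tfrac{\zeta}{|\cdot|}\,[\PA,\eta_n]\Big\|\,\klg\,\gamma\,(\|\nabla\zeta\|_\infty+\|\zeta\,A\|_\infty)\,\|\nabla\eta_n\|_\infty\,=\,\bigO(R_n^{-1})\,,
$$
which tends to $0$. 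Collecting these bounds gives \eqref{jurek2}.

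The main obstacle is exactly the term $V\,\PAm\psi_n(\lambda)$: $V$ is only form-bounded near its singularity at $0$ while $\PAm$ is nonlocal, so a priori $\PAm$ could transport mass of $\psi_n(\lambda)$ — whose support recedes to infinity — back towards $0$, where $V$ is unbounded. The resolution is that $\PAm\psi_n(\lambda)$ differs from $\eta_n\,\PAm\psi_n(\lambda)$ (which vanishes near $0$) only by $[\PA,\eta_n]\,\psi_n(\lambda)$, and the $\zeta/|\cdot|$-weighted operator norm of this commutator is $\bigO(\|\nabla\eta_n\|_\infty)=\bigO(R_n^{-1})$ by Lemma~\ref{le-peki}(ii); this is the one place where it is essential that the gradients of the cutoffs separating $0$ from $\supp\psi_n(\lambda)$ lie in an annulus that escapes to infinity.
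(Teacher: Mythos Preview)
Your proof is correct and follows essentially the same approach as the paper: both arguments derive \eqref{jurek1} from the spectral calculus (you spell it out via $\|(\DA-\lambda)\PAm\psi_n\|\grg(1+\lambda)\|\PAm\psi_n\|$, the paper cites \cite[Lemma~6.2]{MaSt08a}), and both control $\PA V\PA\psi_n(\lambda)$ in \eqref{jurek2} by reducing to the commutator estimate \eqref{Vpeki} of Lemma~\ref{le-peki}(ii). The paper's decomposition is marginally more direct---it writes $V\PA\psi_n=\vt_{R_n}V\PA\psi_n+V[\PA,\vt_{R_n}]\psi_n$ for a single cutoff $\vt_{R_n}$ with $\vt_{R_n}\psi_n=\psi_n$, avoiding your detour through $V\psi_n-V\PAm\psi_n$ and the second cutoff $\zeta$---but the substance is the same.
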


\begin{proof}
Since $(\DA-\lambda)\,\psi_n(\lambda)\to0$ and 
$\|\psi_n(\lambda)\|=1$, \eqref{jurek1} follows from the 
the spectral calculus; see \cite[Lemma~6.2]{MaSt08a}.
Next, we pick some $\vt\in C^\infty(\RR^3,[0,1])$
such that $\vt(x)=0$, for $|x|\grg1/2$,
and $\vt(x)=1$, for $|x|\grg1$,
and set $\vt_R:=\vt(\cdot/R)$, $R\grg1$. Then
$\psi_n(\lambda)=\vt_{R_n}\,\psi_n(\lambda)$
and, hence,
$$
V\,\PA\,\psi_n(\lambda)
\,=\,\vt_{R_n}\,V\,\PA\,\psi_n(\lambda)
\,+\,V\,[\PA\,,\,\vt_{R_n}]\,\psi_n(\lambda)\,.
$$
In view of Hypothesis~\ref{hyp-Vogel}
and \eqref{Vpeki}
we thus have $\|\PA\,V\,\PA\,\psi_n(\lambda)\|\to0$
and, consequently,
\eqref{jurek2} holds true also.
\end{proof}

\begin{theorem}\label{thm-eigenvalues}
Assume that $V$ fulfills Hypothesis~\ref{hyp-Vogel} and \eqref{V=O(1/R)}
and that $A$ fulfills Hypothesis~\ref{hyp-A-ex}(ii).
Then $\B{A,V}$ has infinitely many eigenvalues
below $1=\inf\specess(\B{A,V})$.
\end{theorem}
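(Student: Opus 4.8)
The plan is to combine the identification $\inf\specess(\B{A,V})=1$ with a min‑max argument built from the Weyl vectors $\psi_n:=\psi_n(1)$ furnished by Hypothesis~\ref{hyp-A-ex}(ii). The equality $\inf\specess(\B{A,V})=1$ follows from Theorem~\ref{main-thm-L2}, which gives $\specess(\B{A,V})\subset[1,\infty)$, together with Lemma~\ref{thm-upper-bd} applied with $\lambda=1$, which exhibits $\{\PA\psi_n\}_n$ as a Weyl sequence for $\B{A,V}$ at the energy $1$ whose members escape to infinity, so that $1\in\specess(\B{A,V})$. By the min‑max principle it then suffices to construct, for every $N\in\NN$, an $N$‑dimensional subspace $\cM_N$ of the form domain $\PA\,\dom(|\DA|^{1/2})$ (see Theorem~\ref{thm-BRA-bb}) on which the quadratic form of $\B{A,V}-1$ is strictly negative. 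I would take $\cM_N:=\mathrm{span}\{\PA\psi_{n_1},\dots,\PA\psi_{n_N}\}$, fixing the indices $n_1<\dots<n_N$ (large and sufficiently spread out) only at the end. By \eqref{supp-psin-ex} the sets $\supp\psi_{n_j}$ are then pairwise separated by distances $\gtrsim R_{n_{\max}}$; moreover $\|\PAm\psi_n\|\klg\tfrac12\|(\DA-1)\psi_n\|=\bigO(R_n^{-1})$ by the spectral gap of $\DA$ and \eqref{Weyl-DA-ex}, so $\|\PA\psi_{n_j}-\psi_{n_j}\|\to0$ and the orthonormality of the $\psi_{n_j}$ guarantees $\dim\cM_N=N$ once the $n_j$ are large enough.

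For the diagonal terms I would show $\SPn{\PA\psi_n}{(\B{A,V}-1)\PA\psi_n}\klg-c_0/R_n$ for all large $n$, with $c_0>0$ fixed. Using $\DA\PA=\PA\DA$ and $\PAm\psi_n\perp\Ran\PA$, the kinetic part equals $\SPn{(\DA-1)\psi_n}{\psi_n}-\SPn{(\DA-1)\psi_n}{\PAm\psi_n}$; the first term \emph{vanishes} because $\psi_n$ has vanishing lower spinor components, which forces $\SPn{\psi_n}{\DA\psi_n}=\|\psi_n\|^2$, and the second is $\bigO(R_n^{-2})$. The potential part is $\SPn{\psi_n}{V\psi_n}-2\Re\SPn{V\psi_n}{\PAm\psi_n}+\SPn{\PAm\psi_n}{V\PAm\psi_n}$: by \eqref{V=O(1/R)} and $\|\psi_n\|=1$ the first summand is $\klg-\wt{\gamma}/((1+\delta)R_n)$; the second is $\bigO\big(R_n^{-1}\sup_{\supp\psi_n}\|V\|\big)=o(R_n^{-1})$ since $V\to0$ at infinity; and the third is $o(R_n^{-1})$ after writing $V=V\zeta^2+V(1-\zeta^2)$ with $\zeta$ a cutoff equal to $1$ near $0$ — the part carrying the bounded factor $V(1-\zeta^2)$ being $\bigO(\|\PAm\psi_n\|^2)=\bigO(R_n^{-2})$, and the singular part being controlled by $\|V\|\klg\gamma/|\cdot|$, Hardy's inequality, the uniform bound $\|\zeta\PAm\psi_n\|_{H^1}\klg C\,\|\DA\psi_n\|$ supplied by Lemma~\ref{VPAphi-in-L2}, and the exponential smallness of $\zeta\PAm\psi_n$ discussed below. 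Altogether the diagonal term is $\klg-\wt{\gamma}/((1+\delta)R_n)+o(R_n^{-1})\klg-c_0/R_n$.

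The off‑diagonal terms $o_{jk}:=\SPn{\PA\psi_{n_j}}{(\B{A,V}-1)\PA\psi_{n_k}}$, $j\ne k$, should be exponentially small in the larger radius. The structural input is \eqref{peki2}: conjugation of $\PA$ (hence of $\PAm=\id-\PA$) by a \emph{bounded} weight $e^{F}$ with $|\nabla F|\klg a<1$ remains bounded, so for $\chi$ supported at distance $d$ from $\supp\psi_m$ one gets $\|\chi\PA\psi_m\|+\|\chi\PAm\psi_m\|\klg C_a\,e^{-ad}$, i.e. $\PA\psi_m$ and $\PAm\psi_m$ are exponentially localized near $\supp\psi_m$. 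Because the chosen supports are separated by distances $\gtrsim R_{n_{\max}}$ and $\DA-1$ and $V$ act locally, each of the finitely many pieces into which $o_{jk}$ decomposes pairs an $O(1)$ factor with such an exponentially small tail — the only non‑routine point again being the $1/|x|$ singularity of $V$ at the origin, dealt with exactly as in the diagonal estimate — which yields $|o_{jk}|\klg C_*\,e^{-\kappa R_{n_{\max(j,k)}}}$ with $C_*,\kappa>0$ independent of the indices and of $N$. A Schur‑type bound then gives $\big|\sum_{j\ne k}\bar c_jc_k\,o_{jk}\big|\klg\sum_j|c_j|^2\sum_{k\ne j}|o_{jk}|\klg\sum_j|c_j|^2\cdot 2C_*N\,e^{-\kappa R_{n_j}}$. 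Since $x\,e^{-\kappa x}\to0$ and $R_n\nearrow\infty$, I would finally pick the $n_j$ so large that the diagonal bound is in force and $2C_*N\,e^{-\kappa R_{n_j}}<c_0/(2R_{n_j})$ for all $j$; then for every $0\ne\vp^+=\sum_jc_j\,\PA\psi_{n_j}\in\cM_N$ one gets $\SPn{\vp^+}{(\B{A,V}-1)\vp^+}\klg-\tfrac{c_0}{2}\sum_j|c_j|^2/R_{n_j}<0$, so $\mu_N(\B{A,V})<1$. As $N$ was arbitrary, $\B{A,V}$ has infinitely many eigenvalues below $1$.

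The main obstacle is quantitative bookkeeping. Since the nucleus lies infinitely far from $\supp\psi_n$, the attractive Coulomb tail only lowers the energy by $O(1/R_n)$, so \emph{every} error term — all those produced by the non‑locality of $\PA$, everything coming from the Coulomb singularity at $0$, and all the off‑diagonal contributions $o_{jk}$ — has to be shown to be $o(1/R_n)$ or exponentially small, rather than merely $o(1)$ as in Lemma~\ref{thm-upper-bd}. The exponentially weighted projector bounds of Lemma~\ref{le-peki}, combined with the local $H^1$‑regularity of $\PA\psi_n$ from Lemma~\ref{VPAphi-in-L2} and Hardy's inequality, are precisely what make these estimates available; and some care is needed in choosing the $n_j$ spread out enough that the exponentially small off‑diagonal terms outweigh the only $O(1/R_n)$‑sized diagonal gain after summation over $j$ and $k$.
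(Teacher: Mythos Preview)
Your strategy is the same as the paper's: use the projected Weyl vectors $\PA\psi_n(1)$ as trial functions, show the diagonal part of $\SPn{\cdot}{(\B{A,V}-1)\,\cdot}$ is $\klg -c/R_n$ while the off-diagonal part is exponentially small, and conclude by min--max. Your observation that $\SPn{\psi_n}{(\DA-1)\psi_n}=0$ exactly (because vanishing lower spinor components kill the $\alpha$-part and make the $\beta$-part equal $\|\psi_n\|^2$) is precisely what underlies the kinetic bound the paper quotes from \cite{MaSt08a}, and your treatment of the off-diagonal terms via conjugated projectors parallels Lemma~\ref{le-cross-terms}.

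The one place where the paper is noticeably slicker is the diagonal potential estimate. You split $\SPn{\PA\psi_n}{V\PA\psi_n}$ into three pieces and then work to show that the dangerous piece $\SPn{\PAm\psi_n}{V\PAm\psi_n}$ is $o(R_n^{-1})$ by an $H^1$/$L^2$ interpolation near the singularity. This works (Kato's inequality plus the interpolation $\SPn{u}{|\nabla|\,u}\klg\|u\|\,\|\nabla u\|$ does the job), but it is unnecessary: \eqref{V=O(1/R)} forces $V\klg0$ as a quadratic form, so one may simply replace $V$ by its bounded truncation $\id_{\{|x|\grg1\}}V$ from above, exactly as the paper does in deriving \eqref{august2}. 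After that replacement the potential term involves only a bounded multiplication operator and no Hardy-type argument is needed at all. Apart from this simplification, your proof and the paper's are essentially the same.
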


\begin{proof}
We construct appropriate trial functions
by means of the Weyl sequence
$\{\psi_n(1)\}_{n\in\NN}$
of Hypothesis~\ref{hyp-A-ex}(ii).
It is shown in \cite[Lemma~7.7]{MaSt08a}
that, for every $d\in\NN$, there is some $n_0\in\NN$
such that the set of vectors
$\{\PA\,\psi_n(1)\}_{n=m_0}^{m_0+d}$
is linearly independent, for all $m_0\in\NN$, $m_0\grg n_0$.
Setting
\[
\Psi\,:=\,\sum_{n=m_0}^{m_0+d}c_n\,\PA\,\psi_n(1)\,,
\]
for $c_{m_0},\ldots,c_{m_0+d}\in\CC$, 
we clearly have
\begin{eqnarray}
\lefteqn{
\SPn{\Psi}{(\B{A,V}-1)\,\Psi}\nonumber
}
\\
&\klg&\label{sergey1}
\,\sum_{n=m_0}^{m_0+d}|c_n|^2
\,\SPb{\PA\,\psi_n(1)}{(\DA-1+V)\,\PA\,\psi_n(1)}
\\
& &\,+\;\label{sergey2}
\sum_{{n,m=m_0\atop n\not=m}}^{m_0+d}|c_n|\,|c_m|
\,\big|\SPb{\PA\,\psi_n(1)}{(\DA-1+V)\,\PA\,\psi_m(1)}\big|\,.
\end{eqnarray}
We first comment on the terms in \eqref{sergey1}.
Employing the fact that the lower two spinor components
of $\psi_n(1)$ vanish, for all $n\in\NN$,
it is shown in \cite[Lemma~7.1]{MaSt08a}
that there is some $C\in(0,\infty)$ such that
\begin{equation}\label{august1}
0\,\klg\,\SPb{\PA\,\psi_n(1)}{(\DA-1)\,\PA\,\psi_n(1)}
\,\klg\,C\,R_n^{-2}\,,\qquad n\in\NN\,.
\end{equation}
Moreover, we find some constant $C'\in(0,\infty)$
such that, for all $n\in\NN$,
\begin{equation}\label{august2}
\SPb{\PA\,\psi_n(1)}{V\,\PA\,\psi_n(1)}
\,\klg\,-\frac{\wt{\gamma}}{(1+2\delta)\,R_n}
\,\big\|\,\PA\,\psi_n(1)\,\big\|^2
\,+\,C'\,e^{-R_n/C'}\,.
\end{equation}
In fact, 
since the quadratic form $V(x)$ is negative 
it clearly suffices to prove \eqref{august2}
with $V$ replaced by $V_\mathrm{r}:=\id_{\{|x|\grg1\}}\,V$.
Then its proof is, however, exactly the same
as the one of \cite[Lemma~7.3]{MaSt08a}.
(Just replace $\PAV$ by $\PA$ there.)
The terms in \eqref{sergey2} are treated in
Lemma~\ref{le-cross-terms} below, where we show that
\begin{equation}\label{august3}
\big|\SPb{\PA\,\psi_n(1)}{(\DA-1+V)\,\PA\,\psi_m(1)}\big|
\,=\,\bigO(R_n^{-\infty})\,,\qquad m>n\,,
\end{equation}
as $n$ tends to infinity.
Combining \eqref{jurek1} and \eqref{august1}-\eqref{august3} with
Hypothesis~\ref{hyp-Vogel} we find some $\delta_0>0$
such that
$$
\SPn{\Psi}{(\B{A,V}-1)\,\Psi}\,\klg\,-\delta_0
\sum_{n=m_0}^{m_0+d}|c_n|^2\,,
$$
for all $c_{m_0},\ldots,c_{m_0+d}\in\CC$,
provided $m_0\in\NN$ is sufficiently
large (depending on $d$). This implies the assertion
of the theorem.
\end{proof}

\begin{lemma}\label{le-cross-terms}
Assertion \eqref{august3} holds
under the assumptions of Theorem~\ref{thm-eigenvalues}.
\end{lemma}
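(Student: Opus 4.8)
The plan is to reduce everything to the following exponential off-diagonal bound for $\PA$: for all $\vartheta_1,\vartheta_2\in C^\infty(\RR^3,[0,1])$, every fixed $a\in(0,1)$, and $d:=\dist(\supp\vartheta_1,\supp\vartheta_2)$,
\begin{equation}\label{plan-offdiag}
\big\|\,\vartheta_1\,\PA\,\vartheta_2\,\big\|\,\klg\,C_a\,e^{-ad/2}\,,\qquad
C_a\,:=\,1+\tfrac{\sqrt6}{2}\tfrac{a}{1-a^2}\,,
\end{equation}
the constant being independent of $A$. Indeed, pick a bounded $F\in C^\infty(\RR^3,[0,\infty))$ with $|\nabla F|\klg a$, $F\equiv0$ on $\supp\vartheta_2$, and $F\grg ad/2$ on $\supp\vartheta_1$, and write $\vartheta_1\PA\vartheta_2=(\vartheta_1\,e^{-F})\,\PAF\,(e^F\vartheta_2)$; then $\|\vartheta_1\,e^{-F}\|_\infty\klg e^{-ad/2}$, $\|e^F\vartheta_2\|_\infty\klg1$, and $\|\PAF\|\klg C_a$ by~\eqref{peki2}.

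Abbreviate $\psi_k:=\psi_k(1)$. By~\eqref{supp-psin-ex}, for large $n$ and all $m>n$ the sets $\supp\psi_n$ and $\supp\psi_m$ have distance $\grg(1-\delta)R_n$, and $\{|x|\klg2\rho\}$ lies at distance $\grg R_n/2$ from each; accordingly I may fix smooth cut-offs $\wt\chi_n,\wt\chi_m\in C^\infty(\RR^3,[0,1])$, equal to $1$ on $\supp\psi_n$ resp.\ $\supp\psi_m$, whose supports are small neighbourhoods of $\supp\psi_n$ resp.\ $\supp\psi_m$ disjoint from $\{|x|\klg2\rho\}$ and from the support of the other Weyl vector, with $\|\nabla\wt\chi_n\|_\infty+\|\nabla\wt\chi_m\|_\infty=\bigO(R_n^{-1})$. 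Since $\PA$ commutes with $\DA$, and $(\DA-1)\psi_m$ is supported in $\supp\psi_m$,
$$
\SPb{\PA\psi_n}{(\DA-1)\,\PA\psi_m}
=\SPb{\PA\psi_n}{(\DA-1)\,\psi_m}
=\SPb{\wt\chi_m\,\PA\,\wt\chi_n\psi_n}{(\DA-1)\,\psi_m}\,;
$$
by~\eqref{plan-offdiag} (the supports of $\wt\chi_m$ and $\wt\chi_n$ being $\grg c\,R_n$ apart) and~\eqref{Weyl-DA-ex}, its modulus is $\bigO(e^{-c'R_n})\cdot\bigO(R_n^{-1})$ for suitable $c,c'>0$.

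For the potential part I split $V=\zeta V+(1-\zeta)V$ with $\zeta\in C_0^\infty(\{|x|<2\rho\},[0,1])$, $\zeta\equiv1$ on $\{|x|\klg\rho\}$, so that $(1-\zeta)V$ is bounded while $\zeta V$ is supported in the fixed ball $\{|x|<2\rho\}$ and dominated there by $\gamma/|\cdot|$. For the bounded part I insert a cut-off $\kappa\in C^\infty(\RR^3,[0,1])$ which is $1$ near $\supp\wt\chi_n$ and $0$ on $\supp\psi_m$, transiting across the gap; then $\SPn{\PA\psi_n}{(1-\zeta)V\,\kappa\,\PA\psi_m}$ is bounded by $\|(1-\zeta)V\|_\infty\|\kappa\,\PA\,\wt\chi_m\|$, and, moving the self-adjoint multiplication operator $(1-\zeta)V(1-\kappa)$ onto the left factor, $\SPn{\PA\psi_n}{(1-\zeta)V(1-\kappa)\,\PA\psi_m}$ by $\|(1-\zeta)V\|_\infty\|(1-\kappa)\,\PA\,\wt\chi_n\|$ (up to the obvious factors $\|\PA\psi_n\|,\|\psi_m\|\klg1$); both of $\|\kappa\PA\wt\chi_m\|,\|(1-\kappa)\PA\wt\chi_n\|$ are $\bigO(e^{-c'R_n})$ by~\eqref{plan-offdiag}. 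For the singular part I write $\SPn{\PA\psi_n}{\zeta V\,\PA\psi_m}=\SPn{\vartheta_0\,\PA\psi_n}{\zeta V\,\PA\psi_m}$ for some $\vartheta_0\in C_0^\infty(\{|x|<3\rho\},[0,1])$ equal to $1$ on $\supp(\zeta V)$: here $\|\vartheta_0\,\PA\,\wt\chi_n\|=\bigO(e^{-c'R_n})$ by~\eqref{plan-offdiag}, whereas, writing $\zeta V=(|\cdot|\,\zeta V)\,|\cdot|^{-1}\zeta''$ with $\zeta''\equiv1$ on $\supp\zeta$ and $M_0:=\|\,|\cdot|\,\zeta V\,\|_\infty<\infty$, and using $\zeta''\wt\chi_m=0$,
$$
\big\|\,\zeta V\,\PA\psi_m\,\big\|
\,\klg\,M_0\,\big\|\,|\cdot|^{-1}\zeta''\,[\PA,\wt\chi_m]\,\psi_m\,\big\|
\,=\,\bigO(R_n^{-1})
$$
by~\eqref{Vpeki} with $F=0$. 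Summing the contributions gives $|\SPn{\PA\psi_n}{(\DA-1+V)\PA\psi_m}|=\bigO(e^{-c'R_n})=\bigO(R_n^{-\infty})$, uniformly in $m>n$, as claimed. The only genuinely delicate point in this scheme is the Coulomb-type singularity of $V$ at the origin: it is confined to the fixed ball $\{|x|\klg2\rho\}$, which for large $n$ lies at distance $\gtrsim R_n$ from both $\supp\psi_n$ and $\supp\psi_m$, so that $\PA\psi_n$ and $\PA\psi_m$ are already exponentially small there, and the singularity itself is neutralised through the Hardy inequality built into~\eqref{Vpeki}, at the mere cost of a polynomial factor $R_n^{-1}$; everything else is a direct consequence of~\eqref{plan-offdiag}.
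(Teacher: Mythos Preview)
Your argument is correct and follows the same strategy as the paper: both rest on the exponential off-diagonal decay of $\PA$ furnished by~\eqref{peki2} (your bound~\eqref{plan-offdiag} is its immediate reformulation) together with~\eqref{Vpeki} to absorb the Coulomb singularity at the origin. The paper is organised a little more compactly, conjugating directly by a single family of weight functions $F_{k\ell}$ (vanishing on $\supp\psi_k$, constant near $0$ and at infinity, $|\nabla F_{k\ell}|\klg a<1$, arranged so that $\|e^{-F_{k\ell}-F_{\ell k}}\|_\infty\klg Ce^{-a'\min(R_k,R_\ell)}$) and splitting $V=\vt_nV+(1-\vt_n)V$ with one radial cut-off, whereas you reach the same estimates through several auxiliary cut-offs; one small point to tighten in your write-up is that for $\|\kappa\,\PA\,\wt\chi_m\|=\bigO(e^{-c'R_n})$ via~\eqref{plan-offdiag} you need $\kappa\equiv0$ on an $R_n$-scale neighbourhood of $\supp\wt\chi_m$, not merely on $\supp\psi_m$ (and likewise $\kappa\equiv1$ on an $R_n$-scale neighbourhood of $\supp\wt\chi_n$), which the gap of width $\grg(1-\delta)R_n$ easily permits.
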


\begin{proof}
We pick a family of smooth weight functions, 
$\{F_{k\ell}\}_{k,\ell\in\NN}$,
such that $F_{k\ell}\equiv0$ on $\supp(\psi_k(1))$,
$F_{k\ell}$ is constant on $\{|x|\klg1\}$
and outside some ball containing
$\supp(\psi_k(1))$ and $\supp(\psi_\ell(1))$, 
$\|\nabla F_{k\ell}\|_\infty\klg a<1$, and
$$
g_{k\ell}\,:=\,\|e^{-F_{k\ell}-F_{\ell k}}\|_\infty
\,\klg\,C\,e^{-a'\,\min\{R_k,R_\ell\}}\,,
\qquad k,\ell\in\NN\,,
$$
where $a,a'\in(0,1)$ and $C\in(0,\infty)$ do not depend
on $k,\ell\in\NN$.
Such a family exists
because of \eqref{supp-psin-ex}.
We then have
\begin{eqnarray*}
\lefteqn{
\big|\SPb{\PA\,\psi_n(1)}{(\DA-1)\,\PA\,\psi_m(1)}\big|
}
\\
&\klg&
\big\|\,e^{-F_{mn}-F_{nm}}\,\psi_n(1)\,\big\|
\,
\big\|\,e^{F_{mn}}\,\PA\,e^{-F_{mn}}\,\big\|\,
\big\|\,(\DA-1)\,\psi_m(1)\,\big\|\,\klg\,g_{nm}\,C\,R_m^{-1}\,,
\end{eqnarray*}
where $C\in(0,\infty)$ neither depends on $n$ nor $m$.
In order to treat the term involving $V$
we let $\{\vt_n\}_{n\in\NN}$ denote the
sequence of cut-off functions
constructed in the proof of Lemma~\ref{thm-upper-bd}.
Then $(1-\vt_n)\,\psi_n=0$, $\|\vt_n\,V\|\klg C'$ and, 
applying \eqref{Vpeki},
we find some $C''\in(0,\infty)$ such that,
for all $n,m\in\NN$,
\begin{eqnarray*}
\lefteqn{
\big|\SPb{\psi_n(1)}{\PA\,V\,\PA\,\psi_m(1)}\big|
}
\\
&\klg&
g_{nm}\,\big\|\,e^{F_{mn}}\,\PA\,e^{-F_{mn}}\,\big\|\,
\big\|\,\vt_n\,V\,\big\|\,
\big\|\,e^{F_{mn}}\,\PA\,e^{-F_{mn}}\,\psi_m(1)\,\big\|
\\
& &+\,
g_{nm}\,\big\|\,e^{F_{mn}}\,\PA\,e^{-F_{mn}}\,\big\|\,
\big\|\,V\,[(1-\vt_n)\,e^{F_{mn}},\,\PA]\,e^{-F_{mn}}\,\psi_m(1)\,\big\|
\,\klg\,C''\,g_{nm}\,.
\end{eqnarray*}
\end{proof}


\section{Pointwise exponential decay}

\noindent
To begin with we construct a family of cut-off functions
which is used throughout this section.
Let $\theta\in C^\infty(\RR,[0,1])$
satisfy $\theta\equiv0$ on $(-\infty,1]$ and
$\theta\equiv1$ on $[2,\infty)$.
For $r\in(0,1/2)$ and $R\grg1$, we define 
$\chi\equiv\chi_{r,R}\in C_0^\infty(\RR^3,[0,1])$ 
by
\begin{equation}\label{def-chi}
\forall\;x\in\RR^3\::\qquad\chi(x)\,:=\,\chi_{r,R}(x)\,:=\,
\left\{
\begin{array}{ll}
\theta(|x|/r),&|x|\klg1
\\
\theta(3-|x|/R),&|x|>1.
\end{array}
\right.
\end{equation}
Then, for all $r\in(0,1/2)$ and
every multi-index
$\beta\in\NN_0^3$, we find
some constant $C(\beta,r)\in(0,\infty)$
such that
\begin{equation}\label{props-chi}
\forall\;R\grg1\::\quad
\|\partial_x^\beta\,\chi_{r,R}\|_\infty\,\klg\,C(\beta,r)\,.
\end{equation}
Furthermore, we set, for $r\in(0,1/2)$ and $x\in\RR^3$,
\begin{eqnarray}
\wt{\chi}(x)&:=&\wt{\chi}_r(x)\,:=\,
\theta(4|x|/r)\,,
\end{eqnarray}
so that $\chi=\chi\,\wt{\chi}$. 
We also fix some exponential weight function in what follows.
Let $\kappa\in C^\infty(\RR,[0,\infty))$
satisfy $\kappa\equiv0$ on $(-\infty,1]$,
$\kappa(t)=t-2$, for $t\in[3,\infty)$,
and $0\klg\kappa'\klg1$ on $\RR$.
Then we define $f\in C^\infty(\RR^3,[0,\infty))$
by
$f(x):=\kappa(|x|)$, $x\in\RR^3$,
so that
\begin{equation}\label{props-f}
\|\nabla f\|\,\klg\,1\,,\quad\forall\;\beta\in\NN_0^3\,,\;|\beta|>1\quad
\exists\;C(\beta)\in(0,\infty)\::\quad
\|\partial^\beta_xf\|_\infty\,\klg\,C(\beta)\,.
\end{equation}
On account of \eqref{exp-est-A-infty} we further have
\begin{equation}\label{hyp-A-infty-f}
\forall\;\ve>0\,,\;\beta\in\NN^3_0\;\;
\exists\;K'(\ve,\beta)\in(0,\infty)\::\quad
\big\|\,\partial_x^\beta(e^{-\ve f}\,A)\,\big\|_\infty
\,\klg\,K'(\ve,\beta)\,.
\end{equation}
In view of Sobolev's embedding theorem we shall obtain
Theorem~\ref{main-thm} as an immediate
consequence of the following result, where
$$
\|\psi\|_k\,:=\,\|\psi\|_{H^k}\,,\qquad
\psi\in H^k\,:=\,H^k(\RR^3,\CC^4)\,.
$$

\begin{theorem}\label{thm-exp-Hk}
Assume that $A$ and $V$ fulfill Hypothesis~\ref{hyp-AV-smooth}
with $\gamma\in(0,\gamma_{\mathrm{c}}]$.
Let $\phi_\lambda$ denote a normalized eigenvector of $\B{A,V}$
corresponding to an eigenvalue $\lambda\in(-\infty,1)$ and let
$\triangle$ be the function given by \eqref{def-triangle}.
Then, for all $a\in [0,\triangle(\lambda))$,
$r\in(0,1/2)$, $R\grg1$, and
$k\in\NN_0$, we have $\chi_{r,R}\,e^{af}\phi_\lambda\in H^k(\RR^3,\CC^4)$
and we find some 
$C(a,r,k)\in(0,\infty)$ such that
\begin{equation}\label{exp-Hk-est}
\forall\;R\grg1\::\quad
\big\|\,\chi_{r,R}\,e^{af}\phi_\lambda\,\big\|_k\,\klg\,C(a,r,k)\,.
\end{equation}
\end{theorem}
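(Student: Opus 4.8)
The plan is to prove Theorem~\ref{thm-exp-Hk} by induction on $k$, the inductive hypothesis being, for the fixed eigenvalue $\lambda$ and with $\triangle=\triangle(\lambda)$: \emph{for every $a\in[0,\triangle)$, $r\in(0,1/2)$ and $R\grg1$ one has $\chi_{r,R}\,e^{af}\phi_\lambda\in H^k$ together with a bound $\|\chi_{r,R}\,e^{af}\phi_\lambda\|_k\klg C(a,r,k)$ in which $C(a,r,k)$ does not depend on $R$.} The case $k=0$ is immediate from Theorem~\ref{main-thm-L2}: since $f\klg|\cdot|$ and $0\klg\chi_{r,R}\klg1$ one gets $\|\chi_{r,R}\,e^{af}\phi_\lambda\|\klg\|e^{a|\cdot|}\phi_\lambda\|<\infty$ for all $a<\triangle$, uniformly in $R$.

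For the step from $k$ to $k+1$ I would begin with the eigenvalue equation $\DA\phi_\lambda+\PA\,V\phi_\lambda=\lambda\phi_\lambda$, substitute $\DA=\D{0}+\alpha\cdot A$, and multiply by the bounded, compactly supported cutoff $g:=\chi_{r,R}\,e^{af}$, commuting $g$ through $\D{0}$. With $\chi:=\chi_{r,R}$ this gives
\begin{equation*}
\D{0}(g\,\phi_\lambda)\,=\,\lambda\,g\,\phi_\lambda\,-\,g\,\alpha\cdot A\,\phi_\lambda\,-\,g\,\PA\,V\phi_\lambda\,-\,i\,\alpha\cdot\nabla g\;\phi_\lambda\,.
\end{equation*}
Because $\D{0}^2=1-\Delta$ and $\D{0}$ commutes with $(1-\Delta)$ one has the identity $\|\D{0}u\|_k=\|u\|_{k+1}$; once the right-hand side above is seen to lie in $H^k$, a routine difference-quotient bootstrap upgrades $\phi_\lambda$ to $H^{k+1}_\loc$ away from the nucleus, so that this identity may be applied with $u=g\,\phi_\lambda$. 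It thus remains to estimate the four terms on the right in $H^k$, uniformly in $R$.

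Three of these are routine. For $\lambda\,g\,\phi_\lambda$ and for $\alpha\cdot\nabla g\;\phi_\lambda$ (using $\nabla g=(\nabla\chi)\,e^{af}+a\,(\nabla f)\,g$), one introduces a slightly larger cutoff $\wt\chi:=\chi_{r/2,2R}$, which equals $1$ on $\supp\chi$, hence on $\supp\nabla\chi$, and whose derivatives are bounded uniformly in $R$ by \eqref{props-chi}, and reduces everything to $\|\wt\chi\,e^{af}\phi_\lambda\|_k$ and $\|\chi\,e^{af}\phi_\lambda\|_k$, employing in addition \eqref{props-f} for the derivatives of $f$; these are controlled by the induction hypothesis at level $k$. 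The magnetic term $g\,\alpha\cdot A\,\phi_\lambda$ is where one \emph{borrows a little exponential decay}: choose $\ve>0$ with $a+\ve<\triangle$ and write $g\,\alpha\cdot A\,\phi_\lambda=(e^{-\ve f}\alpha\cdot A)\,(\chi\,e^{(a+\ve)f}\phi_\lambda)$; by \eqref{hyp-A-infty-f} the first factor has bounded derivatives of every order, and the second is controlled in $H^k$ by the induction hypothesis with the larger rate $a+\ve<\triangle$. The same device will also absorb the $A$-dependent contributions and the $R$-growth of $\nabla g$ that appear below.

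The genuine difficulty, and the step I expect to be the main obstacle, is the non-local term $g\,\PA\,V\phi_\lambda$, to which the local identities above cannot be applied. I would write $g\,\PA=\PA\,g+[\,g\,,\PA\,]$ and represent the commutator via \eqref{basic-for-comm} as the norm-convergent integral $\int_\RR\RA{iy}\,i\alpha\cdot\nabla g\,\RA{iy}\,(\,\cdot\,)\,\tfrac{dy}{2\pi}$. On $\supp\chi$ the potential $V$ is smooth with all derivatives bounded uniformly in $R$ (Hypothesis~\ref{hyp-AV-smooth}, \eqref{hyp-V-Cinfty}), hence $g\,V\phi_\lambda=(\chi V)\,(\wt\chi\,e^{af}\phi_\lambda)$ has, by the induction hypothesis, $R$-uniformly bounded weighted $H^k$-norms. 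The term $\PA(g\,V\phi_\lambda)$ and the commutator term are then controlled using \eqref{Cauchy-PV}, \eqref{res-id} and the weighted resolvent bound of Lemma~\ref{le-marah}: splitting $\PA=\PO+(\PA-\PO)$, the contribution $\PO(g\,V\phi_\lambda)$ is harmless because $\PO$ is an order-zero Fourier multiplier, while the remaining contributions carry two resolvent factors of $\DA$ that produce an integrable weight $(1+y^2)^{-1}$ and, $A$ being smooth, admit a gain of one derivative; they are estimated in $H^k$ by the induction hypothesis once the subexponential growth of $A$ and the $R$-growth of $\nabla g$ are offset against a small fraction of the exponential decay of $g\,V\phi_\lambda$. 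Combining the four estimates yields \eqref{exp-Hk-est} at level $k+1$, closing the induction; Theorem~\ref{main-thm} then follows from \eqref{exp-Hk-est} and the Sobolev embedding theorem.
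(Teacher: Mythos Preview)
Your inductive architecture is exactly the paper's: start from Theorem~\ref{main-thm-L2} for $k=0$, multiply the eigenvalue equation by $g=\chi_{r,R}\,e^{af}$, and show the right-hand side lies in $H^k$ uniformly in $R$. The three ``routine'' terms are handled the same way the paper handles \eqref{fedi1} and \eqref{fedi2}, including the borrowing of $e^{-\ve f}$ to tame $A$. The non-local term $g\,\PA\,V\phi_\lambda$ is where your sketch becomes too optimistic, and the gaps are the following.

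First, the phrase ``two resolvent factors of $\DA$ \ldots\ admit a gain of one derivative'' is not what happens. A factor $\RA{iy}$ gains no Sobolev regularity when $A$ is unbounded; only $\R{0}{iy}$ does. A single application of the resolvent identity to $\PA-\PO$ produces one $\R{0}{iy}$ and one $\RA{iy}$, so you gain at most one derivative, and for $k\grg2$ this does not give $(\PA-\PO)\,e^{-\ve f}\in\LO(H^k)$. The paper's remedy is Lemma~\ref{expo-xia}: iterate the resolvent identity $N$ times, distributing the damping $e^{-\ve f}$ in pieces $e^{-\ve f/(N+1)}$, so that every term but the last is a product of $\R{0}{iy}$'s and bounded multipliers $\alpha\cdot\sA_j\,e^{-\ve f/(N+1)}$, and the last term has $N{+}1$ factors $\R{0}{iy}$ followed by a single $\R{\sA_{N+1}}{iy}$ that only has to be bounded on $L^2$. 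With $N=k$ this yields the $H^k$ mapping property (Lemma~\ref{le-fedi3}). Your one-step description does not.

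Second, and more seriously, in your decomposition $g\,\PA=\PA\,g+[g,\PA]$ the commutator hits $V\phi_\lambda$, which lies only in $L^2$ and carries the Coulomb singularity at $0$; there is no factor $g$ on that side, so the ``exponential decay of $g\,V\phi_\lambda$'' you invoke is simply unavailable for this term. You would need $[g,\PA]:L^2\to H^k$, and the integrand $\RA{iy}\,(i\alpha\cdot\nabla g)\,\RA{iy}$ does not provide that by itself. The paper avoids this by splitting $V=\wt{\chi}\,V+(1-\wt{\chi})\,V$ \emph{before} touching $\PA$. The smooth piece $\wt{\chi}\,V\,e^{\tilde a f}\phi_\lambda\in H^k$ is fed into $\chi\,(e^{af}\PA e^{-af}-\PO)\,e^{-\ve f}$ and handled by Lemma~\ref{le-fedi3} (terms \eqref{fedi3a}--\eqref{fedi3}). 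The singular piece produces the operator $U\vp=(1-\wt{\chi})\,V\,e^{-af}\,\PA\,e^{af}\,\chi\,\vp$, for which $\supp(1-\wt{\chi})$ and $\supp\chi$ are disjoint; this spatial separation makes $\chi\,\R{\sA_0}{iy}\,\chi_{-1}^<$ map $L^2\to H^N$ with an integrable $y$-weight (Corollary~\ref{cor-fedi1}), and hence $U^*:L^2\to H^k$ uniformly in $R$ (Lemma~\ref{le-fedi4}, term \eqref{fedi4}). Your sketch contains no analogue of this disjoint-support smoothing, and without it the commutator route does not close.
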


\begin{proof}
Of course, we prove the assertion by induction in $k\in\NN_0$.
The case $k=0$ is follows from Theorem~\ref{main-thm-L2}.
So suppose that the assertion holds true for some $k\in\NN_0$
and let
$\beta\in\NN_0^3$ be some multi-index with length $|\beta|=k$.
We pick some $a\in[0,\triangle(\lambda))$ and
set $\ve:=(\triangle(\lambda)-a)/2$, 
$\tilde{a}:=a+\ve$. 
Then the induction hypothesis together with
\eqref{hyp-V-Cinfty}, \eqref{props-chi}, \eqref{props-f},
and a simple limiting argument
implies that $\wt{\chi}\,V\,e^{\tilde{a}f}\,\phi_\lambda\in H^k$.
Using also Lemma~\ref{VPAphi-in-L2} and
$\PA\,\phi_\lambda=\phi_\lambda$,
we may thus write,
for every $\psi\in\COI$,
\begin{eqnarray}
\lambda\,\SPb{\chi\,e^{af}\,\partial^\beta_x\psi}{\phi_\lambda}
&=&\nonumber
\SPb{\chi\,e^{af}\,\partial^\beta_x\psi}{\B{A,V}\,\phi_\lambda}
\\
&=&\nonumber
\SPb{(\D{0}+\alpha\cdot A+V\,\PA)\,\chi\,e^{af}\,
\partial^\beta_x\psi}{
\PA\,\phi_\lambda}
\\
&=&\nonumber
\SPb{\D{0}\,\partial^\beta_x\psi}{\chi\,e^{af}\,\phi_\lambda}
\\
& &\;+\;
\SPb{\label{fedi1}
\partial^\beta_x\psi}{(e^{-\ve f}\,\alpha\cdot A)\,
(\chi\,e^{\tilde{a}f}\,\phi_\lambda)}
\\
& &\;-\;\label{fedi2}
\SPb{
\partial^\beta_x\psi}{i\alpha\cdot(\nabla\chi+a\,\chi\,\nabla f)\,
(e^{af}\,\wt{\chi}\,\phi_\lambda)}
\\
& &\;+\;\label{fedi3a}
\SPb{\partial^\beta_x\psi}{
\chi\,\R{0}{0}^k\,\PO\,\D{0}^k\,
(\wt{\chi}\,V\,e^{af}\,\phi_\lambda)}
\\
& &\;+\;\label{fedi3}
\SPb{\partial^\beta_x\psi}{
\chi\,(e^{af}\,\PA\,e^{-a f}-\PO)\,e^{-\ve f}\,
(\wt{\chi}\,V\,e^{\tilde{a}f}\,\phi_\lambda)}
\\
& &\;+\;\label{fedi4}
\SPb{(1-\wt{\chi})V\,e^{-af}\,\PA\,e^{af}\,\chi\,\partial^\beta_x\psi}{
e^{af}\,\phi_\lambda}
\,.
\end{eqnarray}
By the induction hypothesis, by \eqref{hyp-A-infty-f}, and by the choice of
$f$, $\chi$, and $\wt{\chi}$, it is clear that the vectors
in the right entries of the scalar products \eqref{fedi1}-\eqref{fedi3a} 
belong to $H^k$ and that their $H^k$-norms
are bounded by constants that do not depend on $R\grg1$.
Since
$\|\wt{\chi}\,V\,e^{\tilde{a}f}\,\phi_\lambda\|_k\klg\const(a,r)$,
Lemma~\ref{le-fedi3} below implies
that the right entry in \eqref{fedi3} is bounded
in $H^k$, uniformly in $R\grg1$, too. 
In order to treat the term in \eqref{fedi4} we set, for 
$\vp\in \COI$,
\begin{eqnarray*}
U\,\vp\,:=\,U_{r,R}\,\vp&:=&(1-\wt{\chi})\,V\,e^{-af}\,
\PA\,e^{af}\,\chi\,\vp
\\
&=&
V\,\big[\,(1-\wt{\chi})\,e^{-af_R}\,,\,\PA\,\big]\,e^{af_R}\,\chi\,\vp
\,.
\end{eqnarray*}
Here we are allowed to replace $f$ by some regularized weight function, 
$f_R\in C^\infty(\RR^3,[0,\infty))\cap L^\infty(\RR^3,\RR)$,
satisfying $f_R(x)=f(x)$, for $|x|\klg2R$, and $|\nabla f_R|\klg1$,
since $1-\wt{\chi}$ and $\chi$ vanish outside $\{|x|\klg2R\}$.
In view of \eqref{Vpeki} we hence know \`{a}-priori that
$U$ extends to a bounded operator on $\HR$.
Moreover, we show in Lemma~\ref{le-fedi4} below 
that 
\begin{equation}\label{Beh-U}
\exists\; C'(a,r,k)\in(0,\infty)\;\;\forall\;R\grg1\;:\quad
\|U^*\|_{\LO(\HR,H^k)}\,\klg\,C'(a,r,k)\,.
\end{equation}
Altogether this implies that the weak derivate
$(-1)^{|\beta|}\partial_x^\beta\,\D{0}\,\chi\,e^{af}\,\phi_\lambda$
exists and belongs to $\HR$
with $\HR$-norm uniformly bounded in $R\grg1$.
\end{proof}

\bigskip

\noindent
In order to prove Lemmata~\ref{le-fedi3} and~~\ref{le-fedi4}
we shall compare $e^{af}\,\RA{iy}\,e^{-af}$
with $\R{0}{iy}$. To this end
we have to regularize the difference of these two operators
by multiplying it with an exponential damping factor
(borrowed from $\phi_\lambda$ in the previous proof), as the components of
$A(x)$ might increase very quickly when $|x|$ gets large.

For $j,N\in\NN_0$, $j\klg N+1$, $a\in[0,1)$, and $\ve\in[0,1-a)$,
we abbreviate
\begin{eqnarray*}
\sA_j\,\equiv\,\sA_j^{a,\ve,N}&:=&A\,+\,i\big(a+j\,\ve/(N+1)\big)\,
\nabla f\,,
\\
\D{\sA_j}&:=&\DA\,+\,i\big(a+j\,\ve/(N+1)\big)\,
\alpha\cdot\nabla f\,,
\\
\R{\sA_j}{iy}&:=&(\D{\sA_j}-iy)^{-1}\,,\qquad y\in\RR\,.
\end{eqnarray*}
Here 
$
iy\in\vr(\D{\sA_j})
$, $y\in\RR$, because of Lemma~\ref{le-marah}, and
$\dom(\D{\sA_j})=\dom(\DA)$, since $\nabla f$ is bounded.
For $n\in\NN_0$ and 
$T_{0},\ldots,T_n\in\LO(\HR)$, we further set
\begin{equation*}
\prod_{j=0}^nT_j\, :=\, T_{0}\,T_{1}\,\cdots\,T_{n}
\,,
\qquad
\sum_{j=1}^0T_j\,:=\,0
\,.
\end{equation*}

\begin{lemma}\label{expo-xia}
Assume that $A$ fulfills Hypothesis \ref{hyp-AV-smooth} and let
$N\in\NN_0$, $a\in[0,1)$, $\ve\in[0,1-a)$, and $y\in\RR$. Then
the following identity holds true, 
\begin{eqnarray}
\lefteqn{
\big(\R{\sA_0}{iy}-\R{0}{iy}\big)\,e^{-\ve f}\nonumber
}
\\
&=&\nonumber
\sum_{k=1}^N\,(-1)^k\,
\Big\{\prod_{j=0}^{k-1}\big(\,\R{0}{iy}\,\alpha\cdot \sA_j\,
e^{-\ve f/(N+1)}\,\big)\Big\}
\,
\R{0}{iy}\,e^{-\ve(N+1-k) f/(N+1)}
\\
& &\label{exus}
+\, (-1)^{N+1}\,\Big\{
\prod_{j=0}^{N}\big(\,\R{0}{iy}\,
\alpha\cdot \sA_j\,e^{-\ve f/(N+1)}\,\big)\Big\}
\,\R{\sA_{N+1}}{iy}
\,.
\end{eqnarray}
In particular, there is some $C(k,a,\ve)\in(0,\infty)$
such that
\begin{equation}\label{HkHk}
\forall\;y\in\RR\;:
\quad 
\big\|\,\big(\R{\sA_0}{iy}-\R{0}{iy}\big)\,e^{-\ve f}
\,\big\|_{\LO(H^N)}\,\klg\, 
\frac{C(N,a,\ve)}{1+y^2}\,.
\end{equation}
\end{lemma}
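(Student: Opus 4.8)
\noindent\emph{Proof strategy.}
The plan is to prove the operator identity~\eqref{exus} by iterating a resolvent identity of the form~\eqref{res-id}, with the weight $e^{-\ve f}$ split into $N+1$ equal factors and interleaved, and then to deduce~\eqref{HkHk} by counting resolvent factors. Write $c:=\ve/(N+1)$ and $g_j:=a+jc$, so that $g_0=a$, $g_{N+1}=a+\ve<1$, and $g_j\in[0,1)$ for all $j$. Two facts drive the argument. First, since $f\grg0$ and $|\nabla(g_jf)|\klg g_j<1$, Lemma~\ref{le-marah} applies with $\lambda=0$ and weight $g_jf$ and gives, in the sense of~\eqref{marah0}, the conjugation formula $\R{\sA_j}{iy}=e^{g_jf}\,\RA{-iy}\,e^{-g_jf}$; consequently $\R{\sA_j}{iy}\,e^{-cf}=e^{-cf}\,\R{\sA_{j+1}}{iy}$ as an everywhere-defined identity on $\HR$ for $0\klg j\klg N$, and~\eqref{marah1} (again $\lambda=0$, rate $g_{N+1}$) yields $\|\R{\sA_{N+1}}{iy}\|_{\LO(\HR)}\klg C(a,\ve)(1+y^2)^{-1/2}$. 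Secondly --- and this is the whole point of splitting the weight --- although $\alpha\cdot\sA_j$ is in general an unbounded multiplication operator when $A$ grows, the product $\alpha\cdot\sA_j\,e^{-cf}$ is multiplication by a $C^\infty$ function all of whose derivatives are bounded, by~\eqref{hyp-A-infty-f} (for the $A$-part) and~\eqref{props-f} (for the $\nabla f$-part); in particular it is a bounded operator on every $H^k$.

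\smallskip\noindent
For~\eqref{exus} I would argue by induction on $m=0,1,\dots,N$, proving that
\[
\big(\R{\sA_0}{iy}-\R{0}{iy}\big)e^{-\ve f}=\sum_{k=1}^m(-1)^k\Big\{\prod_{j=0}^{k-1}\R{0}{iy}\,\alpha\cdot\sA_j\,e^{-cf}\Big\}\R{0}{iy}\,e^{-(N+1-k)cf}+R_m,
\]
where $R_m:=(-1)^m\big\{\prod_{j=0}^{m-1}\R{0}{iy}\,\alpha\cdot\sA_j\,e^{-cf}\big\}\big(\R{\sA_m}{iy}-\R{0}{iy}\big)e^{-(N+1-m)cf}$, so that $R_0$ is the left-hand side (recall $(N+1)c=\ve$). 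In the inductive step one splits off one factor $e^{-cf}$ from $e^{-(N+1-m)cf}$ and uses the resolvent identity for the pair $(\D{\sA_m},\D{0})$ --- with this $e^{-cf}$, together with the residual damping, in the role of the cutoff $\mu$ in~\eqref{res-id}, so that all operators appearing are bounded --- together with the conjugation formula to turn $\big(\R{\sA_m}{iy}-\R{0}{iy}\big)e^{-cf}$ into $-\R{0}{iy}\,\alpha\cdot\sA_m\,e^{-cf}\,\R{\sA_{m+1}}{iy}$; splitting $\R{\sA_{m+1}}{iy}=\R{0}{iy}+(\R{\sA_{m+1}}{iy}-\R{0}{iy})$ then produces the $k=m+1$ term of the sum and the new remainder $R_{m+1}$. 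Carrying this to $m=N$ and treating $R_N$ once more in the same way --- but without the final splitting, since the sum stops at $k=N$ --- leaves exactly the last term of~\eqref{exus}.

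\smallskip\noindent
To deduce~\eqref{HkHk} I would combine three elementary $y$-uniform bounds: (i) $\|\R{0}{iy}\|_{\LO(H^k)}\klg(1+y^2)^{-1/2}$ and $\|\R{0}{iy}\|_{\LO(H^k,H^{k+1})}\klg1$, both read off on the Fourier side from $\|(\alpha\cdot\xi+\beta-iy)^{-1}\|=((1+|\xi|^2)+y^2)^{-1/2}$; (ii) the boundedness on every $H^k$ of the multiplication operators $\alpha\cdot\sA_j\,e^{-cf}$ and $e^{-(N+1-k)cf}$ noted above; and (iii) the bound on $\|\R{\sA_{N+1}}{iy}\|_{\LO(\HR)}$ coming from~\eqref{marah1}. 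Each term of the sum in~\eqref{exus} contains $k+1\grg2$ free resolvents separated by operators bounded on $H^N$, hence has $H^N$-operator norm at most $C(1+y^2)^{-(k+1)/2}\klg C(1+y^2)^{-1}$. For the last term $\big\{\prod_{j=0}^N\R{0}{iy}\,\alpha\cdot\sA_j\,e^{-cf}\big\}\R{\sA_{N+1}}{iy}$, read from the right: $\R{\sA_{N+1}}{iy}$ maps $H^N$ into $\HR$ with a factor $(1+y^2)^{-1/2}$; of the $N+1$ copies of $\R{0}{iy}$, one uses $N$ of them as smoothing maps $H^k\to H^{k+1}$ (norm $\klg1$) to climb back from $\HR$ to $H^N$, the intervening multiplications being bounded on the relevant $H^k$ with $k\klg N-1$, and the remaining copy as a map $H^N\to H^N$ (norm $(1+y^2)^{-1/2}$); hence this term also has $H^N$-norm at most $C(N,a,\ve)(1+y^2)^{-1}$. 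Summing the finitely many contributions gives~\eqref{HkHk}.

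\smallskip\noindent
The step I expect to be the main obstacle is the rigorous justification of~\eqref{exus}. Because $A$ is merely smooth with subexponentially growing derivatives, $\alpha\cdot A$ need not be $\DA$-bounded, so the second resolvent identity for the pair $(\D{\sA_j},\D{0})$ is \emph{not} available as an identity between bounded operators, and the manipulations above have to be carried out first on a suitable dense subspace --- images of $\COI$ under the relevant operators, where everything is classical --- and then extended by continuity; this is workable precisely because $e^{-\ve f}$ has been split into $N+1$ pieces, so that every occurrence of the unbounded $\alpha\cdot\sA_j$ is immediately paired with a damping factor $e^{-cf}$ and becomes the bounded operator furnished by~\eqref{hyp-A-infty-f}. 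By contrast, once~\eqref{exus} is in hand, the bound~\eqref{HkHk} is routine apart from the single point that $\R{\sA_{N+1}}{iy}$ need not preserve $H^N$, a loss that is made up by the smoothing of $\R{0}{iy}$.
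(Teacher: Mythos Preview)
Your proposal is correct and follows essentially the same route as the paper: the paper also derives the conjugation relation $e^{g}\,\R{\sA_j}{iy}\,e^{-g}=\R{\sA_{j+1}}{iy}$ (with $g=\ve f/(N+1)$) from the argument behind \eqref{mildred11}, then proves the single-step identity $(\R{\sA_j}{iy}-\R{0}{iy})\,e^{-g}=-\R{0}{iy}\,\alpha\cdot\sA_j\,e^{-g}\,\R{\sA_{j+1}}{iy}$ by exactly the $\COI$-density/closedness argument you anticipate in your last paragraph, and iterates it over $j=0,\ldots,N$; the bound \eqref{HkHk} is obtained from the same ingredients \eqref{fabricio1}--\eqref{fabricio4} you list. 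One harmless slip: your conjugation formula should read $\R{\sA_j}{iy}=e^{g_jf}\,\RA{iy}\,e^{-g_jf}$ (not $\RA{-iy}$), but the intertwining relation $\R{\sA_j}{iy}\,e^{-cf}=e^{-cf}\,\R{\sA_{j+1}}{iy}$ you actually use is correct.
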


\begin{proof}
We write
$g:=\ve\,f/(N+1)$ and
$z:=iy$ for short and fix some $j\in\{0,\ldots,N\}$. 
Using the argument which lead to \eqref{mildred11}
(with $\DA$ replaced by $\D{\sA_j}$ and $F=g$), 
we check that
$e^{g}\,\R{\sA_j}{z}\,e^{-g}=\R{\sA_{j+1}}{z}$. 
Now, let
$\varphi\in\HR$. 
Since $\COI$ is a core for $\DA$ and, hence, also for
$\D{\sA_{j}}$, we 
find a sequence,
$\{\psi_n\}_{n\in\NN}\in \COI^\NN$, that converges to 
$\R{\sA_{j+1}}{z}\,\vp\in\dom(\D{\sA_{j+1}})=\dom(\D{\sA_j})$ 
with respect to the graph norm of $\D{\sA_{j}}$. 
Then $\D{\sA_{j}}\, e^{-g}\psi_n \to
\D{\sA_{j}}\, e^{-g}\R{\sA_{j+1}}{z}\,\vp$, since 
$\D{\sA_{j}}\, e^{-g}\psi_n
=e^{-g}\D{\sA_{j}}\psi_n+(i\alpha\cdot\nabla g)\,e^{-g} \psi_n$
and $\D{\sA_{j}}$ is closed. 
Therefore, 
\begin{eqnarray*}
\lefteqn{
\big(\,\R{\sA_j}{z} - \R{0}{z}\,\big)\,e^{-g}\,\varphi
}
\\ 
&=&
\big(\,\R{\sA_j}{z} -
\R{0}{z}\,\big)\,(\D{\sA_j}-z)\,\R{\sA_j}{z}\,e^{-g}\,\varphi
\\ 
&=&
\big(\,\R{\sA_j}{z} -
\R{0}{z}\,\big)\,(\D{\sA_j}-z)\,e^{-g}\,\R{\sA_{j+1}}{z}\,\varphi
\\
&=&
\lim_{n\to\infty}\big(\,\R{\sA_j}{z} -
\R{0}{z}\,\big)\,(\D{0}+\alpha\cdot\sA_j-z)\,e^{-g}\psi_n
\\
&=&
-\lim_{n\to\infty}
\R{0}{z}\,(\alpha\cdot \sA_j\,e^{-g})\,\psi_n
\;=\;-\R{0}{z}\,
\alpha\cdot \sA_j\,e^{-g}\,\R{\sA_{j+1}}{z}\,\varphi
\,.
\end{eqnarray*}
Here the last step is justified according to
\eqref{exp-est-A-infty}.
The identity \eqref{exus} now follows from an obvious
combination of
\begin{eqnarray*}
\R{\sA_j}{z}\,e^{-g}
&=&  \R{0}{z}\,e^{-g}\,\varphi-\R{0}{z}\,
\alpha\cdot \sA_j\,e^{-g}\,\R{\sA_{j+1}}{z}\,,
\end{eqnarray*}
with $j=0,1,\ldots,N$.
The estimate \eqref{HkHk} follows from \eqref{exus} and
the bounds
\begin{equation}\label{fabricio1}
\|\,\R{0}{iy}\,\|_{\LO(H^\ell)}\,\klg\,(1+y^2)^{-1/2}\,,\quad
\|\,\R{0}{iy}\,\|_{\LO(H^\ell,H^{\ell+1})}\,\klg\,1\,,
\end{equation}
where $\ell\in\NN_0$,
\begin{equation}\label{fabricio2}
\|\,\R{\sA_{N+1}}{iy}\,\|_{\LO(\HR)}\,\klg\,\frac{\sqrt{3}}{\sqrt{1+y^2}}
\,\frac{\sqrt{1+(a+\ve)^2}}{\sqrt{1-(a+\ve)^2}}\,,
\end{equation}
which is a special case of \eqref{marah1}, and
\begin{eqnarray}
\label{fabricio3}
\big\|\,e^{-\ve(N+1-k)f/(N+1)}\,\big\|_{\LO(H^\ell)}&\klg&
C'(k,\ell,N,\ve)\,,
\\
\label{fabricio4}
\big\|\,\alpha\cdot\sA_j\,e^{-\ve f/(N+1)}\,\big\|_{\LO(H^\ell)}&\klg&
C''(j,\ell,N,\ve)\,,
\end{eqnarray}
which hold true by construction of $f$ and \eqref{exp-est-A-infty}.
\end{proof}

\begin{corollary}\label{cor-fedi1}
Let 
$\chi_{-1}^<,\chi\in C^\infty(\RR^3,[0,1])$ satisfy 
$\chi_{-1}^<(x)=0$, for $|x|\grg1$,
and
$\dist(\supp(\chi),\supp(\chi_{-1}^<))>0$.
Then, for all $N\in\NN$,
there is a constant $C(N)\in(0,\infty)$
such that, for all $y\in\RR$,
$$
\big\|\,\chi\,\R{\sA_0}{iy}\,\chi_{-1}^<\,\big\|_{\LO(\HR,H^N)}\,\klg\,
\frac{C(N)}{1+y^2}\:.
$$
\end{corollary}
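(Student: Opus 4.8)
The plan is to peel off one commutator — to gain an extra factor $(1+y^2)^{-1/2}$ beyond the one already carried by $\R{\sA_0}{iy}$ — and then to invoke interior elliptic regularity for the first-order elliptic system $\D{\sA_0}-iy$ to produce the $H^N$-smoothness. The essential simplification is that $\supp(\chi_{-1}^<)\subset\{|x|\klg1\}$, where $f$ and $\nabla f$ vanish, so that near $\supp(\chi_{-1}^<)$ the operator $\D{\sA_0}=\D{0}+\alpha\cdot(A+ia\nabla f)$ involves only the \emph{smooth} potential $A$ and no exponential weights enter. First I would record, by Lemma~\ref{le-marah} with $\lambda=0$ and $F=af$ (exactly as for \eqref{fabricio2}), that $iy\in\vr(\D{\sA_0})$ and $\|\R{\sA_0}{iy}\|_{\LO(\HR)}\klg C(a)(1+y^2)^{-1/2}$ for all $y\in\RR$. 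Then I would fix $\zeta\in C_0^\infty(\RR^3,[0,1])$ which equals $1$ on a neighbourhood of $\supp(\chi_{-1}^<)$ and whose support, hence also $\supp(\nabla\zeta)$, lies at a positive distance from $\supp(\chi)$; such a $\zeta$ exists since $\dist(\supp(\chi),\supp(\chi_{-1}^<))>0$.

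Because $\sA_0$ acts by multiplication one has $[\D{\sA_0},\zeta]=-i\,\alpha\cdot\nabla\zeta$, so the standard resolvent--commutator identity, together with $\chi\,\zeta=0$ and $\chi_{-1}^<=\zeta\,\chi_{-1}^<$, gives
\[
\chi\,\R{\sA_0}{iy}\,\chi_{-1}^<\,=\,\chi\,\R{\sA_0}{iy}\,\zeta\,\chi_{-1}^<\,=\,i\,\chi\,\R{\sA_0}{iy}\,(\alpha\cdot\nabla\zeta)\,\R{\sA_0}{iy}\,\chi_{-1}^<\,.
\]
For $\psi\in\HR$ I would set $g:=i\,(\alpha\cdot\nabla\zeta)\,\R{\sA_0}{iy}\,\chi_{-1}^<\psi$ and $w:=\R{\sA_0}{iy}\,g$, so that $\chi\,\R{\sA_0}{iy}\,\chi_{-1}^<\psi=\chi\,w$. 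The resolvent bound immediately yields $\|g\|\klg\|\nabla\zeta\|_\infty\,C(a)(1+y^2)^{-1/2}\|\psi\|$ and $\|w\|\klg\const\,(1+y^2)^{-1}\|\psi\|$; furthermore $(\D{\sA_0}-iy)w=g$, and $g$ vanishes on a fixed open neighbourhood $\Omega$ of $\supp(\chi)$ because $\supp(g)\subset\supp(\nabla\zeta)$ is disjoint from $\supp(\chi)$.

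It remains to prove $\|\chi\,w\|_N\klg\const\,\|w\|$ with a constant that does \emph{not} depend on $y$; this uniformity is the one delicate point, since the naive elliptic estimate would reintroduce powers of $y$ through the zeroth-order term $-iy\,w$. I would circumvent this by a finite bootstrap along cut-offs: choose $\mu_0,\dots,\mu_N\in C_0^\infty(\Omega,[0,1])$ with $\mu_N\equiv1$ on $\supp(\chi)$ and $\mu_{j-1}\equiv1$ on $\supp(\mu_j)$; since $(\D{\sA_0}-iy)w=0$ on $\Omega$, $w$ is smooth there and $(\D{0}-iy)(\mu_j w)=-\mu_j\,\alpha\cdot(A+ia\nabla f)\,w-i\,\alpha\cdot\nabla\mu_j\,w$ holds on $\RR^3$, whence $\mu_j w=\R{0}{iy}\{\cdots\}$. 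Starting from $\mu_0 w\in H^1$ and using on $\supp(\mu_j)$ that $w=\mu_{j-1}w$ together with the uniform bound $\|\R{0}{iy}\|_{\LO(H^\ell,H^{\ell+1})}\klg1$ from \eqref{fabricio1}, an induction gives $\|\mu_j w\|_{j+1}\klg\const\,\|w\|$, and hence $\|\chi\,w\|_N\klg\const\,\|\mu_N w\|_{N+1}\klg\const\,\|w\|$. (The constants absorb $\sup|A|$ and finitely many of its derivatives on $\Omega$, which are finite by Hypothesis~\ref{hyp-AV-smooth}; in the applications $\supp(\chi)$, hence $\Omega$, is bounded.) Combining with $\|w\|\klg\const(1+y^2)^{-1}\|\psi\|$ gives the claim, and a faster decay, if ever needed, would follow by iterating the commutator step.
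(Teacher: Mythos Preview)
Your argument is correct and takes a genuinely different route from the paper. You use a single commutator step to manufacture two resolvent factors (yielding the $(1+y^2)^{-1}$ decay) and then a standard interior elliptic bootstrap $\mu_j w=\R{0}{iy}\{-\mu_j\,\alpha\cdot\sA_0\,\mu_{j-1}w-i\,\alpha\cdot\nabla\mu_j\,\mu_{j-1}w\}$ along nested cut-offs, exploiting only the $y$-uniform smoothing bound $\|\R{0}{iy}\|_{\LO(H^\ell,H^{\ell+1})}\klg1$. The paper instead reuses the Neumann-type expansion \eqref{exus} of Lemma~\ref{expo-xia}: it writes $\chi_{-1}^<=\chi_{-1}^< e^{-\ve f}$, expands $\R{\sA_0}{iy}$ into iterated free resolvents with exponentially damped potentials $\alpha\cdot\sA_j e^{-g}$, inserts a partition $\chi_j^<+\chi_j^>$ between successive factors, and observes that each resulting product contains at least one block $\chi_{j+1}^>\R{0}{iy}\chi_j^<$ with disjointly supported cut-offs, which is smoothing to arbitrary order with $(1+y^2)^{-1}$ decay.

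The trade-off is this: the paper's argument controls the possibly growing $A$ through the damping $e^{-\ve f/(N+1)}$ already built into \eqref{exus}, so it does not care whether $\supp(\chi)$ is bounded (one only needs $\chi$ to be a bounded multiplier on $H^N$). Your elliptic bootstrap, by contrast, needs the multipliers $\mu_j\,\alpha\cdot\sA_0$ to be bounded on each $H^\ell$, which forces you to work on a bounded $\Omega$ and hence to assume $\supp(\chi)$ compact; you note this explicitly, and it is indeed satisfied in the only application (Lemma~\ref{le-fedi4}, where $\chi=\chi_{r,R}$). In exchange your proof is self-contained and avoids the combinatorics of the cut-off insertion. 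Both proofs ultimately rest on nested cut-offs, but yours uses them for a PDE-style regularity gain, whereas the paper uses them to locate a smoothing free-resolvent block inside an algebraic expansion.
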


\begin{proof}
We set 
$\chi_{N+1}^>:=\chi$,
and pick cut-off functions 
$\chi_0^<,\ldots,\chi^<_N\in C_0^\infty(\RR^3,[0,1])$
such that $\chi_{j}^<\equiv1$ on the support of
$\chi_{j-1}^<$ and such that
$\chi_{j}^<$ and
$\chi_{j+1}^>$ have disjoint supports,
where $\chi_j^>:=1-\chi_j^<$,
$j=0,\ldots,N$.
Since $\chi_{-1}^<(x)=0$, for $|x|\grg1$, we have
$\chi_{-1}^<=\chi_{-1}^<\,e^{-\ve f}$, and, by
construction, $\chi=\chi\,\chi_{k}^>$, $k=0,\ldots,N$.
Therefore, 
\eqref{exus} with $g:=\ve f/(N+1)$ yields
\begin{eqnarray}
\lefteqn{\nonumber
\chi\,\R{\sA_0}{iy}\,\chi_{-1}^<\,-\,\chi\,\chi_0^>\,\R{0}{iy}\,\chi_{-1}^<
}
\\
&=&
\label{clelia1}
\sum_{k=1}^N\,(-1)^k\!\!\!\sum_{\sharp_0,\ldots,\sharp_{k-1}\in\{<,>\}}
\chi\,\chi_k^>\,
\Big\{\prod_{j=0}^{k-1}\big(\,\R{0}{iy}\,\chi_j^{\sharp_j}\,
\alpha\cdot \sA_j\,
e^{-g}\,\big)\Big\}
\,
\R{0}{iy}\,\chi_{-1}^<
\\
& &\nonumber
+\, (-1)^{N+1}\,\chi\,\Big\{
\prod_{j=0}^{N}\big(\,\R{0}{iy}\,
\alpha\cdot \sA_j\,e^{-g}\,\big)\Big\}
\,\R{\sA_{N+1}}{iy}\,\chi_{-1}^<
\,.
\end{eqnarray}
Each summand in \eqref{clelia1} contains at least one factor
of the form $\chi^>_{j+1}\,\R{0}{iy}\,\chi^<_j$, and 
since $\chi^<_j$ and $\chi^>_{j+1}$
have disjoint supports
we readily
verify that
$$
\big\|\,\chi^>_{j+1}\,\R{0}{iy}\,\chi^<_j\,\big\|_{\LO(\HR,H^M)}
\,\klg\,
\frac{C(j,M)}{1+y^2}\,,\qquad M\in\NN\,.
$$
Together with the bounds \eqref{fabricio1}-\eqref{fabricio4} this
implies the asserted estimate.
\end{proof}

\begin{lemma}\label{le-fedi3}
Let $k\in\NN_0$, $r\in(0,1/2)$, $a\in[0,\triangle(\lambda))$,
and $\ve=(\triangle(\lambda)-a)/2$. Then there
is some $C(a,r,k)\in(0,\infty)$ such that
$$
\forall\;R\grg1\;:\qquad
\big\|\,\chi_{r,R}\,(e^{af}\,\PA\,e^{-a f}-\PO)
\,e^{-\ve f}\,\big\|_{\LO(H^k)}\,\klg\,C(a,r,k)\,.
$$
\end{lemma}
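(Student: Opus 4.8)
The plan is to rewrite $\chi_{r,R}\,(e^{af}\,\PA\,e^{-af}-\PO)\,e^{-\ve f}$ as a Cauchy principal value of the difference $\R{\sA_0}{iy}-\R{0}{iy}$, with the unbounded factor $e^{af}$ absorbed into the conjugated resolvent, and then to read off the bound from the $R$-independent $\LO(H^k)$-estimate \eqref{HkHk} of Lemma~\ref{expo-xia}. First I would observe, using \eqref{def-PA}, that the $\tfrac12\,\id$-parts of $\PA$ and $\PO$ cancel, so that for every $\psi\in\COI$
\begin{equation*}
\chi_{r,R}\,(e^{af}\,\PA\,e^{-af}-\PO)\,e^{-\ve f}\,\psi
\,=\,
\tfrac{1}{2}\,\chi_{r,R}\,e^{af}\,\sgn(\DA)\,e^{-(a+\ve)f}\,\psi
\,-\,\tfrac{1}{2}\,\chi_{r,R}\,\sgn(\D{0})\,e^{-\ve f}\,\psi\,.
\end{equation*}
Since $f\grg0$ has a fixed sign and $|\nabla(af)|=a\,|\nabla f|\klg a<1$ by \eqref{props-f}, Lemma~\ref{le-marah} (applied with $F=af$, spectral parameter $0$, and decay parameter $a$) gives $e^{af}\,\RA{iy}\,e^{-af}\,g=\R{\sA_0}{iy}\,g$ for every $g\in\HR$ and $y\in\RR$. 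Inserting the Cauchy principal value \eqref{Cauchy-PV} for $\sgn(\DA)$ (legitimate since $0\in\vr(\DA)$) and for $\sgn(\D{0})$, and using that $\chi_{r,R}\,e^{af}$ is, for each fixed $R$, a bounded multiplication operator on $\HR$ and hence may be moved inside the integral over $[-\tau,\tau]$, I would arrive at
\begin{equation*}
\chi_{r,R}\,(e^{af}\,\PA\,e^{-af}-\PO)\,e^{-\ve f}\,\psi
\,=\,
\tfrac{1}{2}\lim_{\tau\to\infty}\int_{-\tau}^\tau
\chi_{r,R}\,\big(\R{\sA_0}{iy}-\R{0}{iy}\big)\,e^{-\ve f}\,\psi\,\frac{dy}{\pi}\,,
\qquad\psi\in\COI\,.
\end{equation*}
It is essential to combine the two principal values before estimating: individually each resolvent decays only like $(1+y^2)^{-1/2}$ in $y$, whereas their difference decays like $(1+y^2)^{-1}$.

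Indeed, $\ve=(\triangle(\lambda)-a)/2$ satisfies $\ve\in[0,1-a)$ and $a+\ve=(a+\triangle(\lambda))/2<\triangle(\lambda)\klg1$, so Lemma~\ref{expo-xia} applies with $N=k$ and yields, for all $y\in\RR$, the bound $\|(\R{\sA_0}{iy}-\R{0}{iy})\,e^{-\ve f}\|_{\LO(H^k)}\klg C(k,a,\ve)\,(1+y^2)^{-1}$ with a constant independent of $R$ (the weight $f$ does not depend on $R$). By \eqref{props-chi} all partial derivatives of $\chi_{r,R}$ of order at most $k$ are bounded by a constant depending only on $r$ and $k$, so multiplication by $\chi_{r,R}$ is bounded on $H^k$ with operator norm $\klg C'(r,k)$, uniformly in $R\grg1$. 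Feeding these two estimates into the identity above shows that the principal value is in fact an absolutely convergent $H^k$-valued integral and gives, for all $\psi\in\COI$ and all $R\grg1$,
\begin{equation*}
\big\|\,\chi_{r,R}\,(e^{af}\,\PA\,e^{-af}-\PO)\,e^{-\ve f}\,\psi\,\big\|_k
\,\klg\,
\frac{C'(r,k)\,C(k,a,\ve)}{2}\,\|\psi\|_k\,,
\end{equation*}
using $\int_\RR(1+y^2)^{-1}\,dy=\pi$. Since $\COI$ is dense in $H^k$, the assertion follows with $C(a,r,k):=\tfrac12\,C'(r,k)\,C(k,a,\ve)$.

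I expect the main obstacle to be the first step, i.e.\ establishing the integral representation rigorously. The point is that $e^{af}$ is unbounded and that $\chi_{r,R}\,e^{af}$ cannot be controlled as a multiplier on $H^k$ with a constant uniform in $R$ (its norm grows like $e^{2aR}$); it is precisely the passage to the conjugated form, in which $e^{af}$ has been absorbed into $\R{\sA_0}{iy}$, that makes the $R$-uniform bound possible. Some care is then needed with the domains when moving $\chi_{r,R}\,e^{af}$ through the principal-value limit, but these manipulations are of exactly the kind already carried out in the proofs of Lemmas~\ref{le-peki} and~\ref{expo-xia}, and once the representation is in hand the remaining estimates are immediate from \eqref{HkHk} and \eqref{props-chi}.
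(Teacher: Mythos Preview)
Your proposal is correct and follows essentially the same route as the paper's proof: both use the Cauchy principal value \eqref{Cauchy-PV} together with the identity $e^{af}\,\RA{iy}\,e^{-af}=\R{\sA_0}{iy}$ to reduce the problem to the $\LO(H^k)$-estimate \eqref{HkHk} of Lemma~\ref{expo-xia}, and both invoke \eqref{props-chi} to bound $\chi_{r,R}$ uniformly in $R$. The only cosmetic difference is that the paper estimates in weak form by pairing against $\D{0}^k\vp$, whereas you work directly with the $H^k$-norm; the substance is identical.
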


\begin{proof}
Since $e^{af}\,\RA{iy}\,e^{-af}=\R{\sA_0}{iy}$, $y\in\RR$,
Lemma~\ref{expo-xia} yields, for all $\vp,\psi\in H^k$,
\begin{eqnarray*}
\lefteqn{
\big|\SPb{\D{0}^k\,\vp}{
\chi_{r,R}\,(e^{af}\,\PA\,e^{-a f}-\PO)
\,e^{-\ve f}
\,\psi}\big|
}
\\
&\klg&
\int_\RR
\Big|\SPB{\D{0}^k\,\vp}{
\chi_{r,R}\,(\R{\sA_0}{iy}-\R{0}{iy})
\,e^{-\ve f}
\,\psi}\Big|\:\frac{dy}{2\pi}
\\
&\klg&
\int_\RR\|\vp\|\,\big\|\,\D{0}^k\,\chi_{r,R}\,\big\|_{\LO(H^k,\HR)}\,
\big\|\,(\R{\sA_0}{iy}-\R{0}{iy})
\,e^{-\ve f}\,\big\|_{\LO(H^k)}\,\|\psi\|_{k}\:\frac{dy}{2\pi}
\\
&\klg&
C'(a,\ve(a),k,r)\,\|\vp\|\,\|\psi\|_k
\,,
\end{eqnarray*}
where the constant is uniform in $R\grg1$.
\end{proof}

\begin{lemma}\label{le-fedi4}
Assertion~\eqref{Beh-U} holds true.
\end{lemma}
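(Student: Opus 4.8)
The plan is to reduce the assertion to Corollary~\ref{cor-fedi1} by exploiting where the various cut-offs live, together with the commutator formula~\eqref{basic-for-comm}. First I would simplify $U\equiv U_{r,R}$ and its adjoint. Since $\chi=\chi_{r,R}$ vanishes on $\{|x|\klg r\}$ while $1-\wt\chi=1-\wt\chi_r$ is supported in $\{|x|\klg r/2\}$, we have $(1-\wt\chi)\,\chi\equiv0$; moreover $f\equiv0$ on $\{|x|\klg1\}\supset\supp(1-\wt\chi)\cup\supp(\nabla\wt\chi)$, and $f_R=f$ on $\supp(\chi)\subset\{|x|\klg2R\}$. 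Hence $(1-\wt\chi)\,e^{-af_R}=1-\wt\chi$, $e^{af_R}\chi=e^{af}\chi$, $[\,1-\wt\chi\,,\,\PA\,]=-[\,\wt\chi\,,\,\PA\,]$, and a short computation (legitimate since $U$ is bounded and $\PA,\wt\chi,\chi,e^{af},V$ are symmetric, cf.\ Lemma~\ref{VPAphi-in-L2}) gives
$$
U=-\,V\,[\,\wt\chi\,,\,\PA\,]\,e^{af}\,\chi\,,\qquad
U^*=-\,\chi\,e^{af}\,[\,\PA\,,\,\wt\chi\,]\,V\,.
$$
Applying~\eqref{basic-for-comm} with $\wt A=A$ and $\mu=\wt\chi$, so that the right entry involves only $i\alpha\cdot\nabla\wt\chi$, yields for $\psi$ in a suitable dense subspace
$$
U^*\psi=-\lim_{\tau\to\infty}\int_{-\tau}^{\tau}
\chi\,e^{af}\,\RA{iy}\,\big(i\alpha\cdot\nabla\wt\chi\big)\,\RA{iy}\,V\psi\;\frac{dy}{2\pi}\,.
$$

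Next I would absorb the (unbounded) weight $e^{af}$ into the left resolvent. Because $F:=af$ has a fixed sign with $|\nabla F|\klg a<1$, Lemma~\ref{le-marah} gives $e^{af}\,\RA{iy}\,e^{-af}=\R{\sA_0}{iy}$; and since $\nabla\wt\chi$ is supported where $f\equiv0$ we have $e^{af}\,i\alpha\cdot\nabla\wt\chi=i\alpha\cdot\nabla\wt\chi$, whence
$$
\chi\,e^{af}\,\RA{iy}\,i\alpha\cdot\nabla\wt\chi=\chi\,\R{\sA_0}{iy}\,i\alpha\cdot\nabla\wt\chi\,.
$$
Now fix an $R$-independent $\chi_{-1}^<\in C^\infty(\RR^3,[0,1])$ with $\chi_{-1}^<\equiv1$ on $\supp(\nabla\wt\chi)$ and $\chi_{-1}^<(x)=0$ for $|x|\grg3r/4$; then $\nabla\wt\chi=\chi_{-1}^<\,\nabla\wt\chi$, $\chi_{-1}^<(x)=0$ for $|x|\grg1$, and $\dist(\supp\chi,\supp\chi_{-1}^<)\grg r/4>0$, so Corollary~\ref{cor-fedi1} (with $N=k$) and~\eqref{alpha-v-norm} give, uniformly in $y\in\RR$,
$$
\big\|\,\chi\,\R{\sA_0}{iy}\,i\alpha\cdot\nabla\wt\chi\,\big\|_{\LO(\HR,H^k)}
\,\klg\,\|\nabla\wt\chi\|_\infty\,\big\|\,\chi\,\R{\sA_0}{iy}\,\chi_{-1}^<\,\big\|_{\LO(\HR,H^k)}
\,\klg\,\frac{C(a,r,k)}{1+y^2}\,.
$$
The constant is independent of $R$: in the proof of Corollary~\ref{cor-fedi1} the auxiliary cut-offs may all be chosen supported near the origin, hence $R$-independent, and the only data that depend on $R$ are the derivative bounds of $\chi_{r,R}$, which are uniform in $R$ by~\eqref{props-chi}; likewise $\wt\chi_r$ is $R$-independent.

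It then remains to bound the rightmost factor $\RA{iy}\,V$ on $\HR$, uniformly in $y$ (and trivially in $R$, as it does not involve $R$). Here I would split $V=\zeta V+(1-\zeta)V$ with $\zeta\in C_0^\infty(\{|x|<\rho\},[0,1])$, $\zeta\equiv1$ near $0$. The part $(1-\zeta)V$ is a bounded function (locally bounded, vanishing at infinity), so $\|\RA{iy}(1-\zeta)V\|_{\LO(\HR)}\klg\|(1-\zeta)V\|_\infty(1+y^2)^{-1/2}$. For the singular part one writes $\zeta V=\eta\,|\cdot|^{-1}\,(|\cdot|\,\zeta V)$ with $\eta\in C_0^\infty(\{|x|<\rho\})$ equal to $1$ on $\supp\zeta$ and $\|\,|\cdot|\,\zeta V\|_\infty\klg\gamma$ by Hypothesis~\ref{hyp-Vogel}, and then uses that $A$ is bounded on the compact support of $\eta$ together with Hardy's and Kato's inequalities to obtain $\|\RA{iy}\,\eta\,|\cdot|^{-1}\|_{\LO(\HR)}\klg C$ uniformly in $y$ — this is exactly the kind of estimate underlying the \`{a}-priori boundedness of $U$ via~\eqref{Vpeki}. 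Hence $\|\RA{iy}V\|_{\LO(\HR)}\klg C$ uniformly, and combining the two bounds,
$$
\big\|\,U^*\psi\,\big\|_{H^k}\,\klg\,\int_{\RR}\frac{C(a,r,k)}{1+y^2}\cdot C\,\|\psi\|\,\frac{dy}{2\pi}\,=\,C'(a,r,k)\,\|\psi\|\,,
$$
which is~\eqref{Beh-U}. I expect the last step — controlling $\RA{iy}V$ against the Coulomb singularity of $V$ while $A$ is only locally bounded, and making rigorous the manipulations of the non-local commutator formula with an unbounded $V$ — to be the main technical point; the gain of $k$ derivatives is essentially free, being delivered directly by Corollary~\ref{cor-fedi1}.
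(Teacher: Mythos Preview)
Your argument is correct and follows the same overall strategy as the paper: exploit the disjointness of $\supp(\chi_{r,R})$ and $\supp(1-\wt\chi_r)$ to insert an $R$-independent cut-off $\chi_{-1}^<$ and invoke Corollary~\ref{cor-fedi1} for the $H^k$-gain together with the $(1+y^2)^{-1}$ integrability in $y$. The reductions you make (dropping $e^{\pm af_R}$ on the supports of $1-\wt\chi$ and $\nabla\wt\chi$, and the conjugation $e^{af}\RA{iy}e^{-af}=\R{\sA_0}{iy}$) are all justified.

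The one genuine difference from the paper is what sits to the right of $\chi_{-1}^<$. You apply the commutator formula~\eqref{basic-for-comm} with $\wt A=A$, obtaining $\RA{iy}$ on the right and hence the factor $\RA{iy}\,V$; you then need a separate argument to bound $\|\RA{iy}\,V\|$ uniformly in $y$ (your sketch via localizing near $0$ and comparing to $\R{0}$ through~\eqref{res-id} is correct; only Hardy's inequality is actually used, not Kato's). The paper instead applies a resolvent identity between $\R{\sA_0}{iy}$ and $\R{0}{iy}$ to move the cut-off $\vt=1-\wt\chi$ through the resolvent, arriving directly at $\R{0}{iy}\,V$ on the right, whose uniform boundedness is immediate from Hardy's inequality. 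So the paper's route is slightly shorter, while yours stays closer to the machinery already set up in Section~\ref{sec-spectral-proj}; both are sound.
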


\begin{proof}
Since $\vt:=1-\wt{\chi}$ and $\chi$ have disjoint supports, we find some
$\chi_{-1}^<\in C_0^\infty(\RR^3,[0,1])$ such that
$\vt=\vt\,\chi_{-1}^<$ and $\dist(\supp(\chi),\supp(\chi_{-1}^<))>0$.
Then
\begin{eqnarray*}
\chi\,\R{\sA_0}{iy}\,\vt&=&
\chi\,\vt\,\R{0}{iy}\,+\,\chi\,
\R{\sA_0}{iy}\,\alpha\cdot(i\nabla\vt-\vt\,\sA_0)\,\R{0}{iy}
\\
&=&
\chi\,\R{\sA_0}{iy}\,\chi_{-1}^<\,
\alpha\cdot(i\nabla\vt-\vt\,\sA_0)\,\R{0}{iy}\,,
\end{eqnarray*}
which implies, for 
$\vp\in\COI$, $\psi\in\HR$, and $\psi_1,\psi_2,\ldots\in\COI$
such that $\psi_n\to\psi$, 
\begin{eqnarray*}
\lefteqn{
\big|\SPb{\D{0}^k\,\vp}{U^*\,\psi}\big|\,=\,
\big|\SPb{U\,\D{0}^k\,\vp}{\psi}\big|
}
\\
&\klg&
\limsup_{n\to\infty}
\int_\RR
\Big|\SPB{\D{0}^k\,\vp}{\chi\,\R{\sA_0}{iy}\,\chi^<_{-1}\,
\alpha\cdot(i\nabla\vt-\vt\,\sA_0)\,(\R{0}{iy}\,V)\,\psi_n\,}\Big|
\,\frac{dy}{2\pi}
\\
&\klg&
C_V\,\int_\RR\big\|\,\chi\,\R{\sA_0}{iy}\,\chi^<_{-1}\,
\big\|_{\LO(\HR,H^k)}\,dy\;\big(\|\nabla\vt\|+\|\vt\,A\|+\|\nabla f\|\big)\,
\,\|\vp\|\,\|\psi\|
\,.
\end{eqnarray*}
In the last line we used that
$\R{0}{iy}\,V$ extends to a bounded operator on $\HR$
with a norm bounded uniformly in $y\in\RR$ by some $C_V\in(0,\infty)$.
On account of Corollary~\ref{cor-fedi1} 
this proves Assertion~\eqref{Beh-U}.
\end{proof}


\appendix

\section{}
\label{app-K2}

\noindent
According to \cite[Theorem~7.2]{LL-Buch} a function 
$\psi$ belongs to $H^{1/2}(\RR^3)$ if and only if
$\psi\in L^2(\RR^3)$ and 
\begin{equation}\label{for-|D0|}
I(\psi)\,:=\,\frac{1}{4\pi^2}\int_{\RR^3}\int_{\RR^3}
\frac{|\psi(x)-\psi(y)|^2}{|x-y|^2}\,K_2(|x-y|)\,dx\,dy
\end{equation}
is finite, where $K_2$ denotes a modified Bessel function.
In this case
$I(\psi)=\|(1-\Delta)^{1/4}\,\psi\|^2-\|\psi\|^2$.
Now, let $\chi:\RR^3\to\CC$ be such that
$\|\chi\|_\infty+L_\chi<\infty$, where
$$
L_\chi\,:=\,\sup_{x\not=y}\frac{|\chi(x)-\chi(y)|}{|x-y|}\,.
$$
Estimating
$$
\frac{\big|(\chi\,\psi)(x)-(\chi\,\psi)(y)\big|^2}{|x-y|^2}
\,\klg\,2|\psi(x)|^2\,\frac{|\chi(x)-\chi(y)|^2}{|x-y|^2}\,+\,
2\,|\chi(y)|^2\,\frac{|\psi(x)-\psi(y)|^2}{|x-y|^2}
$$
we obtain, for every $\psi\in H^{1/2}(\RR^3)$,
$$
I(\chi\,\psi)\,\klg\,2\,L_\chi^2\,\|\psi\|^2\,\int_{\RR^3}K_2(|y|)\,dy
\,+\,2\,\|\chi\|^2_\infty\,I(\psi)\,<\,\infty\,,
$$
that is, $\chi\,\psi\in H^{1/2}(\RR^3)$,
and, using $\int_0^\infty r^2\,K_2(r)\,dr=3\pi/2$, we further obtain
\begin{eqnarray*}
\big\|\,(1-\Delta)^{1/4}\,\chi\,\psi\,\big\|^2
&=&I(\chi\,\psi)+\|\chi\,\psi\|^2
\\
&\klg&
3\,L^2_\chi\,\|\psi\|^2+\|\chi\|_\infty^2\,
\big(2\,\big\|\,(1-\Delta)^{1/4}\,\psi\,\big\|^2-\|\psi\|^2\big)\,.
\end{eqnarray*}


\bigskip

\noindent
{\bf Acknowledgement:} 
It is a pleasure to thank 
Hubert Kalf, Sergey Morozov, and Heinz Siedentop 
for helpful discussions and useful remarks.
This work has been partially supported
by the DFG (SFB/TR12).


\end{document}